\setlist[enumerate]{leftmargin=.5in}
\setlist[itemize]{leftmargin=.5in}
\crefname{hypothesis}{Hypothesis}{Hypotheses}
\title{A semi-static replication approach to efficient hedging and pricing of callable IR derivatives
\thanks{Compiled January, 2022.
\funding{This project has received funding from the NWO under the Industrial Doctorates grant}}}
\author{Jori Hoencamp\thanks{Computational Science Lab, University of Amsterdam, Science Park 904, 1098XH Amsterdam, Netherlands 
  (\email{j.h.hoencamp@uva.nl}).}
\and Shashi Jain\thanks{Department of Management Studies, Indian Institute of Science, Bangalore, India.}
\and Drona Kandhai\footnotemark[2]}
\newcommand*{\addFileDependency}[1]{
  \typeout{(#1)}
  \@addtofilelist{#1}
  \IfFileExists{#1}{}{\typeout{No file #1.}}
}
\newcommand*{\myexternaldocument}[1]{%
    \externaldocument{#1}%
    \addFileDependency{#1.tex}%
    \addFileDependency{#1.aux}%
}
\begin{document}

\maketitle

\begin{abstract}
We present a semi-static hedging algorithm for callable interest rate derivatives under an affine, multi-factor term-structure model. With a traditional dynamic hedge, the replication portfolio needs to be updated continuously through time as the market moves. In contrast, we propose a semi-static hedge that needs rebalancing on just a finite number of instances. We show, taking as an example Bermudan swaptions, that callable interest rate derivatives can be replicated with an options portfolio written on a basket of discount bonds. The static portfolio composition is obtained by regressing the target option's value using an interpretable, artificial neural network. Leveraging on the approximation power of neural networks, we prove that the hedging error can be arbitrarily small for a sufficiently large replication portfolio. A direct, a lower bound, and an upper bound estimator for the risk-neutral Bermudan swaption price is inferred from the hedging algorithm. Additionally, closed-form error margins to the price statistics are determined. We practically demonstrate the hedging and pricing performance through several numerical experiments.
\end{abstract}

\begin{keywords}
  Static hedging, Interest rate derivatives, Bermudan swaptions, Neural networks, Affine term-structure models
\end{keywords}


\section{Introduction}
To date, the valuation of path-dependent derivatives, such as Bermudan- or American-style options, remains a challenge due to the absence of analytical pricing formulas. This challenge has become particularly pronounced since the implementation of the Basel accords, by which the computation of risk-measures such as value-at-risk, expected shortfall and various value-adjustments has become a central requirement for many financial institutions \cite{gregory2015xva}. As these measures are typically simulation-based quantities, the search for efficient and accurate pricing techniques is as relevant as ever. For callable derivative contracts, practitioners resort to numerical approximation schemes, of which many have been proposed in the literature. Traditional examples include tree-based methods, such as \cite{cox1979option}, \cite{boyle1988lattice}, \cite{broadie1996american}, and PDE-based methods, such as \cite{brennan1977valuation}, \cite{mitchell1999finite}, \cite{haentjens2015adi}.

If the risk factors that drive the product-value are high-dimensional, the traditional methods are no longer feasible. Within such multi-factor frameworks, Monte Carlo methods tend to be a popular alternative. Classic regression-based algorithms have been proposed by \cite{carriere1996valuation}, \cite{tsitsiklis1999optimal}, and \cite{longstaff2001valuing}. Using the iterative dynamic programming formulation of path-dependent options, they approximate the continuation values at each exercise opportunity with ordinary least-square regression. Their method gives rise to an exercise policy that yields a lower bound to the true price. Upper bounds can be generated by considering a dual-formulation of the risk-neutral value, as is proposed by \cite{haugh2004pricing}, and \cite{rogers2002monte}. The estimation of these bounds, however, relies on expensive nested simulations \cite{andersen2004primal}. Regression with systems of basis functions other than polynomials have been studied in \cite{kusuoka2015least}, alongside their rate of convergence and estimation for the corresponding regression error. Although classic regression-based schemes may be accurate for computing today's option price, they can be less precise in generating forward values along the Monte Carlo paths, which is addressed by for example \cite{joshi2016least}.

Apart from pricing, hedging is a challenging aspect of trading path-dependent derivatives. The traditional dynamic hedge is achieved by constructing a replicating portfolio that is rebalanced continuously through time as the market moves. In contrast, a static hedge is a portfolio of assets that mirrors the value of a contract without the need of updating the portfolio-composition. The weights of the portfolio are so to speak \textit{static}. Such a hedge can be favoured over a dynamical strategy for several reasons. Frequently updating a portfolio can be expensive if the transaction costs for the individual assets are high. Another reason is that dynamic hedges can suffer from severe hedging errors in case of sudden market movements as a consequence of discrete rebalancing.

A static hedge formulation for exotic equity options has been proposed in the literature by for example \cite{breeden1978prices}, \cite{carr1994static}, \cite{carr1999static} and \cite{carr2014static}. The main concept is to construct an infinite portfolio of short-dated European options with a continuum of different strike-prices. A different, but comparable approach is proposed in \cite{derman1995static}. Here a portfolio of European options with a continuum of different maturities is constructed to replicate the boundary and terminal conditions of exotic derivatives, such as knock-out options. Static replication of an American-style option is challenging as it involves a time-dependent exercise boundary, giving rise to a free boundary problem. In \cite{chung2009static}, this problem is addressed by combining European options with multiple strikes and maturities, and in \cite{lokeshwar2019neural} a semi-static hedge is constructed using neural network approximations.

Little attention has been given to this topic in the field of interest rate (IR) modelling. Where equity options depend on the realization of a stock, IR derivatives depend on the realization of a full term-structure of interest rates, pushing the complexity of the hedging problem. The articles of \cite{pelsser2003pricing} and \cite{hagan2005convexity} are among the few contributions to the literature, treating static replication of guaranteed annuity options, and CMS swaps, caps and floors respectively with a portfolio of European swaptions. In our work we extend the literature by addressing the replication and pricing problem of path-dependent IR derivatives.


We focus on the replication of Bermudan swaptions under an affine term-structure short-rate model. First, we show that such a contract can be semi-statically hedged by a portfolio of short-maturity IR options, such as discount bond options. While a static hedge is never updated and a dynamic hedge requires continuous updating, our semi-static hedge is updated on a finite number of instances, i.e. at each monitor date of the contract. After rebalancing, the replication portfolio mirrors the contract's value until the next monitor date, after which it terminates. Its pay-off will be similar to that of the contract if it is exercised or will absorb the costs to set up a new hedge in case it is continued. In particular, we show that under a one-factor model, an options portfolio written on a single discount bond suffices to achieve an accurate hedge. Under a $d$-dimensional multi-factor model, we propose that options written on a basket of discount bonds representing $d$ maturities are required to account for the yield curve's temporal complexity.

Secondly, we prove that with a sufficiently large hedge portfolio, any desired level of accuracy in replication can be achieved. We do so by reformulating the optimization related to the hedging problem as a fitting procedure of a shallow, artificial neural network (ANN). This allows us to utilize the tremendous developments that have recently been achieved in machine learning, which has spiked in popularity amongst both researchers and practitioners over the past decade. An often raised drawback to ANN applications in quantitative finance is the lack of interpretability and their ``black-box" nature. However, in our approach, the ANN's structure is chosen deliberately to represent the pay-off of a standard derivative portfolio. Training the free parameters can therefore be interpreted as searching the weights and strikes of this portfolio. Additionally, we suggest specific ANN designs allowing the portfolio to be priced analytically. An essential result is that, not only the time-zero, but any future value approximation of the contract can be obtained in closed-form by simply pricing this portfolio.


Our contribution to the existing literature is threefold. First, we propose semi-static hedging strategies for Bermudan swaptions under a multi-factor short-rate model. In the one-factor case, we argue that replication can be achieved with an options portfolio written on a single discount bond. In the $d$-dimensional case, replication can be achieved with an options portfolio written on a basket of discount bonds. Second, we propose a direct estimator, a lower and an upper bound estimate to the contract's value, which is implied by the semi-static hedge. The lower bound value results from applying a non-optimal exercise strategy on an independent set of Monte Carlo paths. The upper bound is based on the dual formulation of \cite{haugh2004pricing} and \cite{rogers2002monte}, which in contrast to other work can be obtained without resorting to expensive nested simulations. Thirdly, we prove that any desired level of accuracy can be achieved in the hedge and valuation, due to the approximating power of interpretable ANNs. We provide clear, analytic error margins to the price estimates and the related lower and upper bounds. All the mentioned statistics are illustrated and benchmarked in representative numerical examples.

The paper is organised as follows: \Cref{sec: Math formulation} introduces the mathematical setting, describes the modelling framework and provides the problem formulation. \Cref{sec: RLNN} provides a thorough introduction to the algorithm, motivates the use and interpretation of neural networks and treats the regression procedure. \Cref{bounds theory} introduces the lower bound and upper bound estimates to the true option price. In \cref{sec: error analysis} we introduce the error bounds on the direct, lower bound and upper bound estimates brought forth by the algorithm. We finalize the paper by illustrating the method through several numerical examples in \cref{sec: numerical examples} and providing a conclusion in \cref{sec: conclusion}.

\section{Mathematical background\label{sec: Math formulation}}
In this section we describe the general framework for our computations and give a detailed introduction to the Bermudan swaption pricing problem.

\subsection{Model formulation\label{sec: model formulation}}
We fix a finite time-horizon $T>0$ and consider a stochastic economy defined on the time-interval $[0,T]$. Let $(\Omega, \mathcal{F},\mathbb{P})$ denote a complete probability space with $\mathcal{F}$ a sigma-algebra on $\Omega$. We denote by $\mathbb{F}=\left\{\mathcal{F}_t:0\leq t\leq T\right\}$ an augmented filtration on $\Omega$ such that $\mathcal{F}_T:=\mathcal{F}$, representing the information generated by the economy up to time $t$. We assume that the market is complete and free of arbitrage. Following the work of \cite{harrison1979martingales} we denote by $\mathbb{Q}$ the risk-neutral measure equivalent to $\mathbb{P}$, to which the unique no-arbitrage price of any attainable contingent claim is associated. As the num\'eraire related to $\mathbb{Q}$ we consider the bank account $B$ defined as
\[
B(t):=e^{\int_0^tr(u)du},\qquad t\in [0,T]
\]
where the stochastic process $r(t)$ denotes the risk-free instantaneous short-rate. By letting $B(0)=1$, $B(t)$ represents the time-$t$ value of one unit of currency invested in the money-market at time zero. Following the notation of \cite{brigo2007interest}, we denote by $P(T_1,T_2)$ the time-$T_1$ risk-neutral value of a zero-coupon bond maturing at date $T_2$, which is given by
\begin{equation*}
\begin{aligned}
    P(T_1,T_2):=\mathbb{E^Q}\left[\frac{B(T_1)}{B(T_2)}\bigg | \mathcal{F}_{T_1}\right], \quad 0\leq T_1<T_2\leq T
\end{aligned}
\end{equation*}

We assume that the dynamics of the short-rate $r$ are captured by an affine term-structure model, in accordance with the set-up introduced in \cite{duffie1996yield} and \cite{dai2000specification}. The short-rate itself is therefore considered to be an affine function of a - possibly multi-dimensional - latent factor $\mathbf{x}_t$, i.e.
\begin{equation} \begin{aligned}
    r(t)=\omega_1+\mathbf{\omega_2}^\top \mathbf{x}_t \label{eqn: short-rate}
\end{aligned} \end{equation}
with $\omega_1$, $\mathbf{\omega}_2$ denoting a scalar and a vector of  time-dependent coefficients respectively. We furthermore assume that the stochastic process $\left\{\mathbf{x}_t\right\}_{t\in[0,T]}$ is a bounded Markov process that takes values in $\mathbb{R}^d$, which represents all market influences affecting the state of the short-rate. Let the dynamics of $\mathbf{x}_t$ be governed by an SDE of the form \begin{equation} \begin{aligned}
    d\mathbf{x}_t=\mu(t,\mathbf{x}_t)dt+\sigma (t,\mathbf{x}_t)d \mathbf W_t \label{eqn:SDE x1}
\end{aligned} \end{equation}
where $\mathbf W_t$ denotes an $\mathbb{R}^d-$ valued Brownian motion under $\mathbb{Q}$ adapted to the filtration $\mathbb{F}$. The measurable functions $\mu:[0,T]\times\mathbb{R}^d\to\mathbb{R}^d$ and $\sigma:[0,T]\times\mathbb{R}^d\to\mathbb{R}^{d\times d}$ are taken to satisfy the standard regularity conditions by which the SDE in \ref{eqn:SDE x1} admits a strong solution.

In this article we will limit ourselves to the subclass of affine, Gaussian term-structure models. We do so by imposing that the instantaneous drift $\mu$ is an affine function of the latent factors $\mathbf{x}_t$ and that the diffusion is a deterministic function of time. As a result, $\mathbf{x}_t$ yields a Gaussian process and the zero-coupon bond prices have a closed-form expression of the form 
\begin{equation*} \begin{aligned}
    P(T_1,T_2)=\exp\left\{A(T_1,T_2)-B(T_1,T_2)\cdot\mathbf{x}_{T_1}\right\}
\end{aligned} \end{equation*}
where the deterministic coefficients $A(T_1,T_2)\in\mathbb{R}$ and $B(T_1,T_2)\in\mathbb{R}^d$ can be found by solving a system of ODEs, which are of the form of the well-known Ricatti equations; see \cite{duffie1996yield} or \cite{filipovic2009term} for details. We focus on this subclass as it is still rich enough for many risk-related applications, but on the other hand is analytically very tractable. Not only does it yield an explicit expression for the discount bond, but also for European options on discount bonds (see \cite{brigo2007interest}). This will prove to be a convenient property for the techniques introduced in the subsequent sections. 

\subsection{The Bermudan swaption pricing problem}
We consider the pricing problem of a Bermudan swaption. A Bermudan swaption is a contract that gives the holder the right to enter a swap with fixed maturity at a number of predefined monitor dates. Should the holder at any of the monitor dates decide to exercise the option, the holder immediately enters the underlying swap. The lifetime of this swap is assumed to be equal to the time between the exercise date and a fixed maturity date $T_M$.

As an underlying we take a standard interest rate swap that exchanges fixed versus floating cashflows. For simplicity we will assume that the contract is priced in a single-curve framework and that cashflow schemes of both legs coincide, yielding fixing dates $\mathcal{T}_f=\left\{T_0,\ldots,T_{M-1}\right\}$ and payment dates $\mathcal{T}_p=\left\{T_1,\ldots,T_{M}\right\}$. However, we stress that the algorithm is applicable to any industry standard contract specifications and is not limited to the simplifying assumptions that are made here. The time-fraction between two consecutive dates is denoted as $\Delta T_m = T_m-T_{m-1}$. Let $N$ be the notional and $K$ the fixed rate of the swap. Assuming the holder of the option exercises at $T_m$, the payments of the swap will occur at $T_{m+1},\ldots, T_M$.

We consider the class of pricing problems, where the value of the contract is completely determined by the Markov process $\left\{\mathbf{x}_t\right\}_{t\in[0,T]}$ in $\mathbb{R}^d$ as defined in \cref{sec: Math formulation}. Let $h_m:\mathbb{R}^d\to \mathbb{R}$ be the $\mathcal{F}_{T_m}$-measurable function denoting the immediate pay-off of the option if exercised at time $T_m$. Although the methodology holds for any generalization of the functions $h_m$, we will consider those in accordance with the contract specifications described above. This means that the functions $h_m$ are assumed to be given by
\begin{equation*} \begin{aligned}
    h_m(\mathbf{x}_{T_m}):=\delta\cdot N \cdot A_{m,M}\left(T_m\right)\left(S_{m,M}\left(T_m\right)-K\right)
\end{aligned} \end{equation*}
where the indicator $\delta=1$ infers a payer and $\delta=-1$ infers a receiver swaption. 
The swap rate $S_{m,M}$ and the annuity $A_{m,M}$ are defined in the same fashion as \cite{brigo2007interest}, given by the expressions
\[
S_{m,M}(t) = \frac{\sum_{j=m+1}^{M}\Delta T_j P(t,T_j)F(t,T_{j-1},T_j)}{\sum_{j=m+1}^{M}\Delta T_j P(t,T_j)},\quad A_{m,M}(t)=\sum_{j=m+1}^{M}\Delta T_j P(t,T_j)
\]
where the function $F$ denotes the simply compounded forward rate given by the expression
\[
F\left(t,T_{j-1},T_j\right)=\frac{1}{\Delta T_j}\left(\frac{P\left(t,T_{j-1}\right)}{P\left(t,T_j\right)}-1\right)
\]
for any $j\in\{1,\ldots,M\}$. For details we refer to \cite{brigo2007interest}.

Now let $\mathbb{T}$ denote the set of all discrete stopping times with respect to the filtration $\mathbb{F}$, taking values on the grid $\mathcal{T}_f\cup\{\infty\}$. Define the function $h_\tau$ as
\begin{equation} \begin{aligned}
    h_\tau(\mathbf{x}_\tau):=h_{\tau(\omega)}(\mathbf{x}_\tau(\omega))=\begin{cases}h_m\left(\mathbf{x}_{T_m}\right) & \text{if }\tau(\omega)=T_m\\ 0 & \text{if }\tau(\omega)=\infty\end{cases},\qquad \omega\in\Omega \label{eqn: stopping time}
\end{aligned} \end{equation}
In this notation, $\tau(\omega)=\infty$ indicates that the option is not exercised at all. We aim to approximate the time-zero value of the Bermudan swaption, which satisfies the following equation
\begin{equation} \begin{aligned}
    V(0) =\sup_{\tau\in\mathbb{T}} \mathbb{E^Q}\left[\frac{h_\tau(\mathbf{x}_\tau)}{B(\tau)}\bigg|\mathcal{F}_{0}\right] \label{eqn:Berm1}
\end{aligned} \end{equation}
Finding the optimal exercise strategy $\tau$ is typically a non-trivial exercise. Numerical approximations for $V(0)$ can however be computed by considering a dynamical programming formulation as given below, which is shown to be equivalent to \ref{eqn:Berm1} in for example \cite{glasserman2013monte}. Let $t\in(T_m,T_{m+1}]$ for some $m\in \{0,\ldots,M-2\}$ and denote by $V(t)$ the value of the option, conditioned on the fact that it is not yet exercised prior to $t$. This value satisfies the equation (see \cite{glasserman2013monte})

\begin{equation} \begin{aligned}
    V(t)=\begin{cases}
    \max \left\{h_{M-1}\left(\mathbf{x}_{T_{M-1}}\right),\;0\right\} & \text{if }t=T_{M-1}\\
    \max \left\{h_{m}\left(\mathbf{x}_{t}\right),\;B(t)\mathbb{E^Q}\left[\frac{V(T_{m+1})}{B(T_{m+1})}\Big|\mathcal{F}_{t}\right]\right\} & \text{if }t=T_m,\;m\in\{0,\ldots,M-2\}\\
    B(t)\mathbb{E^Q}\left[\frac{V(T_{m+1})}{B(T_{m+1})}\Big|\mathcal{F}_{t}\right] & \text{if }t\in(T_{m},T_{m+1}),\;m\in\{0,\ldots,M-2\}
    \end{cases}\label{eqn: dynamic formulation}
\end{aligned} \end{equation}
We refer to the random variables $C_m(t):=B(t)\mathbb{E^Q}\left[\frac{V(T_{m+1})}{B(T_{m+1})}\Big|\mathcal{F}_{t}\right]$ as the hold or continuation values. It represents the expected value of the contract if it is not being exercised up until $t$, but continues to follow the optimal policy thereafter. Approximations of the dynamic formulation are typically obtained by a backward iteration based on simulations of the underlying risk-factors. Objective is then to determine the continuation values as a function of the state of the risk-factor $\mathbf{x}_{t}$. Popular numerical schemes based on regression have been introduced in for example \cite{carriere1996valuation} and \cite{longstaff2001valuing}.

Based on approximations of the continuation values, the optimal policy $\tau$ can be computed as follows. Assume that for a given scenario $\omega\in\Omega$, the risk-factor takes the values $\mathbf{x}_{T_0}=x_0,\ldots,\mathbf{x}_{T_{M-1}}=x_{M-1}$. Then the holder should continue to hold the option if $C_m(T_m)>h_m(x_m)$ and exercise as soon as $C_m(T_m)\leq h_m(x_m)$. In other words, the exercise strategy can be determined as
\begin{equation*} \begin{aligned}
    \tau(\omega)=\min\left\{T_m\in\mathcal{T}_f\big|C_m(T_m)\leq h_m(x_m)\right\}
\end{aligned} \end{equation*}
Should for some scenario the continuation value be bigger than the immediate pay-off for each monitor date, then $\tau(\omega)=\infty$ and the option expires worthless.

\section{A semi-static hedge for Bermudan swaptions\label{sec: RLNN}}
The main concept of the method that we propose is to construct static hedge portfolios that replicate the dynamical formulation in \ref{eqn: dynamic formulation} between two consecutive monitor dates. In this section we introduce the algorithm for a Bermudan swaption that is priced under a multi-factor affine term-structure model. The methodology utilizes a regress-later technique in which the intermediate option values are regressed against simple IR assets, such as discount bonds. The regression model is chosen deliberately to represent the pay-off of an options portfolio written on these assets. An important consequence is that the hedge can be valued in closed-form. For an introduction to regress-later approaches, we refer to \cite{glasserman2004simulation}.


\subsection{The algorithm\label{sec: algorithm}}
The regress-later algorithm is executed in an iterative manner, backward in time. The outcome is a set of option portfolios $\left\{\Pi_{M-1},\ldots,\Pi_0\right\}$ written on pre-selected IR assets. To be more precise, the algorithm determines the weights and strikes of each portfolio $\Pi_m$, such that it closely mirrors the Bermudan swaption after its composition at $T_{m-1}$ until its expiry at $T_m$. The pay-off of $\Pi_m$ exactly meets the cost of composing the next portfolio $\Pi_{m+1}$ or the Bermudan's pay-off in case it is exercised. The methodology yields a semi-static hedging strategy as the portfolio compositions are constant between two consecutive monitor dates. Hence there is no need for continuous rebalancing, as is the case for a dynamic hedging strategy. The algorithm can roughly be divided into three steps, presented below. \cref{alg: rlnn} summarizes the method.
\paragraph{Sample the independent variables}
We start by sampling $N$ realizations of the risk-factor $\mathbf{x}_t$ on the time grid $\mathcal{T}=\left\{T_0,\ldots,T_{M-1}\right\}$. These realizations will serve as an input for the regression data. We will denote the data points as $\hat x:=\left\{\left(x_{T_0}^n,\ldots,x_{T_{M-1}}^n\right)\right\}_{n=1}^N$. Different sample methodologies could be used, such as:
\begin{itemize}
    \item Take a standard quadrature grid for each monitor date $T_m$, associated with the transition density of the risk-factor. For example, if $\mathbf{x}_t$ has Gaussian dynamics one could consider the Gauss-Hermite quadrature scaled and shifted in accordance with the mean and variance of $\mathbf{x}_t$. See for example \cite{xiu2010numerical}.
    \item Discretize the SDE of the risk-factor and sample by the means of an Euler or Milstein scheme. Make sure that sufficiently coarse time-stepping grid is used, which includes the $M$ monitor dates. See for example \cite{kloeden2013numerical} for details.
\end{itemize}
Secondly we select an asset that will serve as the independent variable for the regression. We will denote this asset as $z_m(t)$. The choice for $z_m$ can be arbitrary, as long as it meets the following conditions:
\begin{itemize}
    \item The asset $z_m(T_m)$ should be a square integrable random variable that is $\mathcal{F}_{T_m}$ measurable, taking values in $\mathbb{R}^d$.
    \item The risk-neutral price of $z_m(t)$ should be only dependent on the current state of the risk-factor and be almost surely unique. That is, the mapping $\mathbf{x}_{T_{m}}\mapsto z_m|\mathbf{x}_{T_{m}}$ should be continuous and injective. This is required to guarantee a well-defined parametrization of the option-value.
\end{itemize}
Examples for $z_m$ would be a zero-coupon bond, a forward Libor rate or a forward swap rate. For each sampled realization of the risk-factor, the corresponding realization of the asset value will be computed and denoted as $\hat z:=\left\{\left(z_{0}^n,\ldots,z_{M-1}^n\right)\right\}_{n=1}^N$. This will serve as the regression data in the following step.
\paragraph{Regress the option value against an IR asset}
In this phase we compose replication portfolios $\Pi_0,\ldots,\Pi_{M-1}$, by fitting $M$ regression functions $G_0,\ldots,G_{M-1}$. We consider functions of the form $G_m:\mathbb{R}^d\to\mathbb{R}$, which assign values in $\mathbb{R}$ to each realization of the selected asset $z_m$. Fitting is done recursively, starting at $T_{M-1}$, moving backwards in time, until the first exercise opportunity $T_0$. Approximations of the Bermudan swaption value at each monitor date serve as dependent variable. At the final monitor date, the value of the contract (given it has not been exercised) is known to be 
\begin{equation*} \begin{aligned}
    V\left(T_{M-1};x_{T_{M-1}}^n\right)=\max\left\{h_{M-1}\left(x_{T_{M-1}}^n\right),\;0\right\}, \qquad n=1,\ldots,N
\end{aligned} \end{equation*} 
Now assume that for some monitor date $T_m$ we have an approximation of the contract value $\tilde V\left(T_m;x_{T_{m}}^n\right)\approx V\left(T_{m};x_{T_{m}}^n\right)$.
Let $\xi_m\in\mathbb{R}^p$ for some $p\in\mathbb{N}$ denote the vector of the unknown regression parameters. The objective is to determine $\xi_m$ such that
\begin{equation*} \begin{aligned}
    G_m\left(z_m(T_m)|\mathbf{x}_{T_m}\right)\approx V\left(T_{m}\right)
\end{aligned} \end{equation*}
with the smallest possible error. This is done by formulating and solving a related optimization problem. In this case we choose to minimize the expected square error, given by
\begin{equation} \begin{aligned}
    \mathbb{E^Q}\left[\left|G_m\left(z_m(T_m)|\mathbf{x}_{T_m}\right)- V\left(T_{m}\right)\right|^2\right]\label{eqn:loss1}
\end{aligned} \end{equation}
There is no exact analytical expression available for the expectation of \ref{eqn:loss1}. However, it can be approximated using the sampled regression data, giving rise to an empirical loss function $L$ given by
\begin{equation} \begin{aligned}
    L(\xi_m|\hat z_m, \hat x_m)=\sum_{n=1}^N\tilde w\left(x_{T_m}^n\right)\cdot \left(G_m\left(z_m^n\right)- \tilde V\left(T_m;x_{T_{m}}^n\right)\right)^2 \label{eqn: loss2}
\end{aligned} \end{equation}
The weight-function $\tilde w$ is implied by the risk-factor sampling method of choice. If $\hat x$ is sampled through Monte Carlo, the weights are constant $\tilde w(x)=\frac{1}{N}$. If $\hat x$ is sampled as a standard quadrature, the function $\tilde w$ denotes the related quadrature weights, see \cite{xiu2010numerical}. The parameters $\xi_m$ are then the result of the fitting procedure, such that
\begin{equation*} \begin{aligned}
    \xi_m \approx \underset{\xi \in \mathbb{R}^p}{\text{argmin}}\;\,L(\xi|\hat z_m, \hat x_m)
\end{aligned} \end{equation*}
If the regression model is chosen accordingly, $G_m(z_m)$ represents the pay-off at $T_m$ of a derivative  portfolio $\Pi_m$ written on the selected asset $z_m$. Details on suggested functional forms of $G_m$, asset selection for $z_m$ and fitting procedures are subject of \cref{sec: architecture}.

\paragraph{Compute the continuation value}
Once the regression is completed, the last step is to compute the continuation value and subsequently the option value at the monitor date preceding to $T_m$. For each scenario $n=1,\ldots,N$ we approximate the continuation value as
\begin{equation} \begin{aligned}
    \tilde C_{m-1}(T_{m-1}) &= B(T_{m-1})\mathbb{E^Q}\left[\frac{\tilde V\left(T_m\right)}{B\left(T_{m}\right)}\bigg|\mathcal{F}_{T_{m-1}}\right]\\
    &\approx B(T_{m-1})\mathbb{E^Q}\left[\frac{G_m\left(z_{m}(T_m)\right)}{B\left(T_{m}\right)}\bigg|\mathcal{F}_{T_{m-1}}\right] \label{eqn: continuation1}
\end{aligned} \end{equation}
As $G_m$ is chosen to represent the pay-off of a derivative portfolio $\Pi_m$ written on $z_m$, we argue that computing $C_{m-1}$ is in fact equivalent to the risk-neutral pricing of $\Pi_m$. In other words we have
\begin{equation*} \begin{aligned}
    \tilde C_{m-1}(T_{m-1})=B(T_{m-1})\mathbb{E^Q}\left[\frac{\Pi_m\left(T_{m}\right)}{B\left(T_{m}\right)}\bigg|\mathcal{F}_{T_{m-1}}\right]:=\Pi_m(T_{m-1})
\end{aligned} \end{equation*}
In \cref{sec: architecture} we treat examples for which $\Pi_m$ can be computed in closed-form.

Finally the option value at the preceding monitor date $T_{m-1}$ is given by
\begin{equation*} \begin{aligned}
    \tilde V\left(T_m;x_{T_m}^n\right)=\max\left\{\tilde C_{m-1}(T_{m-1}),\;h_{m-1}\left(x_{T_{m-1}}^n\right)\right\},\qquad n=1,\ldots,N
\end{aligned} \end{equation*}
The steps are repeated recursively until we have a representation $G_0$ of the option value at the first monitor date. An estimator of the time-zero option value is given by
\begin{equation*} \begin{aligned}
     \tilde V(0)=\mathbb{E^Q}\left[\frac{G_0\left(z_{0}(T_0)\right)}{B\left(T_{0}\right)}\bigg|\mathcal{F}_0\right]
\end{aligned} \end{equation*}
We refer to this approximation as the \textit{direct estimator}.
\begin{algorithm}
\caption{The algorithm for a Bermudan swaption}
\label{alg: rlnn}
\begin{algorithmic}
\STATE{Generate $N$ risk-factor scenarios for $\mathbf{x}_{T_m}$ for $m=0,\ldots,M$}
\STATE{Compute $N$ corresponding asset scenarios for $z_m$ for $m=0,\ldots,M$}
\STATE{$\Tilde V\left(T_{M-1};x_{T_{M-1}}^n\right)\leftarrow \max\left\{h_{M-1}\left(x_{T_{M-1}}^n\right),\,0\right\}$ for $n=1,\ldots,N$}
\STATE{Initialize $G_{M-1}$ parameters $\xi_{M-1}$ from independent uniform distributions}
\FOR{$m=M-1,\ldots,1$}{
\STATE{$\xi_m \leftarrow \underset{\xi \in \mathbb{R}^p}{\text{argmin}}\;\,L(\xi|\hat z_m, \hat x_m)$ minimizing the MSE}
\FOR{$n=1,\ldots,N$}{
\STATE{$\tilde C_{m-1}(T_{m-1}) \leftarrow  B(T_{m-1})\mathbb{E^Q}\left[\frac{G_m\left(z_m(T_m)\right)}{B\left(T_{m}\right)}\Big|\mathcal F_{T_{m-1}}\right]$}
\STATE{$\Tilde V(T_{m-1};x_{T_{m-1}}^n)\leftarrow \max\left\{\tilde C_{m-1}(T_{m-1}),\,h_{m-1}\left(x_{T_{m-1}}^n\right)\right\}$}}
\ENDFOR
\STATE{$\xi_{m-1}\leftarrow \xi_{m}$ initialize weights of $G_{m-1}$}}
\ENDFOR
\STATE{$\xi_0 \leftarrow \underset{\xi \in \mathbb{R}^p}{\text{argmin}}\;\,L(\xi|\hat z_0, \hat x_0)$ minimizing the MSE}
\RETURN{$\mathbb{E^Q}\left[\frac{G_0\left(z_0(T_0)\right)}{B\left(T_{0}\right)}\Big | \mathcal{F}_0\right]$}
\end{algorithmic}
\end{algorithm}

\subsection{A neural network approach to $\mathbf{G_m}$\label{sec: architecture}}
In this section we propose to represent the regression functions $G_m$ as shallow, artificial neural networks. We discuss some approaches to the neural network design and its interpretation as a static hedge portfolio. In particular we specifically propose to utilize networks with a single hidden layer, employing the ReLU activation function to facilitate this interpretation. The choices that are presented here are adapted to a framework of Gaussian risk-factors, such as presented in \cref{sec: Math formulation}. The method however lends itself to be generalized to a broader class of models, by considering an appropriate adjustment to the input or structure of each neural network. Our motivation for a neural network representation of $G_m$ is twofold. First, it allows us to utilize the optimization algorithms that have been developed in this field. Second, by the virtue of the universal approximation theorem \cite{hornik1989multilayer} we can reach any level of accuracy for the replication within a compact domain of the risk-factor (see \cref{sec: error analysis}).

\subsubsection{The 1-factor case}
\begin{figure}
\centering
\includegraphics[width=0.4\linewidth]{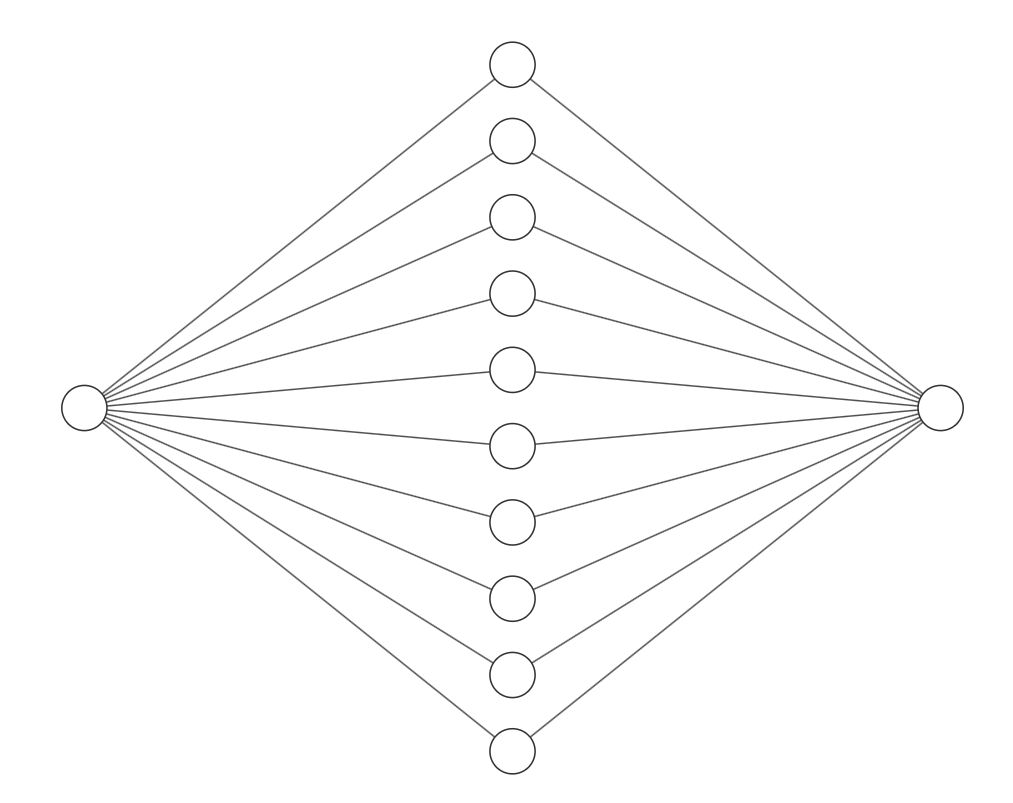}
\caption{Suggested neural network design for $Dim\left(\mathbf{x}_{t}\right)=1$}
\label{fig: single-dim NN}
\end{figure}

First we discuss the case $d=1$. Let $m\in\{0,\ldots,M-1\}$. As a regression function we consider a fully-connected, feed-forward neural network with one hidden layer, denoted as $G_m:\mathbb{R}\to\mathbb{R}$. The design with only a single hidden layer is chosen deliberately to facilitate the network's interpretation. As an input to the network (the asset $z_m$) we select a zero-coupon bond, which pays one unit of currency at $T_M$. Our motivation for this choice is that both discount bonds and European options on discount bonds admit closed-form pricing formulas. The network is given the following structure.
\begin{itemize}
    \item The first layer consists of a single node. As a property of the one-factor model, all the spot-rates at some future instance $\left\{R(T_m,T^\prime),T^\prime\in[T_m,T]\right\}$ are perfectly correlated through the single variable $x_{T_m}$. A 1D input layer taking $P\left(T_m,T_M\right)$ as input is hence rich enough to identify the option value $\tilde V(T_m)$. The hidden layer has $q\in\mathbb{N}$ hidden nodes. The affine transformation acting between the first two layers is denoted $A_1:\mathbb{R}\to\mathbb{R}^q$ and is of the form
    \begin{equation*} \begin{aligned}
        A_1: x \mapsto \mathbf{w}_1 x + \mathbf{b}, \qquad \mathbf{w}_1\in\mathbb{R}^{q\times 1}, \mathbf{b}\in\mathbb{R}^q
    \end{aligned} \end{equation*}
    As an activation function $\varphi:\mathbb{R}^q\to\mathbb{R}^q$ acting on the hidden layer we take the ReLU-function, given by
    \begin{equation*} \begin{aligned}
        \varphi:\left(x_1,\ldots,x_q\right)\mapsto\left(\max\{x_1,\,0\},\ldots,\max\{x_q,\,0\}\right)
    \end{aligned} \end{equation*}
    The ReLU activation is chosen deliberately as it corresponds to the pay-off function of a European option.
    \item The output of the network aims to estimate contract value $\tilde V_m\in\mathbb{R}$ and therefore contains only a single node. We consider a linear transformation acting between the second and last layer $A_2:\mathbb{R}^{q}\to\mathbb{R}$, given by
    \begin{equation*} \begin{aligned}
        A_2:x\mapsto \mathbf{w}_2 x, \qquad \mathbf{w}_2\in\mathbb{R}^{1\times q}
    \end{aligned} \end{equation*}
    On top of that we apply the linear activation, which comes down to an identity function, mapping $x$ to itself.
\end{itemize}
Combined together, the network is specified to satisfy
\begin{equation*} \begin{aligned}
    G_m(\cdot):=A_2\circ\varphi\circ A_1
\end{aligned} \end{equation*}
and the trainable parameters can be presented by the list
\begin{equation*} \begin{aligned}
    \xi_m = \left\{w_{1,1},b_{1,1},\ldots,w_{1,q},b_{1,q}\right\}\cup\left\{w_{2,1},\ldots,w_{2,q}\right\}
\end{aligned} \end{equation*}
The architecture is graphically visualized in \cref{fig: single-dim NN}.
\paragraph{Interpretation of the neural network}
Now that we specified the structure of the neural network, we will discuss how each function $G_m$ can be interpreted as a portfolio $\Pi_m$. In the one-dimensional case, $G_m$ can be expressed as follows
\begin{equation*} \begin{aligned}
    G_m(z_m):=\sum_{j=1}^q w_{2,j}\max\left\{w_{1,j}z_m+b_{j},\;0\right\}
\end{aligned} \end{equation*}
We can regard this as the pay-off of a derivative portfolio $\Pi_m$ written on the asset $z_m$ (in our case the zero-coupon bond $P(T_m, T_M)$). The portfolio contains $q$ derivatives that each have a terminal value equal to $ w_{2,j}\max\left\{w_{1,j}z_m+b_{j},\;0\right\}$. In total we can recognize four types of products, which depend on the signs of $w_{1,j}$ and $b_j$. Keep in mind that $z_m:=P(T_m,T_M)$ is positive by default.
\begin{enumerate}
    \item If $w_{1,j}>0$ and $b_j>0$, we have
    \begin{equation*} \begin{aligned}
        w_{2,j}\max\left\{w_{1,j}z_m+b_{j},\;0\right\}=w_{2,j}w_{1,j}z_m+w_{2,j}b_{j}
    \end{aligned} \end{equation*}
    which is the pay-off of a forward contract on $w_{2,j}w_{1,j}$ units in $z_m$ and $w_{2,j}b_{j}$ units of currency.
    \item If $w_{1,j}>0$ and $b_j<0$, we have
    \begin{equation*} \begin{aligned}
        w_{2,j}\max\left\{w_{1,j}z_m+b_{j},\;0\right\}=w_{2,j}w_{1,j}\max\left\{z_m-\frac{-b_{j}}{w_{1,j}},\;0\right\}
    \end{aligned} \end{equation*}
    which is the pay-off corresponding to $w_{2,j}w_{1,j}$ units of a European call option written on $z_m$, with strike price $\frac{-b_{j}}{w_{1,j}}$.
    \item If $w_{1,j}<0$ and $b_j>0$, we have
    \begin{equation*} \begin{aligned}
        w_{2,j}\max\left\{w_{1,j}z_m+b_{j},\;0\right\}=-w_{2,j}w_{1,j}\max\left\{\frac{b_{j}}{-w_{1,j}}-z_m,\;0\right\}
    \end{aligned} \end{equation*}
    which is the pay-off corresponding to $-w_{2,j}w_{1,j}$ units of a European put option written on $z_m$, with strike price $\frac{b_{j}}{-w_{1,j}}$.
    \item If $w_{1,j}<0$ and $b_j<0$, we have
    \begin{equation*} \begin{aligned}
        w_{2,j}\max\left\{w_{1,j}z_m+b_{j},\;0\right\}=0
    \end{aligned} \end{equation*}
    which clearly represents a worthless contract.
\end{enumerate}
The sign of the coefficient $w_{2,j}$ indicates if one has a short or long position of the product in the portfolio. Hence, under the assumption of a frictionless economy, the absence of arbitrage and the Markov property for $z_m$, the portfolio $\Pi_m$ replicates the original Bermudan contract over the period $(T_{m-1},T_m]$. As the portfolio composition is constant between two consecutive monitor dates, the method described here can be interpreted as a semi-static hedging strategy.


\subsubsection{The multi-factor case\label{sec: d-dim NN design}}
\begin{figure}
\centering
\begin{subfigure}{.5\textwidth}
  \centering
  \includegraphics[width=0.7\linewidth]{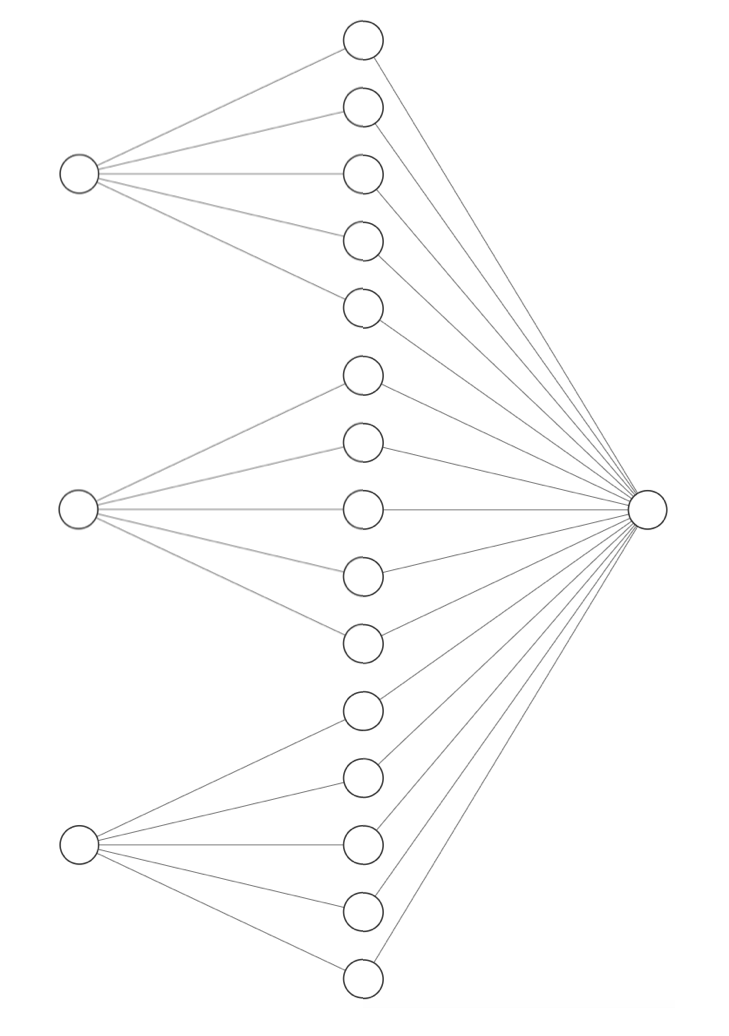}
  \caption{Locally connected neural network}
  \label{fig:Locally conn NN}
\end{subfigure}%
\begin{subfigure}{.5\textwidth}
  \centering
  \includegraphics[width=0.65\linewidth]{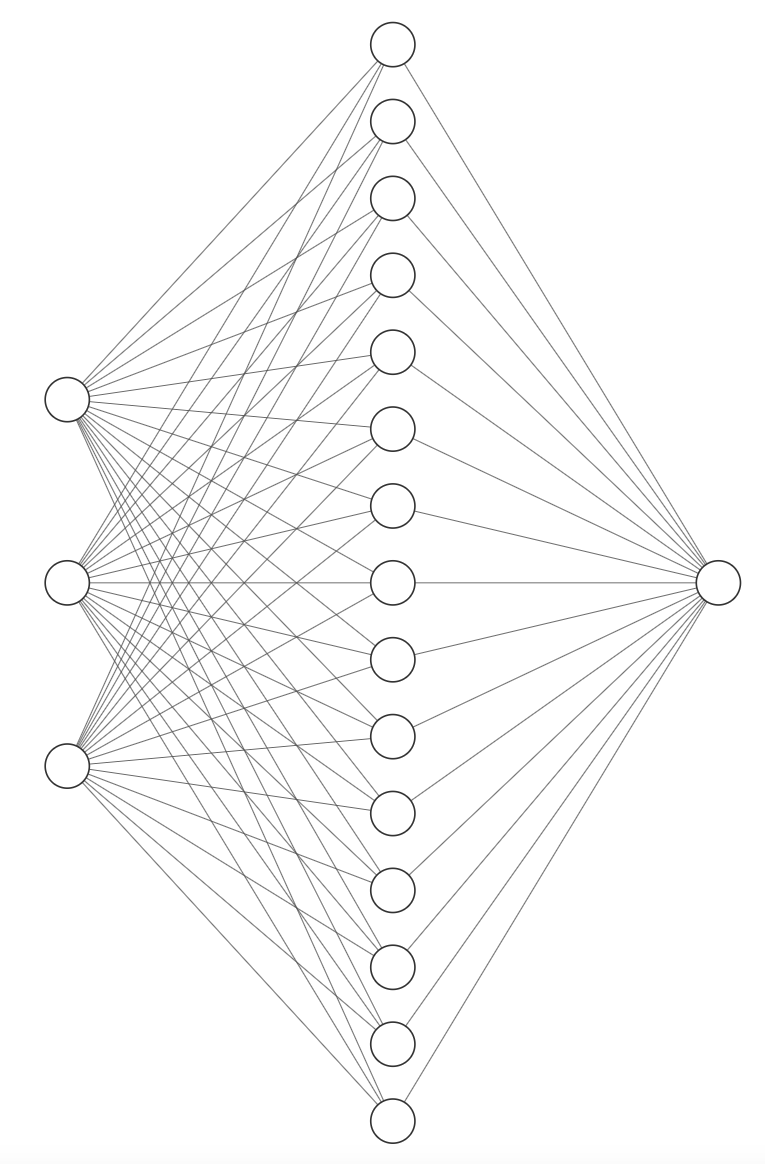}
  \caption{Fully connected neural network}
  \label{fig:Fully conn NN}
\end{subfigure}
\caption{Suggested neural network designs for $Dim\left(\mathbf{x}_{t}\right)\geq2$}
\label{fig: multi-dim NNs}
\end{figure}

In the case $d\geq 2$, the modelled spot-rates are no longer perfectly correlated. For that reason, a single zero-coupon bond does not suffice to identify the option value $\tilde V(T_m)$ for any $m\in\{0,\ldots,M-1\}$. Instead, we propose that a basket of $d$ zero-coupon bonds all maturing at different dates $T_m+\delta_1,\ldots,T_m+\delta_n$ is required as input to the regression. Simply said, if the risk-factor space is $d$-dimensional, it can only be parametrized by an at least $d$-dimensional asset vector.

To see why the above statement is true, simply consider $n$ bonds $P(T_m,T_m+\delta_1),\ldots,$ $P(T_m,T_m+\delta_n)$ and note that the following relation holds
\begin{equation*} \begin{aligned}
    &\begin{pmatrix}P(T_m,T_m+\delta_1)\\\vdots\\P(T_m,T_m+\delta_n)\end{pmatrix}=\begin{pmatrix}\exp\{A(T_m,T_m+\delta_1)-\sum_{j=1}^dB_j(T_m,T_m+\delta_1)x_j(T_m)\}\\\vdots\\\exp\{A(T_m,T_m+\delta_n)-\sum_{j=1}^dB_j(T_m,T_m+\delta_n)x_j(T_m)\}\end{pmatrix}\\
    \implies &  \begin{pmatrix}B_1(T_m,T_{m}+\delta_1)&\hdots&B_d(T_m,T_{m}+\delta_1)\\\vdots&\ddots&\vdots\\B_1(T_m,T_{m}+\delta_n)&\hdots&B_d(T_m,T_{m}+\delta_n)\end{pmatrix}\begin{pmatrix}x_1(T_m)\\\vdots\\x_d(T_m)\end{pmatrix}\\
    &\qquad\qquad\qquad=\begin{pmatrix}A(T_m,T_m+\delta_1)-\log P(T_m,T_m+\delta_1)\\\vdots\\A(T_m,T_m+\delta_d)-\log P(T_m,T_m+\delta_n)\end{pmatrix}\\
    \implies &  \mathbf{B}(T_m)\mathbf{x}_{T_m}=\mathbf{\alpha}
\end{aligned} \end{equation*}
Since we have that $rank(\mathbf{B}(T_m))= \min\{n,d\}$ it follows that if $n<d$ the image of $\mathbf{B}$ does not span the whole risk-factor space, whereas if $n>d$ the image of $\mathbf{B}$ is still equal to the case $n=d$.

Concluding on the argument above, it would be an obvious choice to take a $d-$dimensional vector of bonds as the input and generalize the architecture of $G_m$ by increasing the input-dimension (i.e. the number of nodes in the first layer) from 1 to $d$. A problem that would occur however, is that $\Pi_m$ then represents a derivatives portfolio written on a basket of bonds, by which the tractability of pricing $\Pi_m$ would be lost. Therefore we suggest two alternatives to the design of $G_m$, intended to preserve the analytical valuation potential of $\Pi_m$.

The basic specifications of the neural network will remain similar to the 1-factor case. We consider a feed-forward neural network with one hidden layer of the form $G_m:\mathbb{R}^d\to\mathbb{R}$.
\begin{itemize}
    \item The first layer consists of $d$ nodes and the hidden layer has $q\in\mathbb{N}$ hidden nodes. The affine transformation and activation acting between the first two layers are denoted $A_1:\mathbb{R}^d\to\mathbb{R}^q$ and $\varphi:\mathbb{R}^q\to\mathbb{R}^q$ respectively given by
    \begin{equation*} \begin{aligned}
        &A_1: x \mapsto \mathbf{w}_1 x + \mathbf{b}, \qquad \mathbf{w}_1\in\mathbb{R}^{q\times d}, \mathbf{b}\in\mathbb{R}^q\\
        &\varphi:\left(x_1,\ldots,x_q\right)\mapsto\left(\max\{x_1,\,0\},\ldots,\max\{x_q,\,0\}\right)
    \end{aligned} \end{equation*}
    \item The output contains a single node. A linear transformation acts between the second and last layer $A_2:\mathbb{R}^{q}\to\mathbb{R}$, together with the linear activation, given by
    \begin{equation*} \begin{aligned}
        A_2:x\mapsto \mathbf{w}_2 x, \qquad \mathbf{w}_2\in\mathbb{R}^{1\times q}
    \end{aligned} \end{equation*}
    \item The network is given by $G_m(\cdot):=A_2\circ\varphi\circ A_1$
\end{itemize}

\paragraph{Suggestion 1: A locally connected neural network}
The outcome of each node in the hidden layer represents the terminal value of a derivative written on the asset $\mathbf{z}_m$, which together compose the portfolio $\Pi_m$. In the $d-$dimensional case the outcome of the $j^{th}$ node $\nu_j$ can be expressed as
\begin{equation*} \begin{aligned}
    \nu_j(\mathbf{z})=\max\left\{\sum_{k=1}^d w_{jk}z_k+b_j,\;0\right\}
\end{aligned} \end{equation*}
which corresponds to the pay-off of an arithmetic basket option with weights $w_{j1},\ldots w_{jd}$ and strike price $b_j$. Such an exotic option is difficult to price. To overcome this issue we constrain the matrix $\mathbf{w}_1$ to only admit a single non-zero value in each row. Let the number of hidden nodes be a multiple of the input dimension, i.e. $q=n\cdot d$ for some $n\in\mathbb{N}$. The matrix $\mathbf{w}_1$ is set to be of the form
\begin{equation*} \begin{aligned}
    \mathbf{w}_1=\begin{pmatrix}w_{1,1}&0&0&\cdots&0&0\\
    \vdots&\vdots&\vdots&&\vdots&\vdots\\
    w_{1,n}&0&0&\cdots&0&0\\
    0&w_{2,n+1}&0&\cdots&0&0\\
    \vdots&\vdots&\vdots&&\vdots&\vdots\\
    0&w_{2,2n}&0&\cdots&0&0\\
    \vdots&\vdots&\vdots&&\vdots&\vdots\\
    0&0&0&\cdots&0&w_{d,d\cdot n}\end{pmatrix}
\end{aligned} \end{equation*}
The architecture is graphically depicted in \cref{fig:Locally conn NN}. As a result, the outcome of each node $\nu_j$ again represents a European option or forward written on a single bond, which can be priced in closed-form (see \cref{sec: eval local NN}).

We can recognize two drawbacks to this approach. First, the number of trainable parameters for a fixed number of hidden nodes is much lower compared to the fully connected case. This can simply be overcome by increasing $q$. Second, as the network is not fully connected, the universal approximation theorem no longer applies to $G_m$. Therefore we have no guarantee the approximation errors can be reduced to any desirable level. Our numerical experiments however indicate that the approximation accuracy of this design is not inferior to that of a fully connected counterpart of the same dimensions; see \cref{sec: numerical examples}.

\paragraph{Suggestion 2: A fully connected neural network}
Our second approach does not entail altering the structure or weights of the network, but suggests to take a different input. We hence consider a fully connected, feed-forward neural network with one hidden layer of the form $G_m:\mathbb{R}^d\to\mathbb{R}$. The architecture is graphically depicted in \cref{fig:Fully conn NN}. However as an input, we use the log of $n$ bonds, i.e.
\begin{equation*} \begin{aligned}
    \mathbf{z}_m:=\left(\log P(T_m,T_m+\delta_1),\ldots,\log P(T_m,T_m+\delta_n)\right)^\top
\end{aligned} \end{equation*}
As a result, each node $\nu_j$ can be compared to the payoff of a geometric basket option written on $n$ assets $\mathbf{z}_m$ equal to log of $P(t,T_m+\delta_j)$. Under the assumption that the dynamics of the risk-factor $\mathbf{x}_t$ are Gaussian, these options can be priced explicitly as we will show in \cref{sec: eval fully NN}.

An advantage of this approach is that it employs a fully-connected network which by the virtue of the universal approximation theorem \cite{hornik1989multilayer} can yield any desired level of accuracy. A drawback is that the financial interpretation of the network as a replicating portfolio is not as strong as in suggestion 1, due to the required $\log$ in the payoff.

\subsection{Training of the neural networks}
In this section we specify some of the main considerations related to the fitting procedure of the algorithm. The method requires the training of $M$ shallow feed-forward networks as specified in \cref{sec: architecture}, which we denote $G_0,\dots,G_{M-1}$. Our numerical experiments indicated that normalization of the training-set strongly improved the networks' fitting accuracy. Details for pre-processing the regression data are treated in \cref{sec: preprocessing the data}.

\paragraph{Optimization}
The training of each network is done in an iterative process, starting with $G_{M-1}$ working backwards until $G_0$. The effectiveness of the process depends on several standard choices related to neural network optimization, of which some are listed below.
\begin{itemize}
    \item As optimizer we apply AdaMax \cite{kingma2014adam}, a variation to the commonly used Adam algorithm. This is a stochastic, first-order, gradient-based optimizer that updates weights inversely proportional to the $L_\infty$-norm of their current and past gradients; whereas Adam is based on the $L_2$-norm. Our experiments indicate that AdaMax slightly outperforms comparable algorithms in the scope of our objectives.
    \item The batch size, i.e. the number of training points used per weight update, is set to a standard 32. The learning rate, which scales the step size of each update, is kept in the range 0.0001-0.0005.
    \item For the initial network, $G_{M-1}$, we use random initialization of the parameters. If the considered contract is a payer Bermudan swaption, we initialize the (non-zero) entries of $\mathbf{w}_1$ i.i.d. unif$(0,1)$ and the biases $\mathbf{b}$ i.i.d. unif$(-1,0)$. In the case of a receiver contract, it's the other way around. The weights $\mathbf{w}_2$ are initialized i.i.d. unif$(-1,1)$.
    \item For the subsequent networks, $G_{M-2},\ldots,G_0$, each network $G_m$ is initialized with the final set of weights of the previous network $G_{m+1}$.
    \item As a training set for the optimizer we use a collection of 20,000 data-points.
\end{itemize}
Some specific choices for the hyperparameters are motivated by a convergence-analysis presented in \cref{sec: hyper-parameters}.

\section{Lower and upper bound estimates\label{bounds theory}}
The algorithm described in \cref{sec: algorithm}, gives rise to a direct estimator of the true option price $V$. The accuracy of this estimator depends on the approximation performance of the neural networks at each monitor date. Should each regression yield a perfect fit, then the estimation error would automatically be zero. In practice however, the loss function, defined in (\ref{eqn: loss2}), never fully converges to zero. As the networks are trained to closed-form exercise and continuation values, error measures such as MSE and MAE can be easily obtained. Especially the mean absolute errors provide a strong indication of the error bounds on the direct estimator (see \cref{sec: error analysis}).

Although convergence errors put solid bounds on the accuracy of the estimator, they are typically quite loose. Therefore they give rise to non-tight confidence bounds. To overcome this issue we introduce a numerical approximation to a tight lower and upper bound to the true price, in the same spirit as \cite{lokeshwar2019neural}. These should provide a better indication of the quality of the estimate.

\subsection{The lower bound\label{sec: lower bound}}
We compute a lower bound approximation by considering the non-optimal exercise strategy $\tilde\tau$ implied by the continuation values estimates introduced in \cref{sec: algorithm}. We define $\tilde\tau$ as
\begin{equation} \begin{aligned}
    \tilde\tau(\omega)=\min\left\{T_m\in\mathcal{T}_f\big|\tilde C_m\left(T_m\right)\leq h_m\left(\mathbf{x}_{T_m}\right)\right\}\label{eq: stopping time}
\end{aligned} \end{equation}
where $\tilde C_m$ refers to the approximated continuation value given in \ref{eqn: continuation1}. A strict lower bound is now given by
\begin{equation} \begin{aligned}
    L(0)= \mathbb{E^Q}\left[\frac{h_{\tilde\tau}(\mathbf{x}_{\tilde\tau})}{B(\tilde\tau)}\bigg|\mathcal{F}_{0}\right]=P(0,T_M)\mathbb{E}^{T_M}\left[\frac{h_{\tilde\tau}(\mathbf{x}_{\tilde\tau})}{P(\tilde\tau,T_M)}\bigg|\mathcal{F}_{0}\right] \label{eqn: lower bound}
\end{aligned} \end{equation}
where $h_{\tilde\tau}$ corresponds to the definition given in \ref{eqn: stopping time}. The term on the right is obtained by changing the measure from $\mathbb{Q}$ to the $T_M-$forward measure $\mathbb{Q}^{T_M}$\cite{geman1995changes}. Under the risk-neutral measure the lower bound can be estimated by simulating a fresh set of scenarios of the risk-factor $\hat x:=\left\{\left(x_{t_1}^n,x_{t_2}^n,\ldots,x_{T_M}^n\right)\Big|n=1,\ldots,N\right\}$. Let $\hat r_j^n$ denote the corresponding realizations of the short-rate. An approximation of the discounting term is obtained as
\begin{equation} \begin{aligned}
    B^n(t_j):= \exp\left\{\int_0^{t_j}r^n(u)du\right\}\approx \exp\left\{\sum_{i=1}^j\frac{1}{2}(\hat r_{i-1}^n+\hat r_i^n)\cdot\left(t_i-t_{i-1}\right)\right\}:=\tilde B^n(t_j) \label{eqn: discount factor}
\end{aligned} \end{equation}
The estimate of $L$ is then given by
\begin{equation*} \begin{aligned}
    \tilde L(0)=\frac{1}{N}\sum_{n=1}^N\frac{h_{\tilde\tau^n}\left(x^n_{\tilde\tau^n}\right)}{B^n(\tilde\tau^n)}
\end{aligned} \end{equation*}
For accurate approximation of the discount term, a coarse discretization of the simulation grid is required (see \cite{brigo2007interest}), which is computationally demanding. An alternative is to simulate the fresh set of scenarios under the $T_M$-forward measure instead. Denote by $P^n(t,T_M)$ the zero-coupon bond realization corresponding to $x^n_t$ and compute the approximation as
\begin{equation*} \begin{aligned}
    \tilde L(0) = \frac{P(0,T_M)}{N}\sum_{n=1}^N\frac{h_{\tilde\tau^n}\left(x^n_{\tilde\tau^n}\right)}{P^n(\tilde\tau^n,T_M)}
\end{aligned} \end{equation*}
As a consequence there is no need to simulate the num\'eraire as there is an analytical expression available for $P^n(\tilde\tau^n,T_M)$.

\subsection{The upper bound\label{sec: Upper bound}}
We compute an upper bound by considering a dual formulation of the price expression \ref{eqn:Berm1} as proposed in \cite{haugh2004pricing} and \cite{rogers2002monte}. Let $\mathcal{M}$ denote the set of all Martingales $M_t$ adapted to $\mathbb{F}$ such that $\sup_{t\in[0,T]}\left|M_t\right|<\infty$. An upper bound $U(0)$ to the true price $V(0)$ is obtained by observing that the following inequality holds (see \cite{haugh2004pricing}),
\begin{equation} \begin{aligned}
    V(0)\leq M_0+\mathbb{E^Q}\left[\max_{T_m\in\mathcal{T}_f}\left\{\frac{h_m(\mathbf{x}_{T_m})}{B(T_m)}-M_{T_m}\right\}\bigg|\mathcal{F}_0\right]:=U(0)\label{eqn: martingale ineq}
\end{aligned} \end{equation}
for any $M_t\in\mathcal{M}$. To find a suitable Martingale that yields a tight bound, we consider the Doob-Meyer decomposition of the true discounted option price process $\frac{V(t)}{B(t)}$. As the price process is a supermartingale, we can write \begin{equation*} \begin{aligned}
    \frac{V(t)}{B(t)}:=Y_t+Z_t
\end{aligned} \end{equation*}
where $Y_t$ denotes a Martingale and $Z_t$ a predictable, strictly decreasing process such that $Z_0=0$. Note that equation \ref{eqn: martingale ineq} attains an equality if we set $M_t=Y_t$, i.e. the Martingale part of the option price process. The bound will hence be tight if we consider a Martingale $M_t$ that is close to the unknown $Y_t$. Let $G_m(\cdot)$ denote the neural networks induced by the algorithm. In the spirit of \cite{andersen2004primal} and \cite{lokeshwar2019neural} we construct a Martingale on the discrete time grid $\left\{0,T_0,\ldots,T_{M-1}\right\}$ as follows.
\begin{equation} \begin{aligned}
    & M_0=\mathbb{E^Q}\left[\frac{G_0(z_0(T_0))}{B(T_0)}\bigg|\mathcal{F}_{0}\right],\; M_{T_0}=\frac{G_0(z_0(T_0))}{B(T_0)}\\
    & M_{T_{m}}=M_{T_{m-1}}+\frac{G_m(z_m(T_m))}{B(T_m)}-\mathbb{E^Q}\left[\frac{G_m(z_m(T_m))}{B(T_m)}\bigg|\mathcal{F}_{T_{m-1}}\right],\quad m=1,\ldots,M-1 \label{eqn:martingale1}
\end{aligned} \end{equation}
Clearly, the process $\left\{M_{T_m}\right\}_{m=0}^{M-1}$ yields a discrete Martingale as
\begin{equation*} \begin{aligned}
    \mathbb{E^Q}\left[M_{T_m}\big|\mathcal{F}_{T_{m-1}}\right]&=\mathbb{E^Q}\left[M_{T_{m-1}}+\frac{G_m(z_m(T_m))}{B(T_m)}-\mathbb{E^Q}\left[\frac{G_m(z_m(T_m))}{B(T_m)}\bigg|\mathcal{F}_{T_{m-1}}\right]\bigg|\mathcal{F}_{T_{m-1}}\right]\\
    &=\mathbb{E^Q}\left[M_{T_{m-1}}\big|\mathcal{F}_{T_{m-1}}\right]+\mathbb{E^Q}\left[\frac{G_m(z_m(T_m))}{B(T_m)}-\frac{G_m(z_m(T_m))}{B(T_m)}\bigg|\mathcal{F}_{T_{m-1}}\right]\\
    &=M_{T_{m-1}}
\end{aligned} \end{equation*}
Furthermore, the process $M_t$ as defined above will coincide with $Y_t$ if the approximation errors in $G_m(\cdot)$ equal zero, hence yielding an equality in \ref{eqn: martingale ineq}. Note that the recursive relation in \ref{eqn:martingale1} can be rewritten as
\begin{equation} \begin{aligned}
    M_{T_m}=\frac{G_0(z_0(T_0))}{B(T_0)}+\sum_{j=1}^{m} \left(\frac{G_j(z_j(T_j))}{B(T_j)}-\mathbb{E^Q}\left[\frac{G_j(z_j(T_j))}{B(T_j)}\bigg|\mathcal{F}_{T_{j-1}}\right]\right)\label{eqn:martingale2}
\end{aligned} \end{equation}
We can now estimate the upper bound by again simulating a set of scenarios of the risk-factor $\left\{\left(x_{t_1}^n,x_{t_2}^n,\ldots,x_{T_M}^n\right)\Big|n=1,\ldots,N\right\}$ and approximate $U(0)$ under the risk-neutral measure as
\begin{equation*} \begin{aligned}
    \tilde U(0)=M_0+\frac{1}{N}\sum_{n=1}^N\max_{T_m\in\mathcal{T}_f}\left\{\frac{h_{T_m}\left(x^n_{T_m}\right)}{ B^n(T_m)}-M^n_{T_m}\right\}
\end{aligned} \end{equation*}
where the discounting term is estimated as in \ref{eqn: discount factor}. In a similar fashion, the upper bound can be approximated under the $T_M-$forward measure. In that case the risk-factor should be simulated under $\mathbb{Q}^{T_M}$ and the num\'eraire $B(t)$ be replaced by $P(t,T_M)$. By doing so we again avoid the need to approximate the num\'eraire on a coarse simulation grid.

Note that by the deliberate choice of $G_m(\cdot)$, all the conditional expectations appearing in \ref{eqn:martingale2} can be computed in closed-form (see \cref{sec: eval cond exp}). Hence, there is no need to resort to nested simulations, in contrast to for example \cite{andersen2004primal} and \cite{becker2020pricing}. Especially if simulations are performed under the $T_M-$forward measure, both lower and upper bound estimations can be obtained at minimal additional computational cost.

\section{Error analysis\label{sec: error analysis}}
In this section we analyze the errors of the semi-static hedge, the direct estimator, the lower bound estimator and the upper bound estimator, which are induced by the imprecision of the regression functions $G_0,\ldots,G_{M-1}$. We show that for a sufficiently large hedging portfolio, the replication error will be arbitrarily small. Furthermore, we will provide error margins for the price estimators in terms of the regression imprecision. We thereby show that the direct estimator, lower bound and upper bound will converge to the true option price as the accuracy of the regressions increases. Cornerstone to the subsequent theorems is the universal approximation theorem, as presented in for example \cite{hornik1989multilayer}. Given that $\tilde V$ is a continuous function on the compact set $\mathcal I_d$, it guarantees that for each $m\in \{0,\ldots,M-1\}$ there exists a neural network $G_m$ such that
\begin{equation*} \begin{aligned}
    \sup_{x\in \mathcal{I}_d} B^{-1}(T_m)\left|\tilde V\left(T_m;x\right)-G_m\left(z_m(T_m)|x\right)\right|<\varepsilon
\end{aligned} \end{equation*}
for arbitrary $\varepsilon>0$. In other words, the regression error can be kept arbitrarily small on any compact domain of the risk-factor.

\subsection{Accuracy of the semi-static hedge}
In the theorem below we prove that the semi-static hedge can reach any desired level of accuracy. Let $\mathcal{T}_f=\left\{T_0,\dots,T_{M-1}\right\}$ denote the set of monitor dates. Recall that it is assumed that the risk-factors follow a Markovian process and that the market is free of arbitrage and frictionless. For the following theorem we additionally assume that $\mathbf{x}_t\in\mathcal{I}_d$ for some compact set $\mathcal{I}_d\subset\mathbb{R}^d$. As $\mathcal{I}_d$ can be arbitrarily large, this assumption is loose enough to account for a vast majority of the risk-factor scenarios in a standard Monte Carlo sample. On top of that, $\mathcal{I}_d$ can be chosen sufficiently large such that $\mathbb{E^Q}\left[\left|\tilde V\left(T_m\right)-G_m\left(z_m\right)\right|\mathbbm{1}_{\left\{\mathbf{x}_{T_m}\not\in\mathcal{I}_d\right\}}\Big| \mathcal{F}_0\right]$ approaches zero.


\begin{theorem}
Let $\varepsilon>0$ and $|\mathcal{T}_f|=M$. Denote by $\tilde V\left(t\right) $ the value of the replication portfolio for a Bermudan swaption, conditional on the fact it is not exercised prior to time $t$. Assume that there exist $M$ networks $G_m(\cdot)$, such that
\begin{equation*} \begin{aligned}
    \sup_{x\in \mathcal{I}_d} B^{-1}\left(T_m\right) \left|\tilde V\left(T_m;x\right)-G_m(z_m(T_m)|x)\right|<\varepsilon,\qquad \forall_{m\in\{0,\ldots,M-1\}}
\end{aligned} \end{equation*}
Then for any $t\in\left[0,T_{M-1}\right]$ we have that
\begin{equation*} \begin{aligned}
    \sup_{x \in \mathcal{I}_d}B^{-1}\left(t\right)\left|V\left(t;x\right)-\tilde V\left(t;x\right)\right|<M\varepsilon
\end{aligned} \end{equation*}
\label{theorem: sup error}
\end{theorem}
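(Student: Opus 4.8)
The plan is to run a backward induction over the monitor dates $T_{M-1} > T_{M-2} > \cdots > T_0$, exploiting the fact that the portfolio value $\tilde V$ satisfies a dynamic-programming recursion parallel to the one governing $V$ in \cref{eqn: dynamic formulation}. First I would spell out the recursion for $\tilde V$: set $\tilde V(T_{M-1}) = \max\{h_{M-1}(\mathbf x_{T_{M-1}}),\,0\}$, and for $m \in \{0,\ldots,M-2\}$ put $\tilde V(T_m) = \max\{h_m(\mathbf x_{T_m}),\, \tilde C_m(T_m)\}$ on the monitor dates and $\tilde V(t) = B(t)\,\mathbb{E^Q}\bigl[G_{m+1}(z_{m+1}(T_{m+1}))/B(T_{m+1}) \mid \mathcal F_t\bigr]$ for $t \in (T_m,T_{m+1})$, where $\tilde C_m$ is the approximate continuation value of \cref{eqn: continuation1}. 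The structural observation that drives the count of $\varepsilon$'s is that $\tilde C_m$ is built from the \emph{network} $G_{m+1}$ and not from $\tilde V(T_{m+1})$ itself, so every backward step picks up one extra regression error $\varepsilon$. I would then define $\epsilon_m := \sup_{x \in \mathcal I_d} B^{-1}(T_m)\,|V(T_m;x) - \tilde V(T_m;x)|$; since $\tilde V(T_{M-1}) = V(T_{M-1})$ pointwise, the base case is $\epsilon_{M-1} = 0$.

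For the inductive step I would note that $B(T_m)$ cancels when comparing the two continuation values, so that
\begin{align*}
B^{-1}(T_m)\,\bigl|C_m(T_m) - \tilde C_m(T_m)\bigr|
&= \Bigl|\mathbb{E^Q}\!\left[ B^{-1}(T_{m+1})\bigl(V(T_{m+1}) - G_{m+1}(z_{m+1}(T_{m+1}))\bigr) \,\big|\, \mathcal F_{T_m}\right]\Bigr|\\
&\le \mathbb{E^Q}\!\left[ B^{-1}(T_{m+1})\bigl|V(T_{m+1}) - \tilde V(T_{m+1})\bigr| \,\big|\, \mathcal F_{T_m}\right] + \varepsilon \;\le\; \epsilon_{m+1} + \varepsilon,
\end{align*}
where I split $V - G_{m+1} = (V - \tilde V) + (\tilde V - G_{m+1})$, applied conditional Jensen, and used the standing assumption $\mathbf x_{T_{m+1}} \in \mathcal I_d$ almost surely so that both the hypothesis bound (on the second summand) and the definition of $\epsilon_{m+1}$ (on the first) are available \emph{inside} the conditional expectation. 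Because $a \mapsto \max\{h_m(\mathbf x_{T_m}),\,a\}$ is $1$-Lipschitz, it follows that $B^{-1}(T_m)|V(T_m) - \tilde V(T_m)| \le B^{-1}(T_m)|C_m(T_m) - \tilde C_m(T_m)| \le \epsilon_{m+1} + \varepsilon$, hence $\epsilon_m \le \epsilon_{m+1} + \varepsilon$. Unrolling from $\epsilon_{M-1} = 0$ gives $\epsilon_m \le (M-1-m)\varepsilon$, in particular $\epsilon_0 \le (M-1)\varepsilon$.

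It remains to handle a non-monitor time $t$. If $t \in (T_m,T_{m+1})$ — or $t \in [0,T_0)$, in which case one reads $T_{m+1} = T_0$ — then $V(t)$ and $\tilde V(t)$ are the $B(t)$-scaled $\mathcal F_t$-conditional expectations of $V(T_{m+1})/B(T_{m+1})$ and of $G_{m+1}(z_{m+1}(T_{m+1}))/B(T_{m+1})$ respectively, so the identical Jensen-plus-triangle-inequality estimate yields $B^{-1}(t)\,|V(t;x) - \tilde V(t;x)| \le \epsilon_{m+1} + \varepsilon$. Combining this with the monitor-date bound, for every $t \in [0,T_{M-1}]$ one obtains $\sup_{x \in \mathcal I_d} B^{-1}(t)\,|V(t;x) - \tilde V(t;x)| \le \epsilon_0 + \varepsilon \le (M-1)\varepsilon + \varepsilon = M\varepsilon$; the worst case is $t$ near $0$, which is precisely where the extra $\varepsilon$ beyond the tighter $(M-1)\varepsilon$ bound at the monitor dates arises. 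Since the hypothesis is a strict inequality on the compact set $\mathcal I_d$, all the estimates hold with slack and the final bound is strict, giving $M\varepsilon$ as claimed.

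The one place that needs genuine care — and the step I expect to be the main obstacle — is justifying that the uniform-on-$\mathcal I_d$ regression estimate survives being integrated against the conditional expectations appearing at each backward step. This is exactly where the compactness hypothesis $\mathbf x_t \in \mathcal I_d$ stated just before the theorem is indispensable, since it guarantees $\mathbf x_{T_{m+1}} \in \mathcal I_d$ almost surely so that $\sup_{x\in\mathcal I_d}$ can be pulled through $\mathbb{E^Q}[\,\cdot \mid \mathcal F_{T_m}]$; all the remaining ingredients (measurability of the integrands, finiteness of the expectations, the $1$-Lipschitz property of the $\max$) are routine and follow from the boundedness of $\mathbf x$ together with the continuity of $h_m$ and $G_m$ on $\mathcal I_d$.
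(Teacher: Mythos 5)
Your proposal is correct and follows essentially the same backward-induction argument as the paper: the same base case at $T_{M-1}$, the same split $V-G_{m+1}=(V-\tilde V)+(\tilde V-G_{m+1})$ inside the conditional expectation, and the same accounting that accumulates one extra $\varepsilon$ per backward step, with the worst case on $[0,T_0)$. The only cosmetic difference is that you invoke the $1$-Lipschitz property of $a\mapsto\max\{h_m,a\}$ where the paper spells out the four sign cases explicitly — these are equivalent.
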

\begin{proof}
We prove by induction on $m$. At the last exercise date of the Bermudan, i.e. $t=T_{M-1}$, we have $V\left(T_{M-1};x\right)=\tilde V\left(T_{M-1};x\right):=\max \left\{h_{M-1}\left(x\right),0\right\}$, representing the  final pay-off of the contract, which at $T_{M-1}$ is exactly known. Hence it should be obvious that  
\begin{equation*} \begin{aligned}
    \sup_{x\in \mathcal{I}_d} B^{-1}\left(T_{M-1}\right)\left| V\left(T_{M-1};x\right)-\tilde V(T_{M-1};x)\right|=0
\end{aligned} \end{equation*}
For the inductive step, assume that for some $T_{m+1}\in\mathcal{T}_f$, an approximation $\tilde V(T_{m+1})$ of the price is given, satisfying
\begin{equation*} \begin{aligned}
    \sup_{x \in \mathcal{I}_d}B^{-1}\left(T_{m+1}\right)\left|V\left(T_{m+1};x\right)-\tilde V\left(T_{m+1};x\right)\right|<k \varepsilon
\end{aligned} \end{equation*}
We will show that it follows that for all $t\in[T_m,T_{m+1})$
\begin{equation*} \begin{aligned}
    \sup_{x \in \mathcal{I}_d}B^{-1}\left(t\right)\left|V\left(t;x\right)-\tilde V\left(t;x\right)\right|<(k+1)\varepsilon
\end{aligned} \end{equation*}
First consider the case $t\in\left(T_{m},T_{m+1}\right)$. It follows that
\begin{equation*} \begin{aligned}
    \sup_{x \in \mathcal{I}_d}\left|\frac{V(t;x)-\tilde V(t;x)}{B\left(t\right)}\right|&=\sup_{x \in \mathcal{I}_d}\left|\frac{C_m(t;x)-\tilde C_m(t;x)}{B\left(t\right)}\right|\\
    &=\sup_{x \in \mathcal{I}_d}\left|\mathbb{E^Q}\left[\frac{V\left(T_{m+1}\right)}{B\left(T_{m+1}\right)}\bigg|\mathbf{x}_t=x\right]-\mathbb{E^Q}\left[\frac{G_{m+1}\left(z_{m+1}\right)}{B\left(T_{m+1}\right)}\bigg|\mathbf{x}_t=x\right]\right|\\
    &\leq\sup_{x \in \mathcal{I}_d}\mathbb{E^Q}\left[B^{-1}\left(T_{m+1}\right)\left|V\left(T_{m+1}\right)-G_{m+1}(z_{m+1})\right|\Big|\mathbf{x}_{t}=x\right]\\
    &=\sup_{x \in \mathcal{I}_d}\mathbb{E^Q}\left[\vphantom{\Big|}B^{-1}\left(T_{m+1}\right)\left|V\left(T_{m+1}\right)-\tilde V\left(T_{m+1}\right)\right.\right.\\
    &\qquad\qquad \left.\left.+\tilde V\left(T_{m+1}\right)-G_{m+1}(z_{m+1})\right|\Big|\mathbf{x}_{t}=x\right]\\
    &\leq\sup_{x \in \mathcal{I}_d}\left(\mathbb{E^Q}\left[\vphantom{\Big|}B^{-1}\left(T_{m+1}\right)\left|V\left(T_{m+1}\right)-\tilde V\left(T_{m+1}\right)\right|\Big|\mathbf{x}_{t}=x\right]\right.\\
    &\qquad\qquad \left.+\mathbb{E^Q}\left[B^{-1}\left(T_{m+1}\right)\left|\tilde V\left(T_{m+1}\right)-G_{m+1}(z_{m+1})\right|\Big|\mathbf{x}_{t}=x\right]\right)\\
\end{aligned} \end{equation*}
In the last expression above, the first term is bounded due to the induction hypothesis, i.e. $B^{-1}\left(T_{m+1}\right)\left|V\left(T_{m+1}\right)-\tilde V\left(T_{m+1}\right)\right|<k\varepsilon$. The second term is bounded by assumption, i.e. there exists a network $G_{m+1}(\cdot)$, such that $B^{-1}\left(T_{m+1}\right)\left|\tilde V\left(T_{m+1}\right)-G_{m+1}(z_{m+1})\right|<\varepsilon$. We hence conclude that
\begin{equation*} \begin{aligned}
    \sup_{x \in \mathcal{I}_d}B^{-1}\left(t\right)\left|V(t;x)-\tilde V(t;x)\right|<(k+1)\varepsilon,\qquad \forall_{t\in\left(T_m,T_{m+1}\right)}
\end{aligned} \end{equation*}
If on the other hand $t=T_{m}$ we have that
\begin{equation*} \begin{aligned}
    \sup_{x \in \mathcal{I}_d}\left|\frac{V(t;x)-\tilde V(t;x)}{B\left(t\right)}\right|&=\sup_{x \in \mathcal{I}_d}\left|\frac{\max\left\{C_m(t;x),\,h_m(x)\right\}-\max\left\{\tilde C_m(t;x),\,h_m(x)\right\}}{B\left(t\right)}\right|
\end{aligned} \end{equation*}
Denoting $H(x):=B^{-1}\left(t\right)\left|\max\left\{C_m(t;x),\,h_m(x)\right\}-\max\left\{\tilde C_m(t;x),\,h_m(x)\right\}\right|$ in the expression above, we can distinguish 4 cases for each $x \in \mathcal{I}_d$, which are
\begin{itemize}
    \item $C_m(t;x),\tilde C_m(t;x)>h_m(x)$, then $H(x)=B^{-1}\left(t\right)\left|C_m(t;x)-\tilde C_m(t;x)\right|<(k+1)\varepsilon$
    \item $C_m(t;x),\tilde C_m(t;x)<h_m(x)$, then $H(x)=B^{-1}\left(t\right)\left|h_m(x)- h_m(x)\right|=0<(k+1)\varepsilon$
    \item $C_m(t;x)<h_m(x)< \tilde C_m(t;x)$, then $H(x)=B^{-1}\left(t\right)\left|h_m(x)-\tilde C_m(t;x)\right|$\\$<B^{-1}\left(t\right)\left|C_m(t;x)-\tilde C_m(t;x)\right|<(k+1)\varepsilon$
    \item $\tilde C_m(t;x)<h_m(x)< C_m(t;x)$, then $H(x)=B^{-1}\left(t\right)\left|C_m(t;x)-h_m(x)\right|$\\$<B^{-1}\left(t\right)\left|C_m(t;x)-\tilde C_m(t;x)\right|<(k+1)\varepsilon$
\end{itemize}
From all the cases we can induce that
\begin{equation*} \begin{aligned}
    \sup_{x \in \mathcal{I}_d}B^{-1}\left(t\right)\left|V(t;x)-\tilde V(t;x)\right| \leq (k+1) \varepsilon
\end{aligned} \end{equation*}
We conclude that by induction on $m=M-1,\ldots,0$ that
\begin{equation*} \begin{aligned}
    \sup_{x \in \mathcal{I}_d}B^{-1}\left(t\right)\left|V\left(t;x\right)-\tilde V\left(t;x\right)\right|<M\varepsilon
\end{aligned} \end{equation*}
for all $t\in\left[0,T_{M-1}\right]$.
\end{proof}

\subsection{Error of the direct estimator}
\Cref{theorem: sup error} bounds the hedging-error of the semi-static hedge in terms of the maximum regression errors. This implicitly provides an error margin to the direct estimator under the aforementioned assumptions. Although universal approximation theorem guarantees that the supremum-errors can be kept at any desired level, in practice they are substantially higher than for example the MSEs or MAEs of the regression function. This is due to inevitable fitting imprecision outside or near the boundaries of the finite training sets. In the following theorem we propose that the error of the direct estimator can be bounded in terms of the discounted MAEs of the neural networks. These quantities are generally much tighter than the supremum-errors and are typically easier to estimate.

The proof of the theorem follows a similar line of thought as the proof of \cref{theorem: sup error}. As the direct estimator at time-zero depends on the expectation of the continuation value at $T_0$, we can show by an iterative argument that the overall error is bounded by the sum of the mean absolute fitting errors at each monitor date. The error-bound in the direct estimator therefore scales linearly with the number of exercise opportunities. For a complete proof we refer to \cref{proof: direct error}.

\begin{theorem}\label{theorem: MAE error}
Let $\varepsilon >0$ and assume $|\mathcal{T}_f|=M$. Denote by $\tilde V $ the time-zero direct estimator for the price of a Bermudan swaption $V$. Assume that for each $T_m\in \{T_0,\ldots,T_{M-1}\}$ there is a neural network approximation $G_m(\cdot)$, such that
\begin{equation*} \begin{aligned}
    \mathbb{E^Q}\left[B^{-1}\left(T_m\right)\left|\tilde V\left(T_m\right)-G_m\left(z_m\right)\right|\bigg| \mathcal{F}_0\right]<\varepsilon
\end{aligned} \end{equation*}
where $\tilde V\left(T_m\right):=\max\left\{B(T_m)\mathbb{E^Q}\left[\frac{G_{m+1}\left(z_{m+1}\right)}{B\left(T_{m+1}\right)}\bigg|\mathcal{F}_{T_m}\right], h_m\left(\mathbf{x}_{T_m}\right)\right\}$ denotes the estimator at date $T_m$. Then the error in $\tilde V$ is bounded as given below
\begin{equation*} \begin{aligned}
    \left|V(0)-\tilde V(0)\right|<M\varepsilon
\end{aligned} \end{equation*}
\end{theorem}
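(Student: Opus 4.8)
The plan is to re-run the backward induction over the monitor dates from the proof of \cref{theorem: sup error}, but to replace the supremum over $\mathcal{I}_d$ by the conditional $L^1$-quantity $\mathbb{E^Q}[\,\cdot\mid\mathcal{F}_0]$. Concretely, I would prove by downward induction on $m$ that
\[
\mathbb{E^Q}\left[B^{-1}(T_m)\left|V(T_m)-\tilde V(T_m)\right|\;\bigg|\;\mathcal{F}_0\right]<(M-1-m)\,\varepsilon,\qquad m=M-1,\ldots,0,
\]
where $\tilde V(T_m)$ is the direct-estimator value defined in the statement. The base case $m=M-1$ is immediate, since $\tilde V(T_{M-1})=\max\{h_{M-1}(\mathbf{x}_{T_{M-1}}),0\}=V(T_{M-1})$ makes the left-hand side zero.

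For the inductive step I would write $V(T_m)=\max\{C_m(T_m),h_m(\mathbf{x}_{T_m})\}$ and $\tilde V(T_m)=\max\{\tilde C_m(T_m),h_m(\mathbf{x}_{T_m})\}$, with $C_m(T_m)=B(T_m)\mathbb{E^Q}[V(T_{m+1})/B(T_{m+1})\mid\mathcal{F}_{T_m}]$ and $\tilde C_m(T_m)=B(T_m)\mathbb{E^Q}[G_{m+1}(z_{m+1})/B(T_{m+1})\mid\mathcal{F}_{T_m}]$. The $1$-Lipschitz property of $a\mapsto\max\{a,c\}$ — the same four-case argument already carried out for \cref{theorem: sup error} — gives the pointwise bound $B^{-1}(T_m)|V(T_m)-\tilde V(T_m)|\le B^{-1}(T_m)|C_m(T_m)-\tilde C_m(T_m)|$, and conditional Jensen then yields
\[
B^{-1}(T_m)\left|C_m(T_m)-\tilde C_m(T_m)\right|\le \mathbb{E^Q}\left[B^{-1}(T_{m+1})\left|V(T_{m+1})-G_{m+1}(z_{m+1})\right|\;\bigg|\;\mathcal{F}_{T_m}\right].
\]
Taking $\mathbb{E^Q}[\,\cdot\mid\mathcal{F}_0]$ of both sides, using the tower property ($\mathcal{F}_0\subset\mathcal{F}_{T_m}$), and then inserting $\pm\tilde V(T_{m+1})$ and applying the triangle inequality, splits the right-hand side into the induction-hypothesis term, bounded by $(M-2-m)\varepsilon$, and the regression term $\mathbb{E^Q}[B^{-1}(T_{m+1})|\tilde V(T_{m+1})-G_{m+1}(z_{m+1})|\mid\mathcal{F}_0]$, bounded by $\varepsilon$ by hypothesis; this gives the asserted bound $(M-1-m)\varepsilon$ at $T_m$.

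Running the induction down to $m=0$ then gives the conditional $L^1$-bound $(M-1)\varepsilon$ at $T_0$. To finish, since $V(0)=\mathbb{E^Q}[V(T_0)/B(T_0)\mid\mathcal{F}_0]$ and $\tilde V(0)=\mathbb{E^Q}[G_0(z_0)/B(T_0)\mid\mathcal{F}_0]$, one more application of Jensen together with the triangle inequality (adding and subtracting $\tilde V(T_0)$ and invoking the hypothesis at $m=0$) yields $|V(0)-\tilde V(0)|<(M-1)\varepsilon+\varepsilon=M\varepsilon$. I do not expect any genuine obstacle here: every step — the contraction property of $\max$, conditional Jensen, the tower property, the triangle inequality — is elementary, and the statement is just the $L^1$ analogue of \cref{theorem: sup error}. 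The only things requiring care are the index bookkeeping, so that the $M-1$ monitor-date steps plus the final discounting step to time zero telescope to exactly $M\varepsilon$, and taking each conditional expectation with respect to the correct filtration before applying the tower property; note that compactness of $\mathcal{I}_d$ plays no role, since all errors are now measured in expectation rather than in sup-norm.
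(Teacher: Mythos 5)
Your proposal is correct and follows essentially the same argument as the paper's proof: the $1$-Lipschitz property of $a\mapsto\max\{a,c\}$, conditional Jensen, the tower property, and the insertion of $\pm\tilde V(T_{m+1})$ with the triangle inequality, telescoping to $M\varepsilon$. The only difference is cosmetic — you state the inductive invariant as an explicit $L^1$ bound at each monitor date and induct downward, whereas the paper inducts forward and accumulates the $m\varepsilon$ terms directly into the time-zero error.
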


\subsection{Tightness of the lower bound estimate}
A lower bound $L(t)$ to the true price can be computed by considering the non-optimal exercise strategy, implied by the direct estimator (see \cref{sec: lower bound}). This relies on the stopping time
\begin{equation} \begin{aligned}
    \tilde\tau(\omega)=\min\left\{T_m\in\mathcal{T}_f\big|\tilde C_m\left(T_m\right)\leq h_m\left(\mathbf{x}_{T_m}\right)\right\}\label{eqn: stopping time lower bound}
\end{aligned} \end{equation}
In the following theorem we propose that the tightness of $L(0)$ can be bounded by the discounted MAEs of neural network approximations.

The proof of the theorem relies on the fact that conditioned on any realization of $\tilde\tau$ and $\tau$, the expected difference between $L(0)$ and $V(0)$ is bounded by the sum of the mean absolute fitting errors at the monitor dates between $\tilde\tau$ and $\tau$. In the proof we therefore distinguish between the events $\tilde\tau<\tau$ and $\tilde\tau>\tau$. Then, by an inductive argument we can show that the bound on the spread between $L(0)$ and the true price scales linearly with the number of exercise opportunities. For a complete proof we refer to \cref{proof: lower bound}.

\begin{theorem}\label{theorem: lower bound}
Let $\varepsilon >0$ and assume $|\mathcal{T}_f|=M$. Denote by $L(0)$ the lower bound on the true Bermudan swaption price as defined in \ref{eqn: lower bound}. Assume that for each $T_m\in \{T_0,\ldots,T_{M-1}\}$ there is a neural network approximation $G_m(\cdot)$, such that
\begin{equation*} \begin{aligned}
    \mathbb{E^Q}\left[B^{-1}\left(T_m\right)\left|\tilde V\left(T_m\right)-G_m\left(z_m\right)\right|\bigg| \mathcal{F}_0\right]<\varepsilon
\end{aligned} \end{equation*}
where $\tilde V\left(T_m\right):=\max\left\{B(T_m)\mathbb{E^Q}\left[\frac{G_{m+1}\left(z_{m+1}\right)}{B\left(T_{m+1}\right)}\Big|\mathcal{F}_{T_m}\right], h_m\left(\mathbf{x}_{T_m}\right)\right\}$ denotes the estimator at date $T_m$. Then the spread between $V(0)$ and $L(0)$ is bounded as given below
\begin{equation*} \begin{aligned}
    \left|V(0)-L(0)\right|<2(M-1)\varepsilon
\end{aligned} \end{equation*}
\end{theorem}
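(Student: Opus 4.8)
The plan is to control the one-sided gap $V(0)-L(0)$, which is nonnegative because $\tilde\tau$ is an admissible stopping time, so that $L(0)=\mathbb{E^Q}[h_{\tilde\tau}(\mathbf{x}_{\tilde\tau})/B(\tilde\tau)\mid\mathcal{F}_0]\le V(0)$. I would introduce the auxiliary quantity $\widehat V_0:=\mathbb{E^Q}[\tilde V(T_0)/B(T_0)\mid\mathcal{F}_0]$, the time-zero value obtained by discounting the direct-estimator value at the first monitor date (this expression uses the networks $G_1,\dots,G_{M-1}$ but, unlike the direct estimator $\tilde V(0)=\mathbb{E^Q}[G_0(z_0)/B(T_0)\mid\mathcal{F}_0]$ itself, not $G_0$), and then bound $|V(0)-L(0)|\le|V(0)-\widehat V_0|+|\widehat V_0-L(0)|$. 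Splitting at $\widehat V_0$ rather than at $\tilde V(0)$ is what makes each of the two terms carry only $M-1$ fitting errors and hence produces the constant $2(M-1)$.

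For the first term I would repeat, with one fewer step, the backward induction underlying \cref{theorem: MAE error}. Setting $\Delta_m:=\mathbb{E^Q}[B^{-1}(T_m)\,|V(T_m)-\tilde V(T_m)|\mid\mathcal{F}_0]$, one has $\Delta_{M-1}=0$ since $V(T_{M-1})=\tilde V(T_{M-1})=\max\{h_{M-1},0\}$; and from the Lipschitz property $|\max\{h_m,C_m\}-\max\{h_m,\tilde C_m\}|\le|C_m-\tilde C_m|$ together with $|V(T_{m+1})-G_{m+1}(z_{m+1})|\le|V(T_{m+1})-\tilde V(T_{m+1})|+|\tilde V(T_{m+1})-G_{m+1}(z_{m+1})|$ inserted in the conditional expectation that defines $C_m-\tilde C_m$, the recursion $\Delta_m\le\Delta_{m+1}+\varepsilon$ follows. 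Hence $\Delta_0<(M-1)\varepsilon$, and $|V(0)-\widehat V_0|\le\Delta_0$.

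For the second term the key observation is that $\tilde V(\cdot)$ is precisely the value process of the strategy $\tilde\tau$ with the \emph{true} continuation values replaced by the priced portfolios $\tilde C_m=B(T_m)\mathbb{E^Q}[G_{m+1}(z_{m+1})/B(T_{m+1})\mid\mathcal{F}_{T_m}]$: both $\tilde V$ and the strategy value $L$ use the same exercise rule $\{\tilde C_m\le h_m\}$, so no ``wrong decision'' term arises, only accumulated fitting errors. Writing $\Psi_m:=\mathbb{E^Q}[B^{-1}(T_m)\,|\tilde V(T_m)-L(T_m)|\mid\mathcal{F}_0]$, one has $\tilde V(T_m)=L(T_m)$ on $\{\tilde C_m\le h_m\}$ while on $\{\tilde C_m>h_m\}$ one has $|\tilde V(T_m)-L(T_m)|\le B(T_m)\mathbb{E^Q}[B^{-1}(T_{m+1})(|G_{m+1}(z_{m+1})-\tilde V(T_{m+1})|+|\tilde V(T_{m+1})-L(T_{m+1})|)\mid\mathcal{F}_{T_m}]$, giving again $\Psi_m\le\Psi_{m+1}+\varepsilon$ with $\Psi_{M-1}=0$, hence $\Psi_0<(M-1)\varepsilon$ and $|\widehat V_0-L(0)|\le\Psi_0$. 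Combining, $|V(0)-L(0)|<2(M-1)\varepsilon$. An equivalent but more hands-on route is to compare $V$ and $L$ directly and condition on the realizations of both $\tilde\tau$ and $\tau$: on a path one distinguishes $\tilde\tau<\tau$, where the approximate policy exercises too early — which, since $\tilde\tau$ stops exactly once, happens at a single date — from $\tilde\tau>\tau$, where it exercises too late, and in either case the conditional gap is bounded by the discounted fitting errors at the monitor dates lying between $\tilde\tau$ and $\tau$.

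The step I expect to be the main obstacle is keeping the constant linear in $M$, and in fact equal to $2(M-1)$. If one compares $V$ with $L$ directly and, at each date where the two policies disagree, bounds the continuation-value discrepancy $|C_m(T_m)-\tilde C_m(T_m)|$ by a quantity of the form ``$\Delta_{m+1}+\varepsilon$'' before summing over the $M$ monitor dates, the resulting estimate is of order $M^2\varepsilon$; the linear rate requires carrying the propagated value error $|V(T_{m+1})-\tilde V(T_{m+1})|$ forward through the same induction instead of re-expanding it, which is exactly what factoring through $\widehat V_0$ does. One must also verify the boundary behaviour at $T_{M-1}$ (where $\tilde C_{M-1}=0$ and $V(T_{M-1})=\tilde V(T_{M-1})=L(T_{M-1})=\max\{h_{M-1},0\}$, so both inductions start from $0$) and that the continuation term appearing in the ``too late'' case does not inflate the induction constant. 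The full argument is carried out in \cref{proof: lower bound}.
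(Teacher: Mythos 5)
Your argument is correct, and it reaches the stated bound $2(M-1)\varepsilon$ by a genuinely different decomposition than the one in \cref{proof: lower bound}. The paper compares $V(0)$ and $L(0)$ directly and splits over the events $\{\tau=\tilde\tau\}$, $\{\tau<\tilde\tau\}$ and $\{\tau>\tilde\tau\}$ (your ``more hands-on route''), telescoping over the sub-events $A_m=\{\tau=T_m\wedge\tilde\tau>T_m\}$ and $B_m=\{\tau\le T_m\wedge\tilde\tau>T_m\}$ to get $(M-1)\varepsilon$ from each of the early- and late-exercise contributions. You instead insert the pivot $\widehat V_0=\mathbb{E^Q}\left[\tilde V(T_0)/B(T_0)\,\middle|\,\mathcal{F}_0\right]$ and run two backward inductions: $\Delta_m\le\Delta_{m+1}+\varepsilon$ with $\Delta_{M-1}=0$ for the gap between the true value process and the direct-estimator process (this is exactly the engine of \cref{theorem: MAE error}, started one step later so that the $G_0$ fitting error never enters), and $\Psi_m\le\Psi_{m+1}+\varepsilon$ with $\Psi_{M-1}=0$ for the gap between the direct-estimator process and the realized value of the strategy $\tilde\tau$; the key point that the latter recursion closes is, as you say, that $\tilde V$ and the strategy value agree identically on $\{\tilde C_m\le h_m\}$ and share the continuation operator with $G_{m+1}$ replaced by the strategy value on the complement, so no exercise-decision error term appears. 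Both routes yield the same constant; yours is the more modular (it literally reuses the direct-estimator error analysis and avoids all bookkeeping over the joint law of $\tau$ and $\tilde\tau$), while the paper's event decomposition makes visible \emph{where} the loss accrues along a path (exercising too early versus too late). Your diagnosis of why the constant is $2(M-1)$ rather than $2M$ or $O(M^2)$ — split at $\widehat V_0$ rather than $\tilde V(0)$, and propagate the accumulated value error through the induction instead of re-expanding the continuation-value discrepancy at every date — is exactly right.
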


\subsection{Tightness of the upper bound estimate}
An upper bound $U(t)$ to the true price can be computed by considering a dual formulation of the dynamic pricing equation \cite{haugh2004pricing}, see \cref{sec: Upper bound}. From a practical point of view, the difference between the upper bound and the true price can be interpreted as the maximum loss an investor would incur due to hedging imprecision resulting from the algorithm \cite{lokeshwar2019neural}. The overall hedging error at some monitor date $T_m$ is the result of all incremental hedging errors occurring from rebalancing the portfolio at preceding monitor dates. As the incremental hedging errors can be bounded by the sum of the expected absolute fitting errors, we propose that the tightness of $U(t)$ can be bounded by the discounted MAEs of the neural networks and scales at most quadratically with the number of exercise opportunities.

The proof follows a similar line of thought as presented in \cite{andersen2004primal}. There it is noted that the difference between the dual formulation of the option and its true price is difficult to bound. Here we make a similar remark and propose a  theoretical maximum spread between $U(0)$ and $V(0)$ that is relatively loose. Our numerical experiments however indicate that the upper bound estimate is much tighter in practice. For a complete proof we refer to \cref{proof: upper bound}.
\begin{theorem}\label{theorem: upper bound}
Let $\varepsilon >0$ and assume $|\mathcal{T}_f|=M$. Denote by $U(0)$ the upper bound on the true Bermudan swaption price as defined in \ref{eqn: martingale ineq}. Assume that for each $T_m\in \{T_0,\ldots,T_{M-1}\}$ there is a neural network approximation $G_m(\cdot)$, such that
\begin{equation*} \begin{aligned}
    \mathbb{E^Q}\left[B^{-1}\left(T_m\right)\left|\tilde V\left(T_m\right)-G_m\left(z_m\right)\right|\bigg| \mathcal{F}_0\right]<\varepsilon
\end{aligned} \end{equation*}
where $\tilde V\left(T_m\right):=\max\left\{B(T_m)\mathbb{E^Q}\left[\frac{G_{m+1}\left(z_{m+1}\right)}{B\left(T_{m+1}\right)}\Big|\mathcal{F}_{T_m}\right], h_m\left(\mathbf{x}_{T_m}\right)\right\}$ denotes the estimator at date $T_m$. Then the spread between $V(0)$ and $U(0)$ is bounded as given below
\begin{equation*} \begin{aligned}
    \left|U(0)-V(0)\right|<M(M-1)\varepsilon
\end{aligned} \end{equation*}
\end{theorem}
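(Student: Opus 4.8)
The plan is to bound the spread $|U(0)-V(0)|$ by first decomposing it into contributions from the martingale $M_t$ and the true Doob–Meyer martingale $Y_t$. Since $V(0)/B(0) = Y_0$ and, by the martingale inequality \ref{eqn: martingale ineq}, we have $U(0) \geq V(0)$, it suffices to upper-bound $U(0) - V(0)$. I would start from the observation made in the text that if the regression errors vanish then $M_t = Y_t$ and the bound is tight; the proof then consists of propagating the fitting errors $\varepsilon$ through the recursive definition \ref{eqn:martingale2} of $M_{T_m}$. Concretely, I would write $U(0) - V(0) = \mathbb{E}^{\mathbb Q}\!\left[\max_{T_m \in \mathcal T_f}\left\{\frac{h_m(\mathbf x_{T_m})}{B(T_m)} - M_{T_m}\right\} - \max_{T_m \in \mathcal T_f}\left\{\frac{h_m(\mathbf x_{T_m})}{B(T_m)} - Y_{T_m}\right\}\Big|\mathcal F_0\right]$, using that $M_0 + \mathbb{E}^{\mathbb Q}[\max_m\{\cdots - Y_{T_m}\}] = V(0) + (M_0 - Y_0)$ and that $M_0 - Y_0$ itself is an error term. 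Since $\max_m a_m - \max_m b_m \leq \max_m (a_m - b_m) \leq \max_m |a_m - b_m|$, this reduces to controlling $\max_{T_m}|M_{T_m} - Y_{T_m}|$ plus $|M_0 - Y_0|$.

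The core estimate is therefore a bound on $|M_{T_m} - Y_{T_m}|$ for each $m$. Here I would use the explicit telescoping form \ref{eqn:martingale2} together with the analogous Doob–Meyer representation of $Y_{T_m}$, which from the definition $\frac{V(t)}{B(t)} = Y_t + Z_t$ and the dynamic programming equation can be written in an incremental form mirroring \ref{eqn:martingale1}, namely $Y_{T_m} = Y_{T_{m-1}} + \frac{\tilde V(T_m)}{B(T_m)} - \mathbb{E}^{\mathbb Q}\!\left[\frac{\tilde V(T_m)}{B(T_m)}\big|\mathcal F_{T_{m-1}}\right]$ once one identifies the true continuation value with the conditional expectation of $\tilde V(T_{m+1})/B(T_{m+1})$ (invoking Theorem~\ref{theorem: sup error} / the structure used in Theorem~\ref{theorem: MAE error} so that $\tilde V$ genuinely tracks $V$). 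Subtracting the two telescoping sums, each increment contributes a term of the form $\left(\frac{G_j(z_j)}{B(T_j)} - \frac{\tilde V(T_j)}{B(T_j)}\right) - \mathbb{E}^{\mathbb Q}\!\left[\frac{G_j(z_j)}{B(T_j)} - \frac{\tilde V(T_j)}{B(T_j)}\big|\mathcal F_{T_{j-1}}\right]$, whose absolute value is bounded — after taking the relevant conditional expectation and using the triangle inequality — by $2$ times the $j$-th discounted MAE, i.e. by $2\varepsilon$. Summing the first $m$ such increments gives $\mathbb{E}^{\mathbb Q}[|M_{T_m} - Y_{T_m}|\,|\mathcal F_0] < 2m\varepsilon$, and an additional $\varepsilon$ accounts for the $M_0 - Y_0$ discrepancy, though one must be careful that the crude "absolute value then sum" step is what produces the factor $2$ rather than $1$.

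Assembling the pieces: the outer maximum over the $M$ monitor dates $T_0,\dots,T_{M-1}$, combined with the per-date bound growing like $2m\varepsilon$ (with $m$ ranging up to $M-1$), together with the slack in replacing $\max - \max$ by $\max|\cdot|$ and in bounding an expectation of a maximum by a maximum of expectations, is what yields the stated $M(M-1)\varepsilon$ — roughly $M$ monitor dates each carrying an error that can be as large as order $(M-1)\varepsilon$. I would present this as: $|U(0) - V(0)| \leq \mathbb{E}^{\mathbb Q}\!\left[\max_{m}|M_{T_m} - Y_{T_m}|\,\big|\mathcal F_0\right] + |M_0 - Y_0| \leq \sum_{m=0}^{M-1}\mathbb{E}^{\mathbb Q}[|M_{T_m}-Y_{T_m}|\,|\mathcal F_0] + \varepsilon$ or a sharper version that tracks the $m$-dependence, then bound each summand by $2m\varepsilon$ and sum.

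The main obstacle I anticipate is the step where the expectation of a maximum is handled: $\mathbb{E}^{\mathbb Q}[\max_m X_m] \neq \max_m \mathbb{E}^{\mathbb Q}[X_m]$ in general, so one cannot simply interchange them, and the honest route is to bound $\max_m X_m \leq \sum_m |X_m|$ (or $\max_m X_m \leq \max_m \mathbb{E}^{\mathbb Q}[|M_{T_m}-Y_{T_m}| \mid \mathcal F_{T_m}]$-type conditional arguments), which is exactly the lossy step the authors flag as making the bound "relatively loose". Getting the bookkeeping of this step right — so that the accumulated constant comes out to precisely $M(M-1)$ and not, say, $2M(M-1)$ or $M^2$ — will require care about whether the factor-of-two from the "increment minus its conditional expectation" is already absorbed, and about the off-by-one in the number of monitor dates versus the number of increments. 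A secondary subtlety is justifying that the true martingale $Y_t$ admits the clean incremental representation I want; this leans on the dynamic programming equation \ref{eqn: dynamic formulation} and on $Z_t$ being predictable and decreasing so that the increments of $Z$ do not contribute positively, which is where the supermartingale property of the discounted price is used.
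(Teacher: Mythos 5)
There is a genuine gap, and it sits in your core estimate. You claim that each increment of $M_{T_j}-Y_{T_j}$ contributes (in expectation) at most $2\varepsilon$, by writing the increments of $Y$ in terms of $\tilde V(T_j)$ and then invoking the hypothesis $\mathbb{E^Q}\bigl[B^{-1}(T_j)\,|\tilde V(T_j)-G_j(z_j)|\,\big|\,\mathcal F_0\bigr]<\varepsilon$. But the Doob--Meyer martingale $Y$ of the discounted \emph{true} price process has increments $\frac{V(T_j)}{B(T_j)}-\mathbb{E^Q}\bigl[\frac{V(T_j)}{B(T_j)}\big|\mathcal F_{T_{j-1}}\bigr]$, with the true $V$, not the estimator $\tilde V$; the ``identification'' of the true continuation value with $\mathbb{E^Q}[\tilde V(T_{j+1})/B(T_{j+1})|\cdot]$ that you lean on is precisely what fails by an accumulated error. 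The relevant quantity is $e_{T_j}:=B^{-1}(T_j)\bigl(V(T_j)-G_j(z_j)\bigr)$, and by the inductive argument behind \cref{theorem: MAE error} one only has $\mathbb{E^Q}[|e_{T_j}|\,|\mathcal F_0]<(M-j)\varepsilon$, not $<\varepsilon$. So each increment costs $2(M-j)\varepsilon$, not $2\varepsilon$. Your arithmetic $\sum_{m=0}^{M-1}2m\varepsilon=M(M-1)\varepsilon$ lands on the right constant only because of this incorrect per-increment bound; once corrected, your bookkeeping --- bounding $\mathbb{E^Q}[\max_m|M_{T_m}-Y_{T_m}|]$ by $\sum_m\mathbb{E^Q}[|M_{T_m}-Y_{T_m}|]$ and each of those by a sum over $j\le m$ of increments --- is a double sum and produces a bound cubic in $M$, which overshoots the claim.

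The paper's proof avoids both problems. It never invokes the duality equality $V(0)=Y_0+\mathbb{E^Q}[\max_m\{h_m/B(T_m)-Y_{T_m}\}]$; instead it substitutes $M_{T_m}=Y_{T_m}-e_{T_0}-\sum_{j=1}^m\bigl(e_{T_j}-\mathbb{E^Q}[e_{T_j}|\mathcal F_{T_{j-1}}]\bigr)$ into the definition of $U(0)$ and discards the terms $h_m/B(T_m)-Y_{T_m}\le Z_{T_m}\le 0$ using only the supermartingale property. What remains is a \emph{single} telescoping sum inside the max, which is bounded once by $\sum_{j=1}^{M-1}\bigl(|e_{T_j}|+|\mathbb{E^Q}[e_{T_j}|\mathcal F_{T_{j-1}}]|\bigr)\le 2\sum_{j=1}^{M-1}|e_{T_j}|$ (up to the conditional expectation), and then $\mathbb{E^Q}[|e_{T_j}|\,|\mathcal F_0]<(M-j)\varepsilon$ gives $2\sum_{j=1}^{M-1}(M-j)\varepsilon=M(M-1)\varepsilon$ exactly. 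To repair your argument you would need to (i) replace the per-increment $2\varepsilon$ by $2(M-j)\varepsilon$ via \cref{theorem: MAE error} applied at the intermediate dates, and (ii) replace the double sum over $m$ and $j$ by the single sum over $j$ that the telescoping structure affords.
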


\section{Numerical experiments\label{sec: numerical examples}}
In this section we treat several numerical examples to illustrate the convergence, pricing and hedging performance of our proposed method. We will start by considering the price estimate of a vanilla swaption contract in a 1-factor model. This is a toy-example by which we can demonstrate the accuracy of the direct estimator in comparison to exact benchmarks. We continue with price estimates of Bermudan swaption contracts in a 1-factor and a 2-factor framework. The performance of the direct estimator will be compared to the established least-square regression method (LSM) introduced in \cite{longstaff2001valuing}, fine-tuned to an interest rate setting as described in \cite{feng2016efficient}. Additionally we will approximate the lower- and upper bound estimates as described in \cref{bounds theory} and show that they are well inside the error-margins introduced in \cref{sec: error analysis}. Finally we will illustrate the performance of the static hedge for a swaption in a 1-factor model and a Bermudan swaption in a 2-factor model. For the 1-factor case we can benchmark the performance by the analytic delta-hedge for a swaption, provided in \cite{henrard2003explicit}.

A $T_0\times T_M$ contract (either European swaption or Bermudan swaption) refers to an option written on a swap with a notional amount of 100 and a lifetime between $T_0$ and $T_M$. This means that $T_0$ and $T_{M-1}$ are the first and last monitor date respectively in case of a Bermudan. The underlying swaps are set to exchange annual payments, yielding year-fractions of 1 and annual exercise opportunities. All examples that are illustrated here have been implemented in python, using the Quant-Lib library \cite{Ame2003} for standard pricing routines and Keras with Tensorflow backend \cite{chollet2015keras} for constructing, fitting and evaluating the neural networks.

\subsection{1-factor swaption}
We start by considering a swaption contract under a one dimensional risk-factor setting. The direct estimator of the true $V(0)$ swaption price is computed similar to a Bermudan swaption, but with only a single exercise possibility at $T_0$. Therefore only a single neural network per option needs to be trained to compute the option price. We have used 64 hidden nodes and 20,000 training-points, generated through Monte Carlo sampling. We assume the risk-factor to be captured by the Hull-White model with constant mean-reversion parameter $a$ and constant volatility $\sigma$. The dynamics of the shifted mean-zero process \cite{brigo2007interest} are hence given by
\begin{equation} \begin{aligned}
    dx(t)=-ax(t)dt+\sigma dW(t),\qquad x(0)=0\label{eqn: hull-white}
\end{aligned} \end{equation}
For simplicity we consider a flat time-zero instantaneous forward rate $f(0,t)$. The risk-neutral scenarios are generated using a discrete Euler scheme of the process above. Parameter values that were used in the numerical experiments are summarized in \cref{table:1F parameters}.
\begin{table}
\centering
\begin{tabular}{ |c|c|c|c| } 
 \hline
 \textbf{Parameter} & $a$ & $\sigma$ & $f(0,t)$ \\ 
 \hline
 \textbf{Value} & 0.01 & 0.01 & 0.03 \\ 
 \hline
\end{tabular}
\caption{Parameters 1F Hull-White model}
\label{table:1F parameters}
\end{table}

Figures \ref{fig:swaption result 5y10y} and \ref{fig:swaption result 10y5y} show the time-zero option values in basis points (0.01\%) of the notional for a $5Y\times10Y$ and a $10Y\times5Y$ payer swaption as a function of the moneyness. The moneyness is defined as $\frac{S}{K}$, where $K$ denotes the fixed strike and $S$ the time-zero swap rate associated with the underlying swap. The exact benchmarks are computed by an application of Jamshidian's decomposition \cite{jamshidian1989exact}. The relative estimate errors are shown in Figures \ref{fig:swaption error 5y10y} and \ref{fig:swaption error 10y5y}. We observe a close agreement between the estimates and the reference prices. The errors are in the order of several basis points of the true option price. In the current setting the results presented serve mostly as a validation of the estimator. We however point out that this algorithm for swaptions is applicable in general frameworks, such as multi-factor, dual-curve or non-overlapping payment schemes, for which exact routines are no longer available.

\begin{figure}
\centering
\begin{subfigure}{.5\textwidth}
  \centering
  \includegraphics[width=0.93\linewidth]{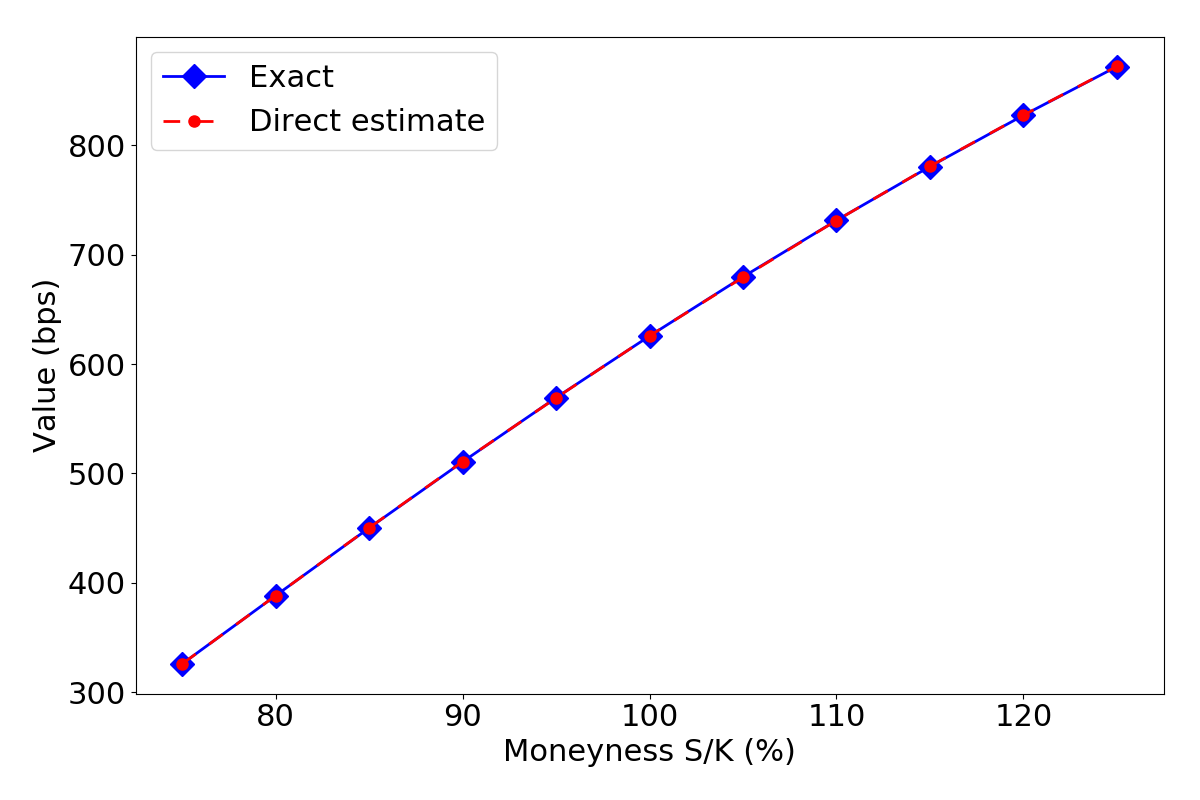}
  \caption{5Yx10Y swaption valuation}
  \label{fig:swaption result 5y10y}
\end{subfigure}%
\begin{subfigure}{.5\textwidth}
  \centering
  \includegraphics[width=0.93\linewidth]{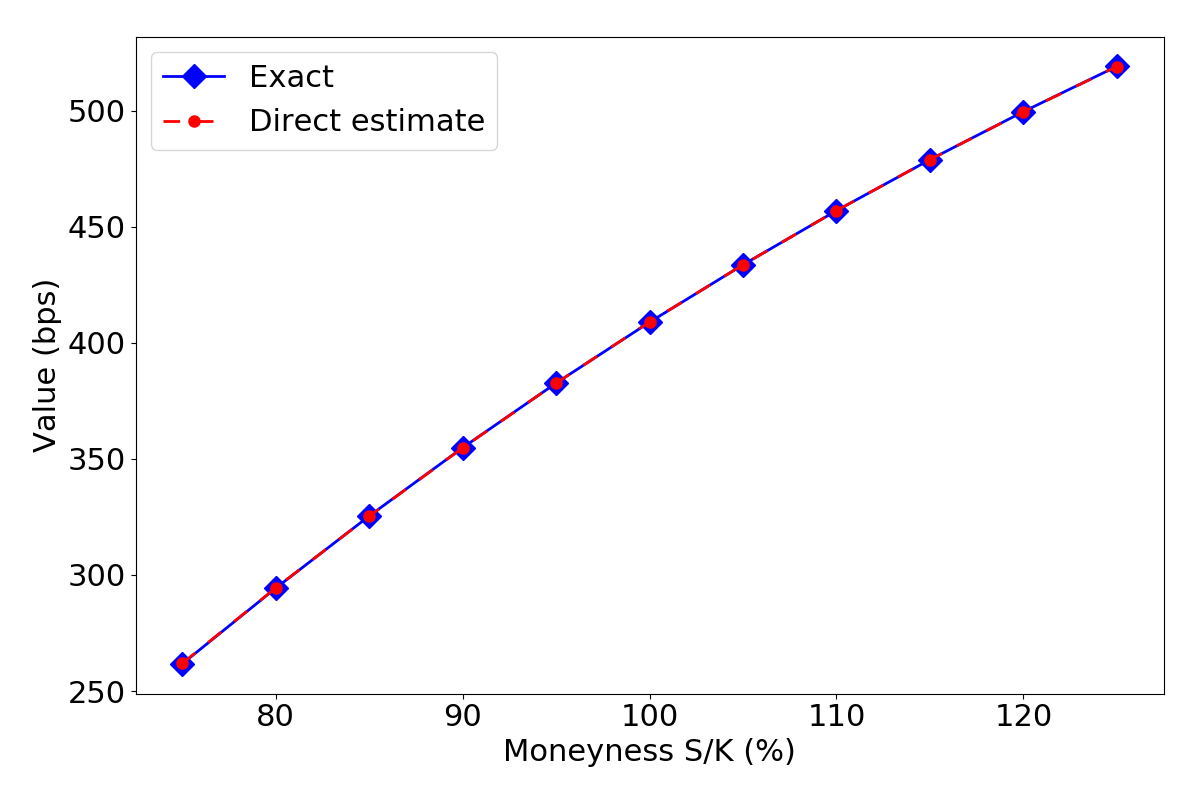}
  \caption{10Yx5Y swaption valuation}
  \label{fig:swaption result 10y5y}
\end{subfigure}\\
\begin{subfigure}{.5\textwidth}
  \centering
  \includegraphics[width=0.95\linewidth]{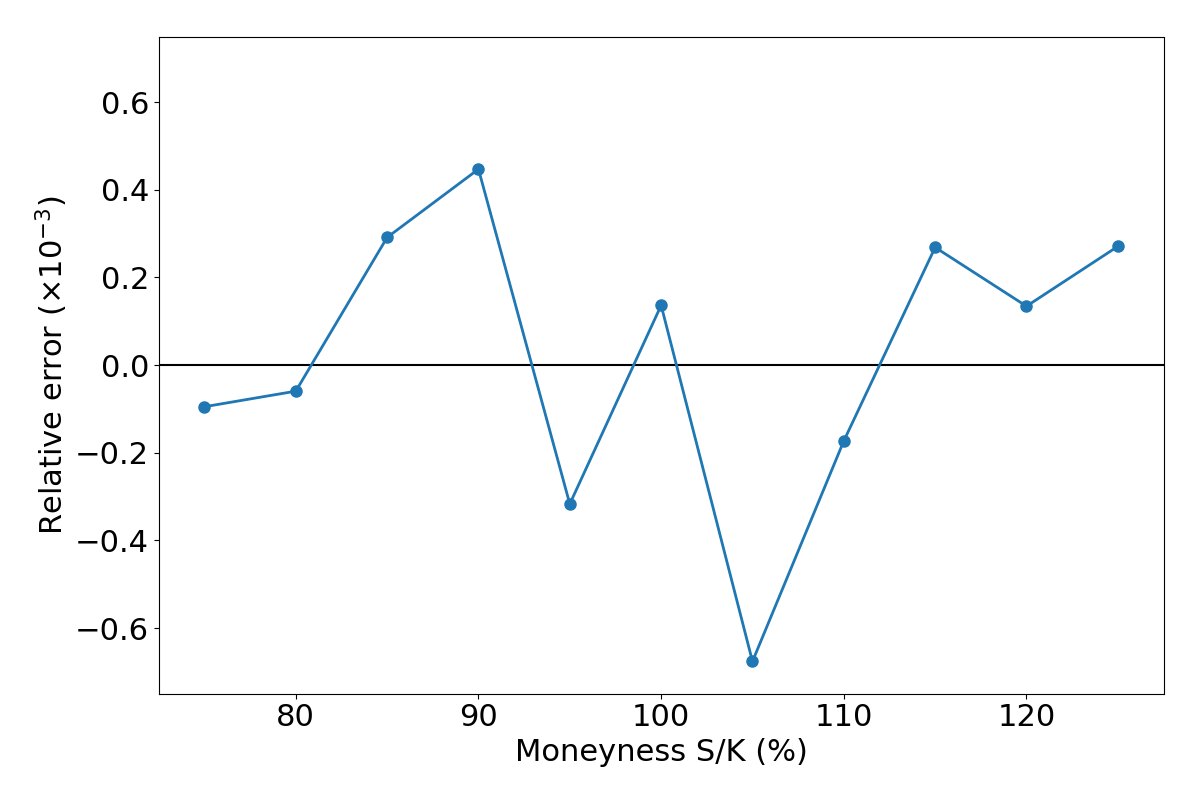}
  \caption{Error 5Yx10Y swaption estimates}
  \label{fig:swaption error 5y10y}
\end{subfigure}%
\begin{subfigure}{.5\textwidth}
  \centering
  \includegraphics[width=0.95\linewidth]{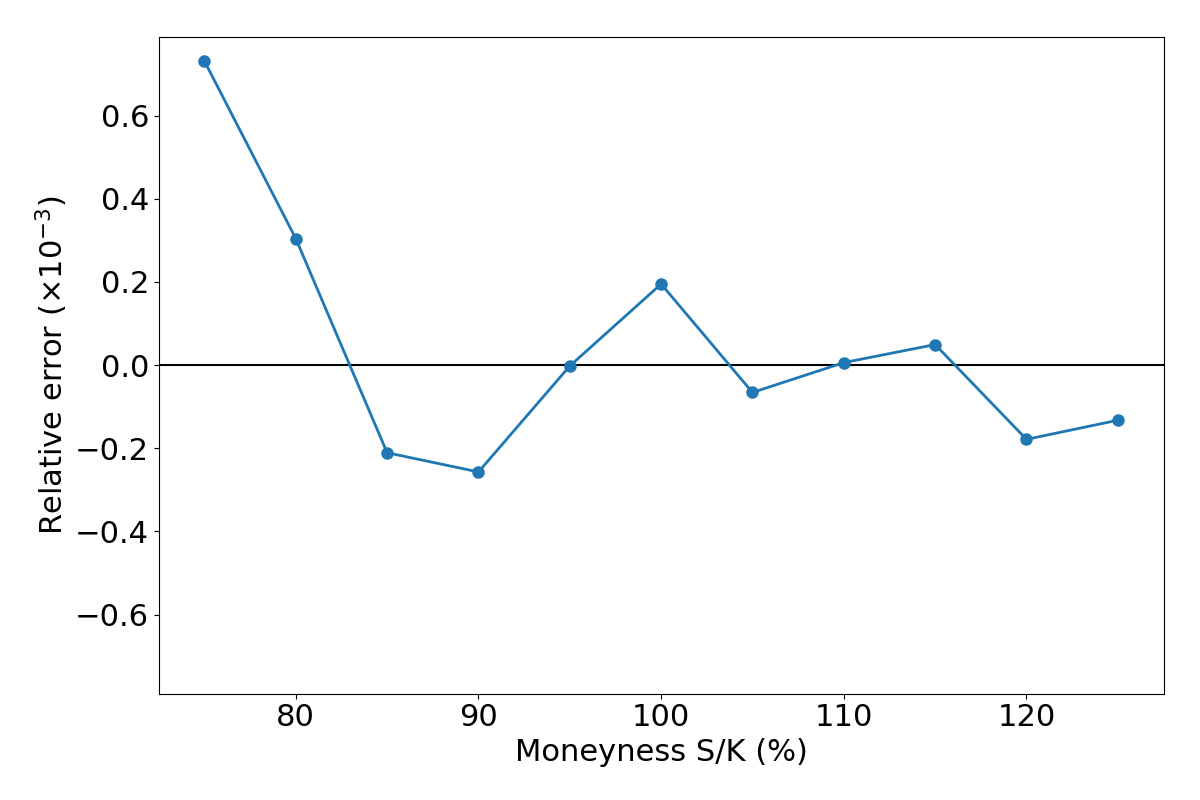}
  \caption{Error 10Yx5Y swaption estimates}
  \label{fig:swaption error 10y5y}
\end{subfigure}
\caption{Accuracy of the direct estimator for vanilla swaptions. $S_{5Y\times10Y}\approx S_{10Y\times5Y}\approx0.0305$.}
\label{fig: swaption results}
\end{figure}

\subsection{1-factor Bermudan swaption\label{sec Num ex swaption}}
As a second example we consider a Bermudan swaption contract. The same dynamics for the underlying risk-factor are assumed as discussed in the previous paragraph, using the parameter settings of \cref{table:1F parameters}. Monte Carlo scenarios are generated based on a discretised Euler scheme associated to the SDE in \ref{eqn: hull-white}, taking weekly time-steps.

We first demonstrate the convergence property of the direct estimator, that is implied by the replication portfolio. We consider a $1Y\times 5Y$ Bermudan swaption with strike $K=0.03$. For this analysis, the neural networks were trained to a set of 2000 Monte Carlo generated training points. \cref{fig: 1F convergence} shows the direct estimator as a function of the number of hidden nodes in each neural network, alongside an LSM-based benchmark. In \cref{fig: 1F convergence error} the error with respect to the LSM estimate is shown on a logscale. We observe that the direct estimator converges to the LSM confidence interval or slightly above, which is in accordance with the fact that LSM is biased low by definition. The analysis indicates that a portfolio of 16 discount bond options is sufficient to achieve a replication of similar accuracy as the LSM benchmark.

\Cref{table:1F results} depicts numerical pricing results for a $1Y\times 5Y$, $3Y\times7Y$ and $1Y\times10Y$ receiver Bermudan swaption. For each contract we consider different levels of moneyness, setting the fixed rate $K$ of the underlying swap to respectively 80\%, 100\% and 120\% of the time-zero swap rate. The estimations of the direct, the upper bound and the lower bound statistics are again reported alongside LSM-based benchmarks. Here, the neural networks have 64 hidden nodes and are fitted using a training set of 20,000 points. The lower and upper bound estimates, as well as the LSM estimates, are based on simulation runs of 200,000 paths each. The given lower and upper bounds are Monte Carlo estimates of the statistics defined in \ref{eqn: lower bound} and \ref{eqn: martingale ineq} and are therefore subject to standard errors, which are reported in parentheses. The reference LSM results have been generated using $\left\{1,x,x^2\right\}$ as regression basis functions for approximating the continuation values. The standard errors and confidence intervals are obtained from ten independent Monte Carlo runs.  The choice for hyperparameter settings is motivated by the analysis of \cref{sec: hyper-parameters}.

The spreads between the lower and upper bound estimates provide a good indication of the accuracy of the method. For the current setting we obtain spreads in the order of several basis points up a few dozen of basis points. The lower bound estimate is typically very close to the LSM estimate, which itself is also biased low. Their standard errors are of the same order of magnitude. The upper bound estimates prove to be very stable and show a variance that is roughly two orders of magnitude smaller compared to that of the lower bound. The direct estimate is occasionally slightly less accurate. This can be explained by the fact that it depends on the accuracy of the regression over the full domain of the risk-factor, whereas for the lower bound only a high accuracy near the exercise boundaries is required. \Cref{fig: Fitting errors 1F} presents the mean absolute error of each neural network after fitting as a function of the network's index. The errors are displayed in basis points of the notional. We observe that the errors are the smallest at maturity and tend to increase with each iteration backward in time. That the errors at the final monitor date are virtually zero can be explained by the fact that the pay-off at $T_{M-1}$ is given by
\begin{equation*} \begin{aligned}
    \max\left\{h_{M-1}(\mathbf{x}_{T_{M-1}}),\,0\right\}&=N \cdot \max\left\{A_{M-1,M}\left(T_{M-1}\right)\cdot\left(K-S_{M-1,M}\left(T_{M-1}\right)\right),\,0\right\}\\
    &=N\cdot\max\left\{(\Delta T_M K+1)P(T_{M-1},T_M)-1,\,0\right\}\\
    &\simeq w_2\varphi(w_1z-b)
\end{aligned} \end{equation*}
which can be exactly captured by a network with only a single hidden node. With each step backwards, the target function is harder to fit, yielding larger errors. We observe MAEs up to one basis point of the notional amount. The empirical lower-upper bound spreads remain well within the theoretical error margins provided in subsections \ref{sec: lower bound} and \ref{sec: Upper bound}. The spreads are mostly much lower than the sum of the MAEs, indicating that the bound estimates are in practice significantly tighter than their theoretical maximum spread.

\begin{figure}
\centering
\begin{subfigure}{.5\textwidth}
  \centering
  \includegraphics[width=0.93\linewidth]{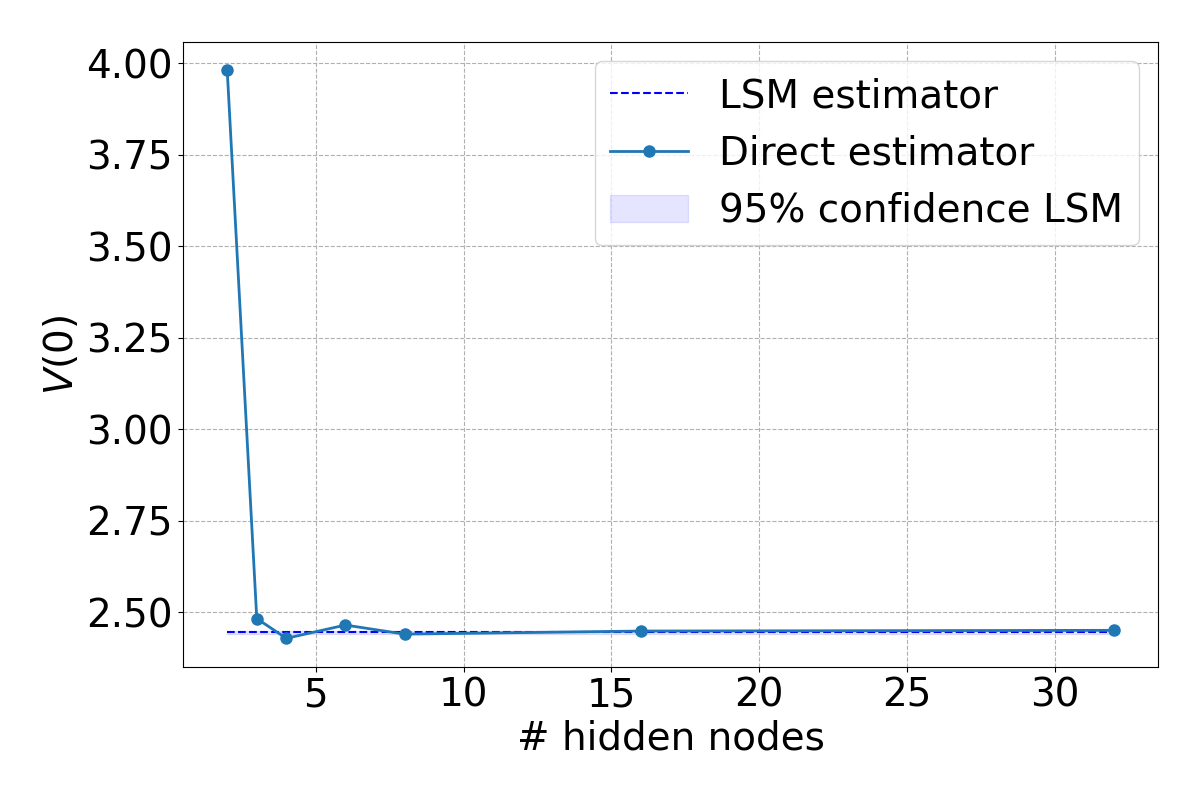}
  \caption{Convergence price}
  \label{fig: 1F convergence}
\end{subfigure}%
\begin{subfigure}{.5\textwidth}
  \centering
  \includegraphics[width=0.93\linewidth]{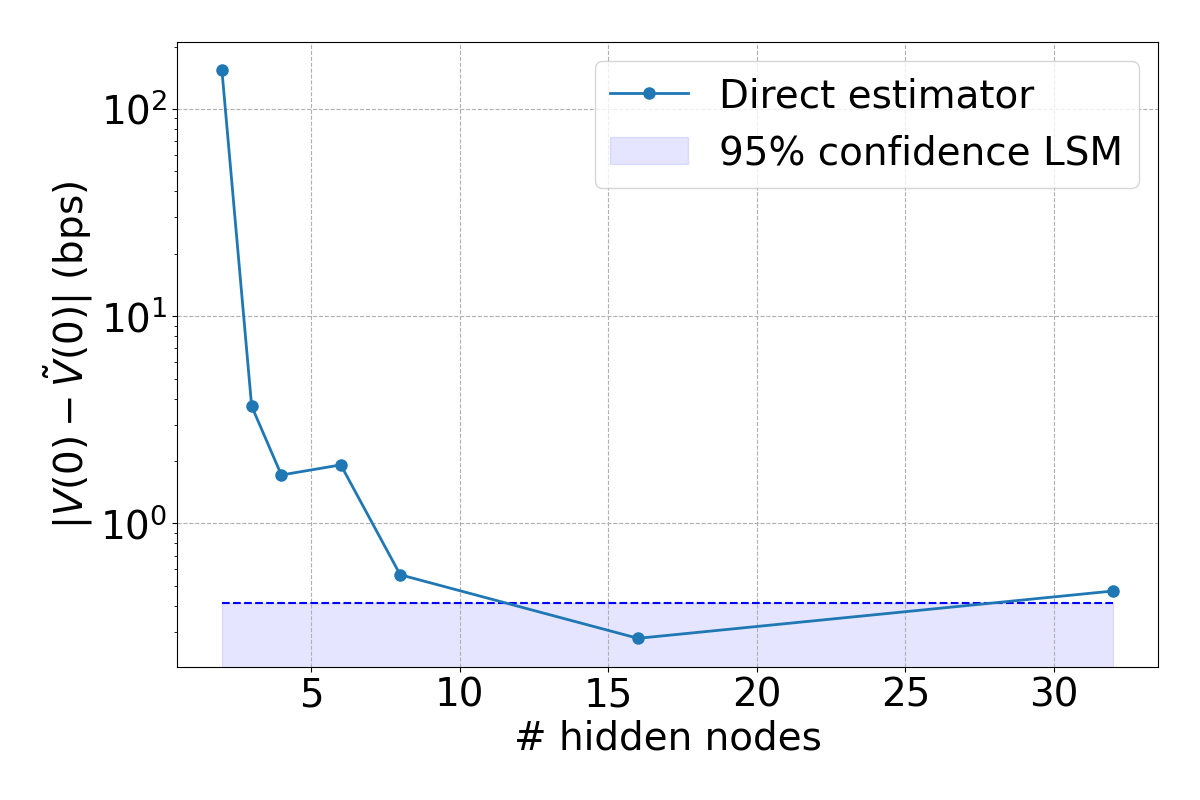}
  \caption{Convergence pricing error}
  \label{fig: 1F convergence error}
\end{subfigure}
\caption{Convergence of the direct estimator for the 1Yx5Y Bermudan swaption price as a function of hidden node count, with respect to the LSM benchmark under a 1-factor model.}
\label{fig: 1F convergence results}
\end{figure}

\begin{table}
\centering
\resizebox{\textwidth}{!}{
\begin{tabular}{c c c c c c c c} \toprule
    \textbf{Type} & \textbf{K/S} & \textbf{Dir.est.} & \textbf{Lower bnd} & \textbf{Upper bnd} & \textbf{UB-LB} & \textbf{LSM est.} & \textbf{LSM 95\% CI}  \\ \midrule
    1Y$\times$5Y  & 80\% & 1.527 & 1.521(0.001) & 1.528(0.000) & 0.007 & 1.521(0.001) & [1.518, 1.523] \\
      & 100\%  &  2.543  & 2.534(0.002) & 2.542(0.000) & 0.008 & 2.534(0.002) & [2.531, 2.538]\\
      & 120\%  & 4.015 & 4.016(0.002)  & 4.018(0.000) & 0.002 & 4.016(0.002) & [4.012, 4.021] \\ \midrule
    3Y$\times$7Y  & 80\% & 3.296 & 3.293(0.002) & 3.295(0.000) & 0.002 & 3.293(0.002) & [3.290, 3.296] \\
      & 100\%  &  4.767  & 4.755(0.004) & 4.761(0.000) & 0.006 & 4.755(0.004) & [4.747, 4.762]\\
      & 120\%  & 6.625 & 6.629(0.004)  & 6.631(0.000) & 0.002 & 6.629(0.004) & [6.621, 6.638] \\ \midrule
    1Y$\times$10Y  & 80\% & 3.950 &  3.945(0.005) & 3.960(0.000) & 0.015 & 3.945(0.005) & [3.935, 3.955] \\
      & 100\%  &  5.818  & 5.811(0.003) & 5.818(0.000) & 0.007 & 5.811(0.003) & [5.805, 5.816]\\
      & 120\%  & 8.346 & 8.354(0.005)  & 8.360(0.000) & 0.006 & 8.353(0.005) & [8.344, 8.362] \\ \bottomrule
\end{tabular}}
\caption{Results 1-factor model. $S_{1Y\times5Y}\approx S_{3Y\times7Y}\approx S_{1Y\times10Y}\approx0.0305$. Standard errors are in parentheses, based on 10 independent MC runs of $2\times10^5$ paths each.}
\label{table:1F results}
\end{table}

\begin{figure}
\centering
\begin{subfigure}{.3\textwidth}
  \centering
  \includegraphics[width=1.\linewidth]{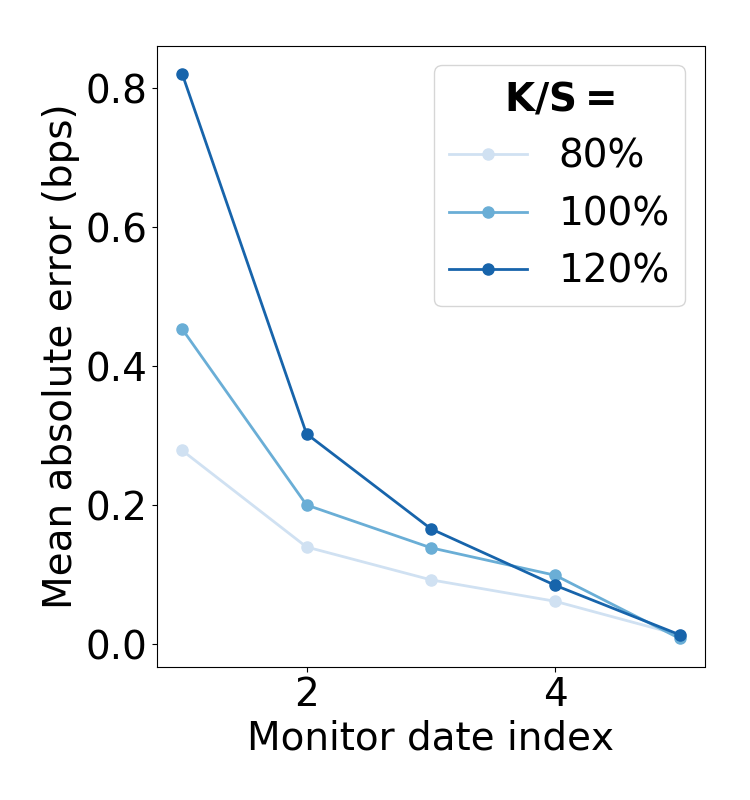}
  \caption{1Yx5Y Bermudan}
  \label{fig: error 1F 1y5y}
\end{subfigure}$\,$
\begin{subfigure}{.42\textwidth}
  \centering
  \includegraphics[width=1.\linewidth]{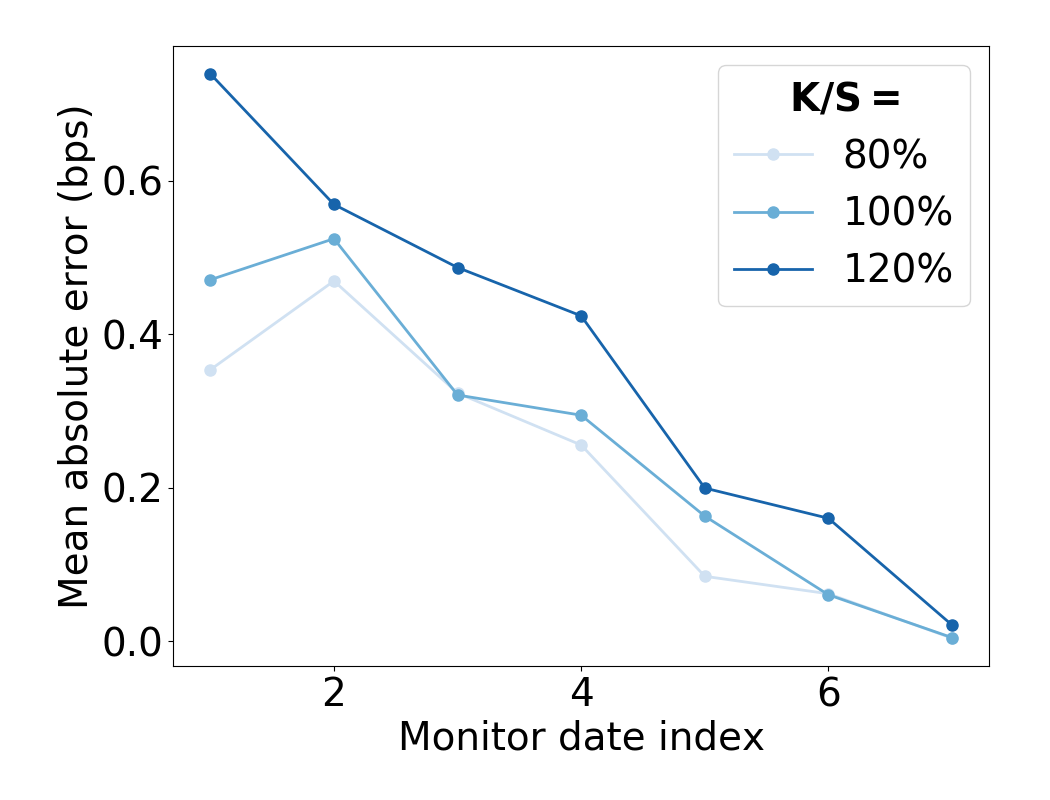}
  \caption{3Yx7Y Bermudan}
  \label{fig: error 1F 3y7y}
\end{subfigure}
\begin{subfigure}{.6\textwidth}
  \centering
  \includegraphics[width=1.0\linewidth]{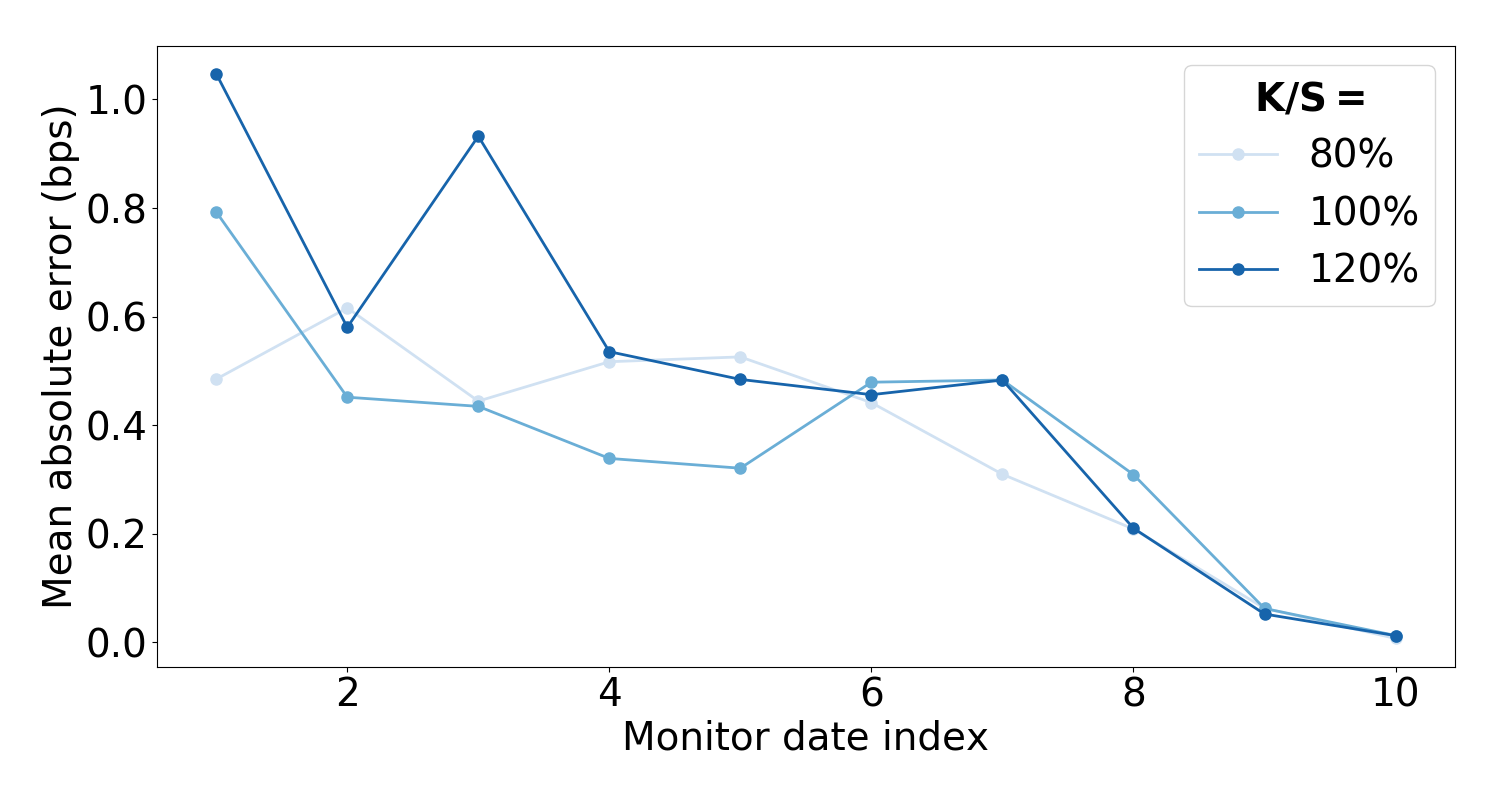}
  \caption{1Yx10Y Bermudan}
  \label{fig: error 1F 1y10y}
\end{subfigure}%
\caption{Mean absolute errors of neural network fit per monitor date under a 1-factor model.}
\label{fig: Fitting errors 1F}
\end{figure}

\subsection{2-factor Bermudan swaption\label{sec: num ex 2fac berm}}
As a final pricing example we consider a Bermudan swaption contract under a 2-factor model. The dynamics of the underlying risk-factors are assumed to follow a G2++ model \cite{brigo2007interest}. Monte Carlo scenarios are generated based on a discretised Euler scheme, taking weekly time-steps, based on the SDE below.
\begin{equation*} \begin{aligned}
    dx_1(t)&=-a_1x_1(t)dt+\sigma_1 dW_1(t),\qquad x_1(0)=0\\
    dx_2(t)&=-a_2x_2(t)dt+\sigma_2 dW_2(t),\qquad x_2(0)=0
\end{aligned} \end{equation*}
where $W_1$ and $W_2$ are correlated Brownian motions with $d\left<W_1,W_2\right>_t=\rho dt$. Parameter values that were used in the numerical experiments are summarized in \cref{table:2F parameters}.

We again start by demonstrating the convergence property of the direct estimator, for both the locally-connected and the fully-connected neural network designs as specified in \cref{sec: d-dim NN design}. The same $1Y\times 5Y$ Bermudan swaption with strike $K=0.03$ is used and the networks are each fitted to a set of 6400 training points. \cref{fig: 2F convergence} shows the direct estimator as a function of the number of hidden nodes in each neural network, alongside an LSM-based benchmark. In \cref{fig: 2F convergence error} the error with respect to the LSM estimate is shown on a logscale. We observe a similar convergence behaviour, where the direct estimators approach the LSM benchmark within the 95\% confidence range. Here it is noted that a portfolio of 8 discount bond options is already sufficient to achieve a replication of similar accuracy as the LSM estimator.

\Cref{table:2F results} depicts numerical results for a $1Y\times 5Y$, $3Y\times7Y$ and $1Y\times10Y$ receiver Bermudan swaption, for different levels of moneyness. We again report the direct, the upper bound and the lower bound estimates for both neural network designs. In this case, all networks have 64 hidden nodes and are fitted to training sets of 20,000 points. As before, the lower bound, the upper bound and the LSM estimates are the result of 10 independent Monte Carlo simulations of 200,000 scenarios.
\begin{table}
\centering
\begin{tabular}{ |c|c|c|c|c|c|c| } 
 \hline
 \textbf{Parameter} & $a_1$ & $a_2$ & $\sigma_1$ & $\sigma_2$ & $\rho$ & $f(0,t)$ \\ 
 \hline
 \textbf{Value} & 0.07 & 0.08 & 0.015 & 0.008 & -0.6 & 0.03 \\ 
 \hline
\end{tabular}
\caption{Parameters 2F G2++ model}
\label{table:2F parameters}
\end{table}
For the LSM algorithm we used $\left\{1,x_1,x_2,x_1^2,x_1x_2,x_2^2\right\}$ as basis-functions. Note that the number of monomials grows quadratically with the dimension of the state space and with that the number of free parameters. For our method, this number grows at a linear rate. Choices for the hyperparameters are again based on the analysis of \cref{sec: hyper-parameters}. The results under the 2-factor case share several features with the 1-factor results. We observe spreads between the lower and upper bounds ranging from several basis-points up to a few dozen basis points of the option price. The lower bound estimates turn out very close to the LSM estimates and the same holds for their standard errors. The upper bounds are again very stable with low standard errors and the direct estimator appears slightly less accurate. If we compare the locally-connected to the fully-connected case, we observe that the results are overall in close agreement, especially the lower and upper bound estimates. This is remarkable given that the fully-connected case gives rise to more trainable parameters, by which we would expect a higher approximation accuracy. In the 2-factor setting, the ratio of free parameters for the two designs is $3:4$.

\Cref{fig: Fitting errors 2F} shows the mean absolute errors of the neural networks after fitting. The MAEs for the locally-connected networks are in blue; the fully-connected are in red. All are represented in basis points of the notional amount. We observe that the errors are mostly in the same order of magnitude as the one-dimensional case. The figures indicate that the locally-connected networks slightly outperform the fully-connected networks in terms of accuracy, although this does not appear to materialize in tighter estimates of the lower and upper bounds. For the locally-connected case we again observe that the errors are virtually zero at the last monitor date, for the same reasons as in the 1-factor setting. In the fully-connected representation, an exact replication might not exist, resulting in larger errors. We conjecture that this effect partially carries over to the networks at preceding monitor dates. The empirical lower-upper bound spreads remain well within the theoretical error margins, as the spreads are in all cases lower than the sum of the MAEs. Hence also for the two-factor setting we find that the bound estimates are tighter in practice than their theoretical maximum spreads.

\begin{figure}
\centering
\begin{subfigure}{.5\textwidth}
  \centering
  \includegraphics[width=0.93\linewidth]{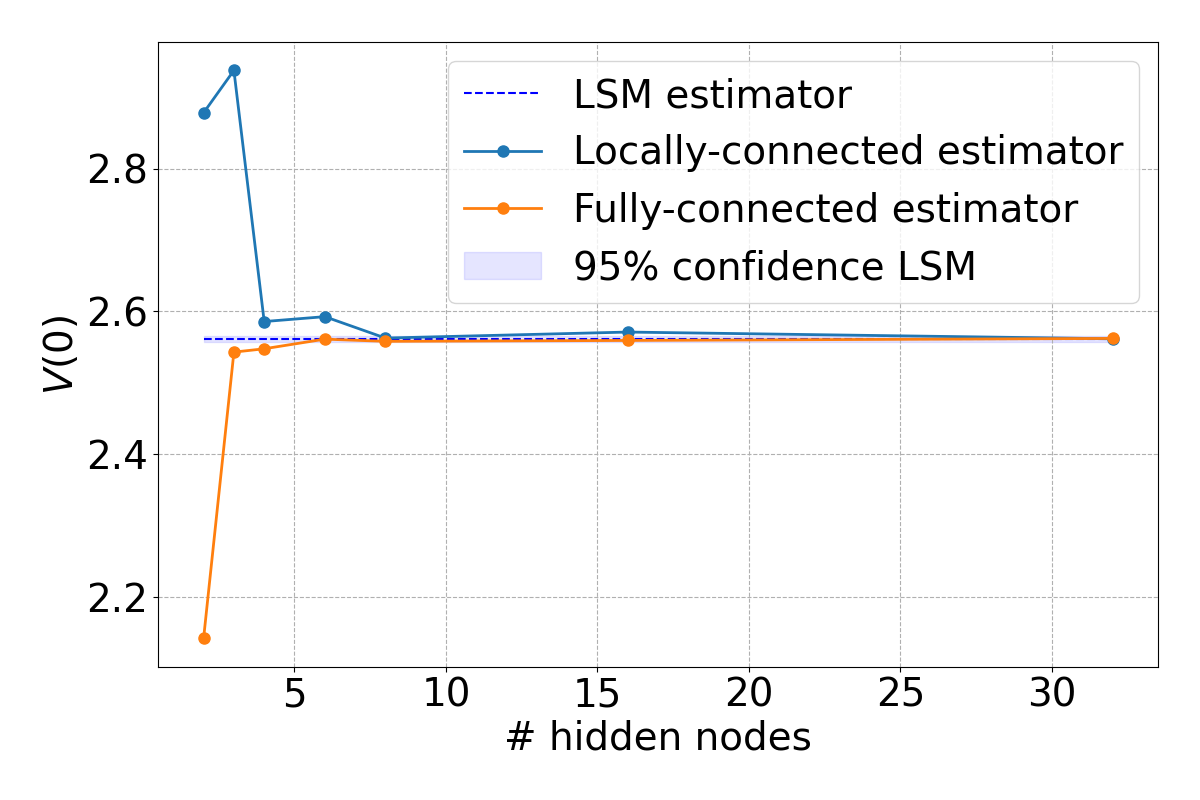}
  \caption{Convergence price}
  \label{fig: 2F convergence}
\end{subfigure}%
\begin{subfigure}{.5\textwidth}
  \centering
  \includegraphics[width=0.93\linewidth]{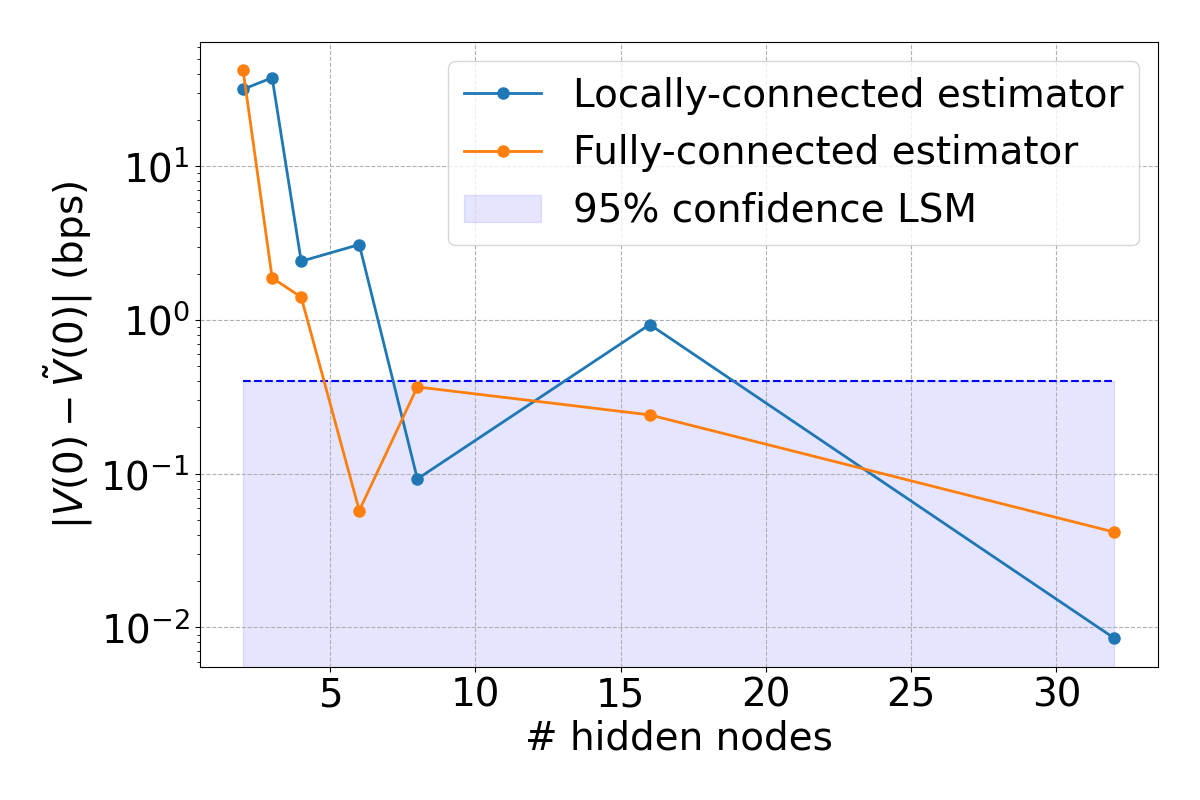}
  \caption{Convergence pricing error}
  \label{fig: 2F convergence error}
\end{subfigure}
\caption{Convergence of the direct estimator for the 1Yx5Y Bermudan swaption price as a function of hidden node count, with respect to the LSM benchmark under a 2-factor model.}
\label{fig: 2F convergence results}
\end{figure}

\begin{table}
\centering
\resizebox{\textwidth}{!}{
\begin{tabular}{cccccccc} 
    \toprule
    \multicolumn{8}{c}{\textsc{Locally-connected neural networks}} \\ \midrule
    \textbf{Type} & \textbf{K/S} & \textbf{Dir.est.} & \textbf{Lower bnd} & \textbf{Upper bnd} & \textbf{UB-LB} & \textbf{LSM est.} & \textbf{LSM 95\% CI}  \\ \midrule
    1Y$\times$5Y  & 80\% & 1.617 &  1.617(0.002) & 1.619(0.000) & 0.002 & 1.617(0.002) & [1.614, 1.621] \\
      & 100\%  &  2.652  & 2.650(0.002) & 2.654(0.000) & 0.004 & 2.650(0.002) & [2.646, 2.654]\\
      & 120\%  & 4.128 & 4.127(0.003)  & 4.131(0.000) & 0.004 & 4.127(0.003) & [4.121, 4.132] \\ \midrule
    3Y$\times$7Y  & 80\% & 3.073 &  3.076(0.004) & 3.078(0.000) & 0.002 & 3.077(0.004) & [3.069, 3.085] \\
      & 100\%  &  4.554  & 4.553(0.004) & 4.553(0.000) & 0.000 & 4.552(0.004) & [4.545, 4.559]\\
      & 120\%  & 6.444 & 6.448(0.004)  & 6.451(0.000) & 0.003 & 6.446(0.005) & [6.435, 6.456] \\ \midrule
    1Y$\times$10Y  & 80\% & 3.616 &  3.624(0.002) & 3.626(0.000) & 0.002 & 3.622(0.002) & [3.618, 3.627] \\
      & 100\%  &  5.508 & 5.509(0.002) & 5.514(0.000) & 0.005 & 5.508(0.002) & [5.503, 5.512]\\
      & 120\%  & 8.128 & 8.123(0.005)  & 8.130(0.000) & 0.007 & 8.121(0.005) & [8.110, 8.132] \\ \midrule
      \multicolumn{8}{c}{\textsc{Fully-connected neural networks}} \\ \midrule
    \textbf{Type} & \textbf{K/S} & \textbf{Dir.est.} & \textbf{Lower bnd} & \textbf{Upper bnd} & \textbf{UB-LB} & \textbf{LSM est.} & \textbf{LSM 95\% CI}  \\ \midrule
    1Y$\times$5Y  & 80\% & 1.617 &  1.617(0.002) & 1.619(0.000) & 0.002 & 1.617(0.002) & [1.614, 1.621] \\
      & 100\%  &  2.651  & 2.650(0.002) & 2.654(0.000) & 0.004 & 2.650(0.002) & [2.646, 2.654]\\
      & 120\%  & 4.129 & 4.127(0.003)  & 4.131(0.000) & 0.004 & 4.127(0.003) & [4.121, 4.132] \\ \midrule
    3Y$\times$7Y  & 80\% & 3.076 &  3.077(0.004) & 3.078(0.000) & 0.001 & 3.077(0.004) & [3.069, 3.085] \\
      & 100\%  &  4.553  & 4.553(0.004) & 4.554(0.000) & 0.001 & 4.552(0.004) & [4.545, 4.559]\\
      & 120\%  & 6.451 & 6.447(0.005)  & 6.451(0.000) & 0.004 & 6.446(0.005) & [6.435, 6.456] \\ \midrule
    1Y$\times$10Y  & 80\% & 3.616 &  3.624(0.002) & 3.626(0.000) & 0.002 & 3.622(0.002) & [3.618, 3.627] \\
      & 100\%  &  5.506  & 5.509(0.002) & 5.514(0.000) & 0.005 & 5.508(0.002) & [5.503, 5.512]\\
      & 120\%  & 8.124 & 8.123(0.005)  & 8.130(0.000) & 0.007 & 8.121(0.005) & [8.110, 8.132] \\ \bottomrule
\end{tabular}}
\caption{Results 2-factor model for the locally-connected and fully-connected neural network cases. $S_{1Y\times5Y}\approx S_{3Y\times7Y}\approx S_{1Y\times10Y}\approx0.0305$. Standard errors are in parentheses, based on 10 independent MC runs of $2\times10^5$ paths each.}
\label{table:2F results}
\end{table}

\begin{figure}
\centering
\begin{subfigure}{.3\textwidth}
  \centering
  \includegraphics[width=1.\linewidth]{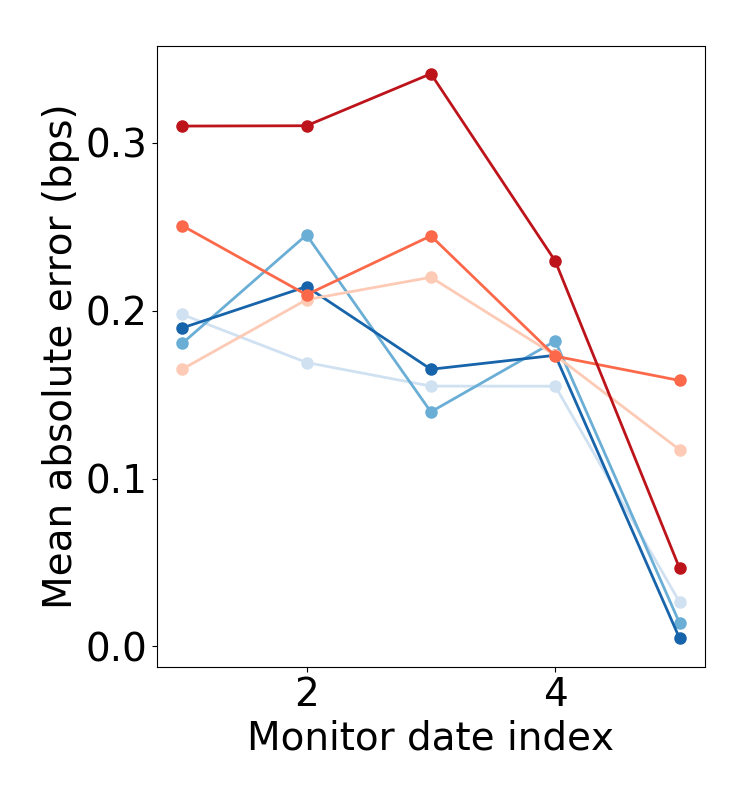}
  \caption{1Yx5Y Bermudan}
  \label{fig: error 2F 1y5y}
\end{subfigure}
\begin{subfigure}{.42\textwidth}
  \centering
  \includegraphics[width=1.\linewidth]{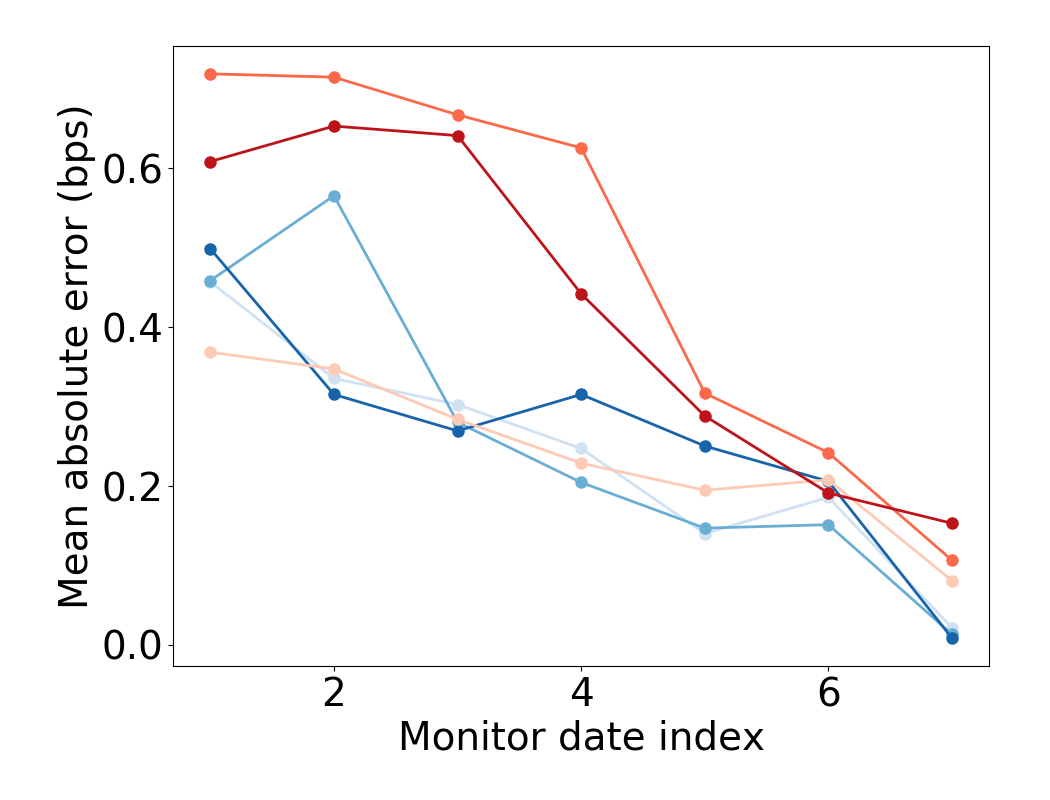}
  \caption{3Yx7Y Bermudan}
  \label{fig: error 2F 3y7y}
\end{subfigure}
\begin{subfigure}{.60\textwidth}
  \centering
  \includegraphics[width=1.0\linewidth]{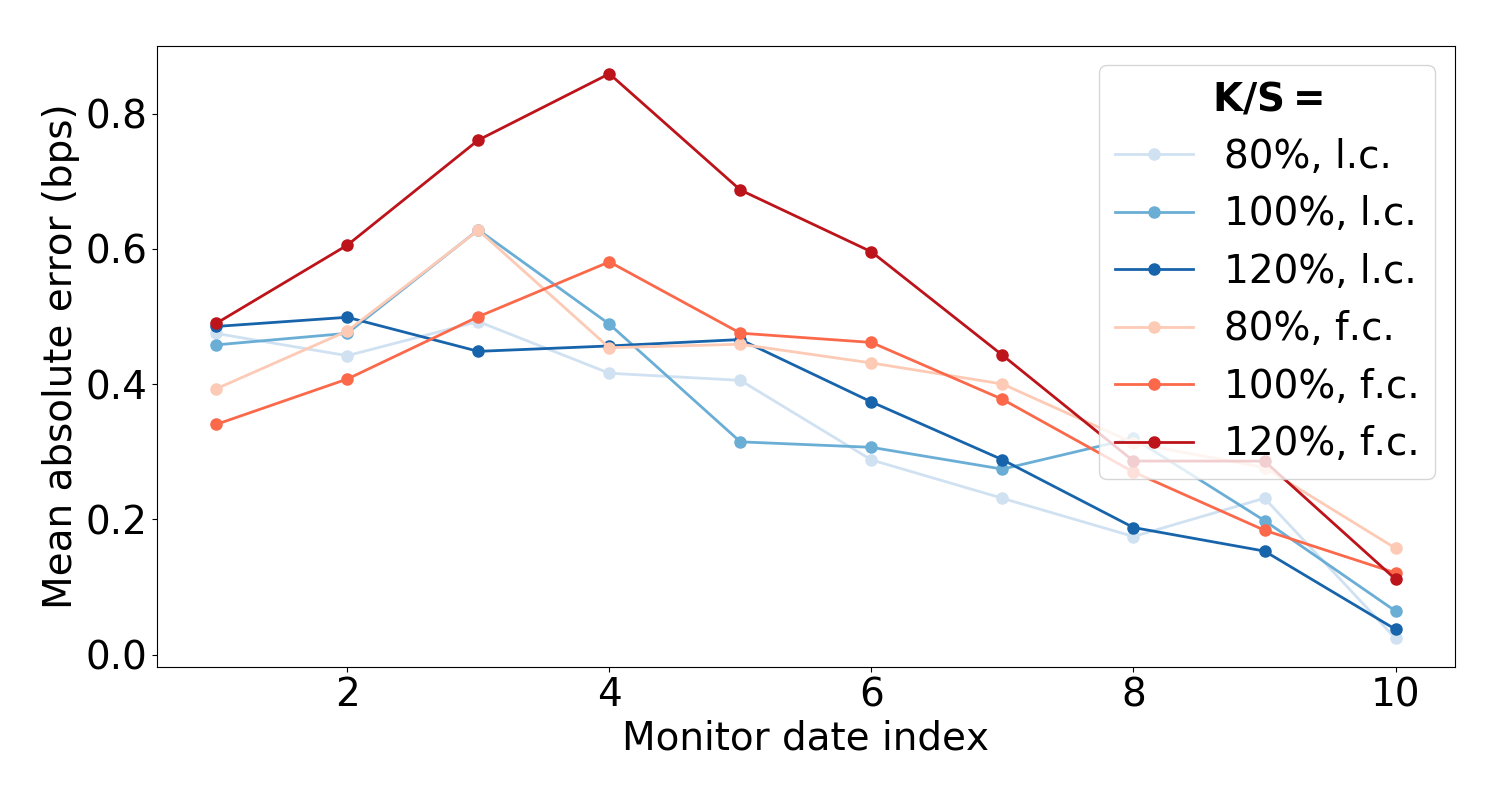}
  \caption{1Yx10Y Bermudan}
  \label{fig: error 2F 1y10y}
\end{subfigure}%
\caption{Accuracy of neural network fit per monitor date under a 2-factor model. Blue lines represent the locally-connected (l.c.) case and the red lines represent the fully-connected (f.c.) case. The legend in Figure (c) applies to all three graphs.}
\label{fig: Fitting errors 2F}
\end{figure}

\subsection{Performance semi-static hedge}
Finally, we consider the hedging problem of a vanilla swaption under the 1-factor model and a Bermudan swaption under the 2-factor model.

\subsubsection{1-factor swaption}
Here we compare the performance of a static versus a dynamic hedge in the 1-factor model. As an example, we take a $1Y\times5Y$  European receiver swaption at different levels of moneyness. The model set-up is similar to that in \cref{sec Num ex swaption}, using the same set of parameters reported in \cref{table:1F parameters}. In the static hedge case, the option contract writer aims to hedge the risk using a static portfolio of zero-coupon bond options and discount bonds. The replicating portfolio is composed using a neural network with 64 hidden nodes, optimized using 20,000 training-points generated through Monte Carlo sampling. The portfolio is composed at time-zero and kept until the expiry of the option at $t=1$ year. In the dynamic hedge case, the delta-hedging strategy is applied. The replicating portfolio is composed of units of the underlying forward-starting swap and investment in the money-market. The dynamic hedge involves the periodic rebalancing of the portfolio. The delta for a receiver swaption under the Hull-White model (see \cite{henrard2003explicit}) is given by
\begin{equation} \begin{aligned}
    \Delta(t)=\frac{\sum_{j=1}^M c_jP(t,T_j)\nu(t,T_j)\Phi(\kappa+\alpha_j)-P(t,T_0)\nu(t,T_0)\Phi(\kappa)}{\sum_{j=1}^M c_jP(t,T_j)\nu(t,T_j)-P(t,T_0)\nu(t,T_0)}
\end{aligned} \end{equation}
where $\kappa$ is the solution of
\begin{equation*} \begin{aligned}
    \sum_{j=1}^M c_j\frac{P(t,T_j)}{P(t,T_0)}\exp\left(-\frac{1}{2}\alpha_j^2-\alpha_j\kappa\right)=1
\end{aligned} \end{equation*}
and
\begin{equation*} \begin{aligned}
    \alpha_j^2:=\int_0^{T_0}\left(\nu(u,T_j)-\nu(u,T_0)\right)^2du
\end{aligned} \end{equation*}
where $\Phi$ denotes the CDF of a standard normal distribution, $c_j=\Delta T_jK$ for $j=1,\ldots,M-1$ and $c_M=1+\Delta T_MK$. The function $\nu(t,T)$ denotes the instantaneous volatility of a discount bond maturing at $T$, which under Hull-White is given by $\nu(t,T):=\frac{\sigma}{a}\left(1-e^{-a(T-t)}\right)$. We validated the analytic expression above with numerical approximations of the Delta obtained by bumping the yield curve. Within the simulation, the dynamic hedge portfolio is rebalanced on a daily basis between time-zero and expiry of the option. In this experiment that means it is updated on 255 instances at equidistant monitor dates.

The performance of both hedging strategies is reported in \cref{table: hedging errors}. The results are based on 10,000 risk-neutral Monte Carlo paths. The hedging error refers to the difference between the option's pay-off at expiry and the replicating portfolio's final value. The quantities are reported in basis points of the notional amount. The empirical distribution of the hedging error is shown in \cref{fig: Hedge error results}. We observe that overall the static hedge outperforms the dynamic hedge in terms of accuracy, even though it involves only a quarter (64 versus 255) of the trades. Although it is not visible in \cref{fig: HE static}, the static strategy does give rise to occasional outliers in terms of accuracy. These are associated with scenarios that reach or exceed the boundary of the training set. These errors are typically of a similar order of magnitude as the errors observed in the dynamic hedge. The impact of outliers can be reduced by increasing the training-set and thereby broadening the regression-domain.
\begin{table}
\centering
\begin{tabular}{cccc} 
    \toprule
    \textbf{Hedge error (bps)} & \textbf{K/S} & \textbf{Static hedge} & \textbf{Dyn. hedge} \\
    \midrule
    Mean  & 80\% &$-1.9\times10^{-2}$ & $0.38$\\
          & 100\% &$-2.2\times10^{-3}$& $0.61$ \\
          & 120\% &$-1.5\times10^{-2}$& $0.46$ \\\midrule
    St. dev.  & 80\% & $2.5$ & $9.1$\\
          & 100\% &$3.1\times10^{-2}$& $10.1$ \\
          & 120\% &$4.5\times10^{-2}$& $9.4$ \\\midrule
    95\%-percentile  & 80\% & $6.6\times10^{-2}$ & $15.7$ \\
          & 100\% & $1.2\times10^{-2}$ & $17.9$ \\
          & 120\% &$2.0\times10^{-2}$& $16.2$ \\\midrule
\end{tabular}
\caption{Hedging errors for static and dynamic hedging strategy for a $1Y\times5Y$ receiver swaption, based on $10^4$ MC paths. $S_{1Y\times5Y}\approx0.0305$.}
\label{table: hedging errors}
\end{table}

\begin{figure}
\centering
\begin{subfigure}{.5\textwidth}
  \centering
  \includegraphics[width=0.95\linewidth]{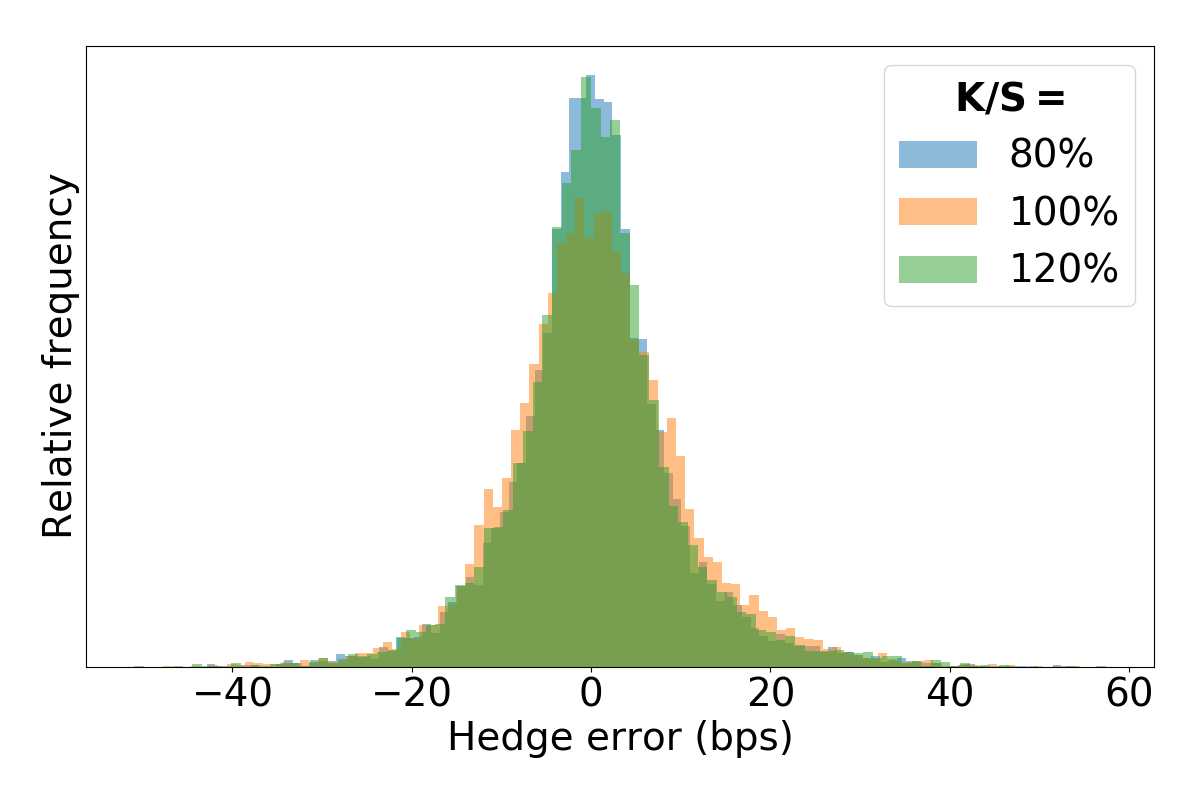}
  \caption{Hedge error dynamic strategy}
  \label{fig: HE dynamic}
\end{subfigure}%
\begin{subfigure}{.5\textwidth}
  \centering
  \includegraphics[width=0.95\linewidth]{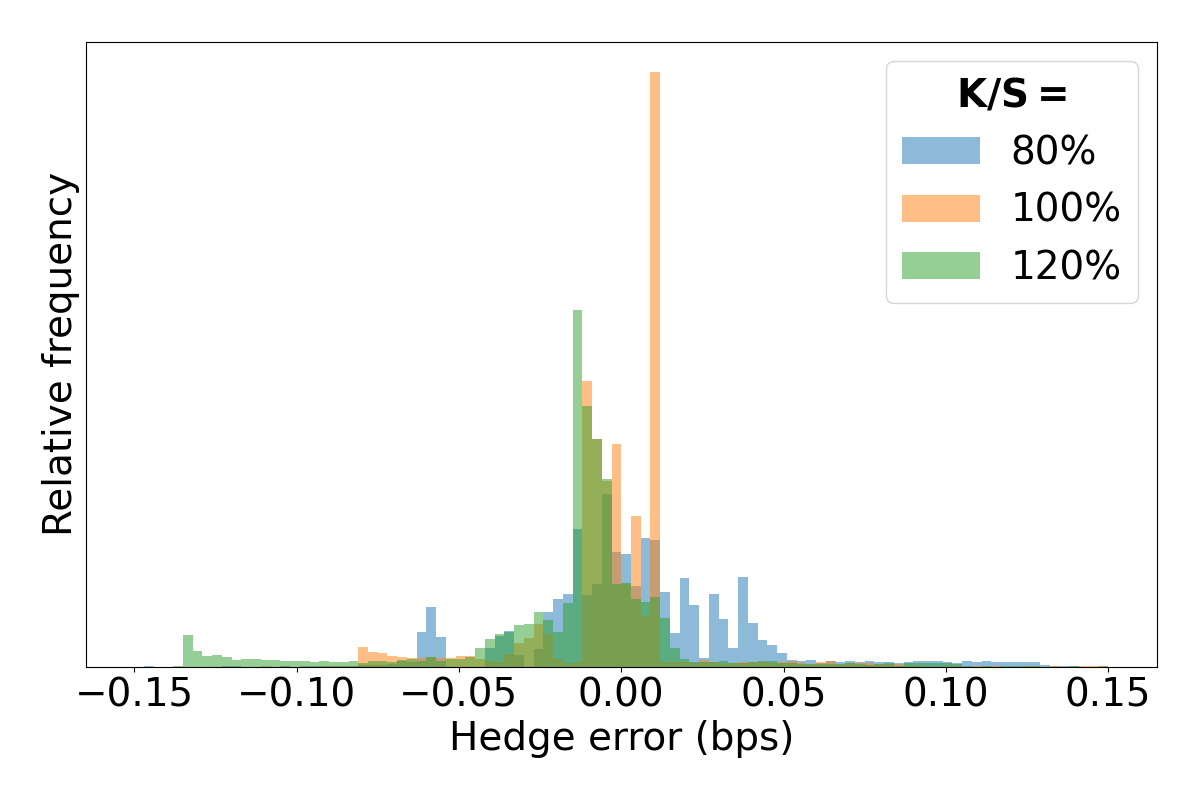}
  \caption{Hedge error static strategy}
  \label{fig: HE static}
\end{subfigure}
\caption{Hedge error distribution for a $1Y\times5Y$ receiver swaption, based on $10^4$ MC paths. $S_{1Y\times5Y}\approx0.0305$.}
\label{fig: Hedge error results}
\end{figure}

\subsubsection{2-factor Bermudan swaption}
Here we demonstrate the performance of the semi-static hedge for a $1Y\times5Y$ receiver Bermudan swaption under a 2-factor model. We compare the accuracy of the hedging strategy utilizing a locally-connected versus a fully-connected neural network. In the former, the replication portfolio consists of zero-coupon bonds and zero-coupon bond options. In the latter, the Bermudan is replicated with options written on hypothetical assets with a pay-off equal to the $\log$ of a zero-coupon bond (see \cref{sec: d-dim NN design}). The model set-up is similar to that in \cref{sec: num ex 2fac berm}, using the same set of parameters reported in \cref{table:2F parameters}. Both network are composed with 64 hidden nodes and optimized using 20,000 training-points generated through Monte Carlo sampling. The portfolio is set-up at time-zero and updated at each monitor-date of the Bermudan until it is either exercised or expired. We assume that the holder of the Bermudan swaption follows the exercise strategy implied by the algorithm, i.e. the option is exercised as soon as $\tilde C_m\left(T_m\right)\leq h_m\left(\mathbf{x}_{T_m}\right)$. When a monitor date $T_m$ is reached, the replication portfolio matures with a pay-off equal to $G_m\left(z_m(T_m)\right)$. In case the Bermudan is continued, the price to set up a new replication portfolio is given by $\tilde V\left(T_m\right)=B(T_m)\mathbb{E^Q}\left[\frac{G_{m+1}\left(z_{m+1}\right)}{B\left(T_{m+1}\right)}\Big|\mathcal{F}_{T_m}\right]$, which contributes $G_m\left(z_m(T_m)\right)-\tilde V\left(T_m\right)$ to the hedging error. In case the Bermudan is exercised, the holder will claim $\tilde V\left(T_m\right)=h_m\left(\mathbf{x}_{T_m}\right)$, which also contributes $G_m\left(z_m(T_m)\right)-\tilde V\left(T_m\right)$ to the hedging error. The total error of the semi-static hedge (HE) is therefore computed as
\begin{equation*} \begin{aligned}
    \text{HE} := \sum_{m=0}^{M-1}\left(G_m\left(z_m(T_m)\right)-\tilde V\left(T_m\right)\right)\mathbbm{1}_{\left\{\tilde\tau\leq T_m\right\}}
\end{aligned} \end{equation*}
where $\tilde V\left(T_m\right):=\max\left\{B(T_m)\mathbb{E^Q}\left[\frac{G_{m+1}\left(z_{m+1}\right)}{B\left(T_{m+1}\right)}\Big|\mathcal{F}_{T_m}\right], h_m\left(\mathbf{x}_{T_m}\right)\right\}$ denotes the direct estimator at date $T_m$ and $\tilde\tau$ denotes the stopping time, as defined in \ref{eq: stopping time}.

The performance of the strategies related to locally- and fully-connected neural networks is reported in \cref{table: hedging errors Bermudan}. The results are based on 10,000 risk-neutral Monte Carlo paths and reported in basis points of the notional amount. The empirical distribution of the hedging error is shown in \cref{fig: Hedge error bermudan}. We observe that both approaches yield an accuracy in the same order of magnitude, although the locally-connected case slightly outperforms the fully-connected case. This is in line with expectations, as the fitting performance of the locally-connected networks is generally higher. For similar reasons as the 1-factor case, the hedging experiments give rise to occasional outliers in terms of accuracy. These outliers can be in the order of several dozens of basis points. Again, the impact of outliers can be reduced by broadening the regression-domain.
\begin{table}
\centering
\begin{tabular}{cccc} 
    \toprule
    \textbf{Hedge error (bps)} & \textbf{K/S} & \textbf{Loc. conn. NN} & \textbf{Fully conn. NN} \\
    \midrule
    Mean  & 80\% &$3.2\times10^{-2}$ & $2.1\times 10^{-2}$\\
          & 100\% &$7.9\times10^{-2}$& $-5.5\times10^{-2}$ \\
          & 120\% &$-9.4\times10^{-2}$& $4.5\times10^{-2}$ \\\midrule
    St. dev.  & 80\% & $0.45$ & $0.55$\\
          & 100\% &$0.38$& $0.48$ \\
          & 120\% &$0.37$& $0.67$ \\\midrule
    95\%-percentile  & 80\% & $0.66$ & $0.69$ \\
          & 100\% & $0.56$ & $0.85$ \\
          & 120\% &$0.72$& $0.76$ \\\midrule
\end{tabular}
\caption{Hedging errors of the semi-static hedging strategy for a $1Y\times5Y$ receiver Bermudan swaption, based on $10^4$ MC paths. $S_{1Y\times5Y}\approx0.0305$.}
\label{table: hedging errors Bermudan}
\end{table}

\begin{figure}
\centering
\begin{subfigure}{.5\textwidth}
  \centering
  \includegraphics[width=0.95\linewidth]{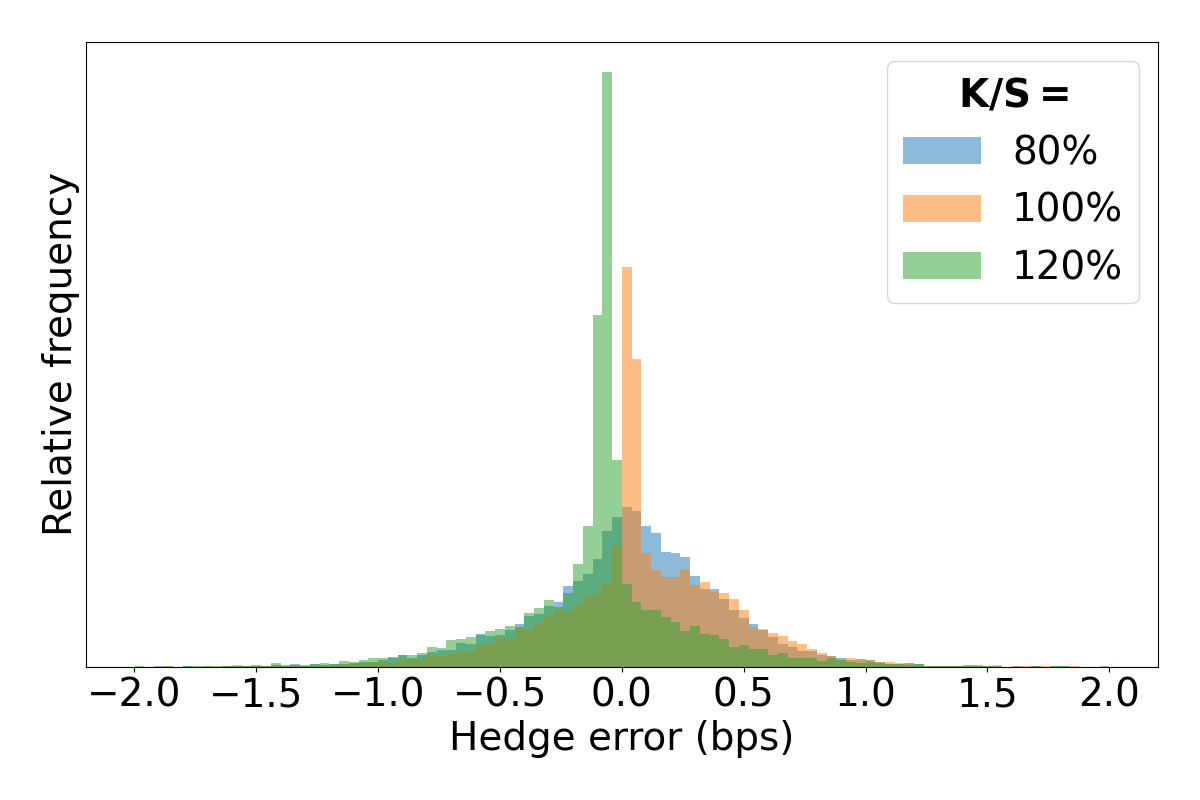}
  \caption{Hedge error locally-connected NN}
  \label{fig: HE local}
\end{subfigure}%
\begin{subfigure}{.5\textwidth}
  \centering
  \includegraphics[width=0.95\linewidth]{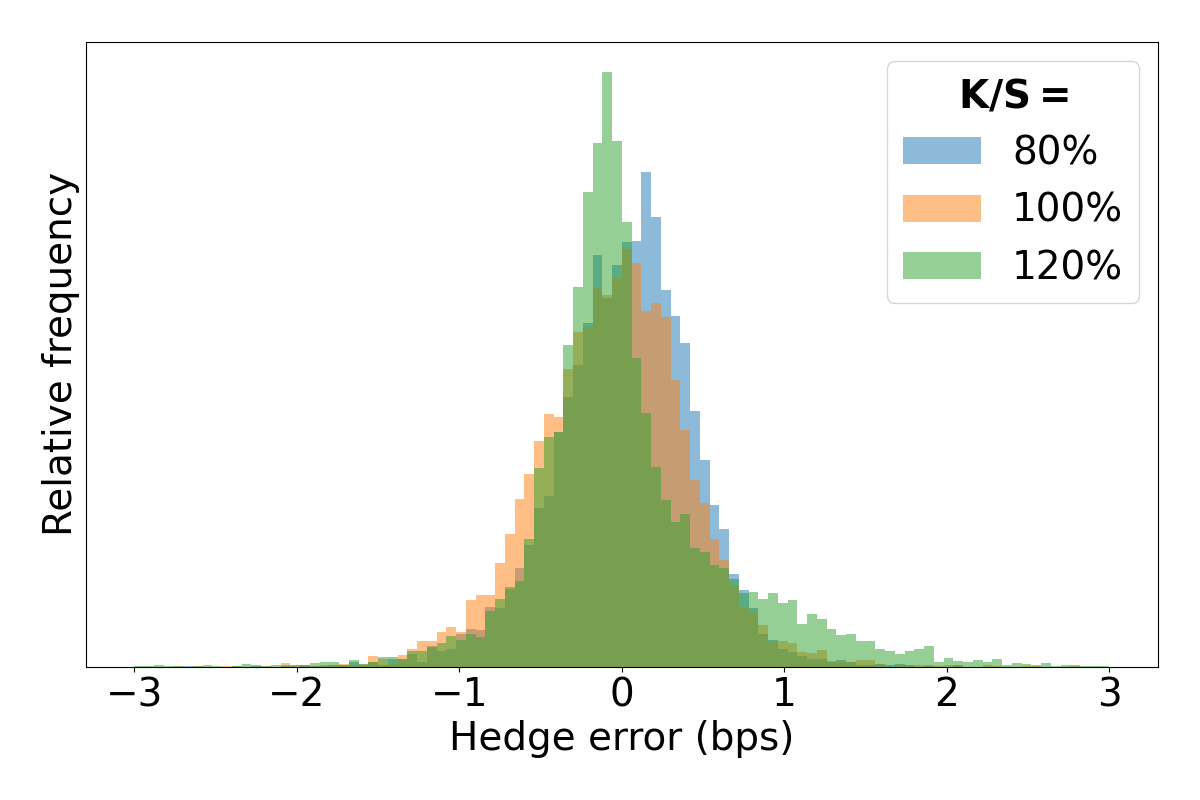}
  \caption{Hedge error fully-connected NN}
  \label{fig: HE connected}
\end{subfigure}
\caption{Hedge error distribution for a $1Y\times5Y$ receiver Bermudan swaption, based on $10^4$ MC paths. $S_{1Y\times5Y}\approx0.0305$.}
\label{fig: Hedge error bermudan}
\end{figure}

\section{Conclusion\label{sec: conclusion}}
This paper has featured a semi-static hedging algorithm for callable interest rate derivatives under an affine, multi-factor term-structure model. As such we present a novel contribution to the literature, where little work is done concerning the static replication of path-dependent interest rate derivatives. Taking Bermudan swaptions as example, we have demonstrated that these products can be replicated with an options portfolio written on a basket of discount bonds. The static portfolio composition is obtained by regressing the target option's value using a shallow, artificial neural network. For established regression methods, such as LSM, the choice of basis functions is arbitrary. In our approach, the chosen basis functions are motivated by their representation of the replicating portfolio's pay-off, yielding interpretable neural networks. Apart from a hedging application, the algorithm gives rise to a direct estimator of the contract price. An upper bound and lower bound estimate to this price valuation can be computed at a minimal additional computational cost. The LSM approach requires expensive nested simulations to obtain such bounds, which can be avoided here. An important contribution is that we prove strict theoretical error margins to the earlier mentioned price statistics. These margins are based on simple metrics, such as the mean absolute error of the neural network fit, which is easily measured. The performance of the semi-static hedge is justified by established benchmarks. Illustrative examples focussing on convergence, hedging and pricing are provided for a 1-factor and a 2-factor model, which are popular amongst practitioners.

As a look-out for further research, we consider applying the algorithm to the computation of counterparty credit risk-measures and various value-adjustments. These metrics typically rely on generating a forward value- and sensitivity-profiles of (exotic) derivative portfolios. We see the semi-static hedging approach combined with the simple error analysis as an effective tool to address the computational challenges associated with these risk measures.

\section*{Disclosure}
The opinions expressed in this work are solely those of the authors and do not represent in any way those of their current and past employers. No potential conflict of interest was reported by the authors.

\appendix

\bibliographystyle{siamplain}
\bibliography{references}

\section{Evaluation of the conditional expectation\label{sec: eval cond exp}}
In this section we will explicitly compute the conditional expectations related to the continuation values. We will distinguish two approaches associated with the two proposed network structures, i.e. the locally connected case (suggestion 1) and the fully connected case (suggestion 2). 

For the ease of computation we will use a simplified, yet equivalent representation of the risk-factor dynamics discussed in \cref{sec: model formulation}. This concerns a linear shift of the canonical representation of the latent factors as presented in \cite{dai2000specification}. We write $\mathbf{x}_t:=\left(x_1(t),\ldots,x_n(t)\right)^\top$, where each component $x_i$ denotes a mean-reverting zero-mean process. The risk-neutral dynamics are assumed to satisfy
\begin{equation} \begin{aligned}
    d\begin{pmatrix}x_1(t)\\\vdots\\x_d(t)\end{pmatrix}=-\begin{pmatrix}a_1(t)x_1(t)\\\vdots\\a_d(t)x_d(t)\end{pmatrix}dt+\begin{pmatrix}\sigma_{11}(t)&\hdots&\sigma_{1d}(t)\\\vdots&\ddots&\vdots\\\sigma_{d1}(t)&\hdots&\sigma_{dd}(t)\end{pmatrix}d\mathbf{W}(t), \quad \begin{pmatrix}x_1(0)\\\vdots\\x_d(0)\end{pmatrix}=\begin{pmatrix}0\\\vdots\\0\end{pmatrix}\label{eqn: canonical repr}
\end{aligned} \end{equation}
where $\mathbf{W}$ denotes a standard $d$-dimensional Brownian motion with independent entries. By setting $\tilde\sigma_i(t):=\sqrt{\sum_{j=1}^d\sigma_{ij}^2(t)}$, the process above can be rewritten in terms of one-dimensional It\^o processes \cite{shreve2004stochastic} of the form
\begin{equation} \begin{aligned}
    dx_i(t)&=-a_i(t)x_i(t)dt+\tilde\sigma_i(t)d\tilde W_i(t),\qquad i=1,\ldots,d\label{eqn:SDE1}
\end{aligned} \end{equation}
where $\tilde W_1,\ldots,\tilde W_d$ denote a set of one-dimensional, correlated Brownian motions under the measure $\mathbb{Q}$. The instantaneous correlation is denoted by $\rho_{ij}$, such that $d\left<\tilde W_i,\tilde W_j\right>_t=\rho_{ij}(t)dt$.

\subsection{The continuation value with locally connected NN\label{sec: eval local NN}}
We consider the network $G_m(\cdot)$, which is trained to approximate $\tilde V(T_m)$. Let $t\in[T_{m-1},T_m)$. In order to obtain $\tilde V(t)$, we need to evaluate $\mathbb{E^Q}\left[e^{-\int_t^{T_m}r(u)du}G_m(\mathbf{x}_{T_m})\Big|\mathcal{F}_t\right]$. As $G_m(\cdot)$ represents the linear combination of the outcome of $q$ hidden nodes, we will focus on the conditional expectation of hidden node $i\in\{1,\ldots,q\}$. Our aim is then to compute the following
\begin{equation*} \begin{aligned}
    H_i(t):=\mathbb{E^Q}\left[e^{-\int_t^{T_m}r(u)du}\varphi(\mathbf{w}_i^\top \mathbf{P}(T_m)+b_i)\Big|\mathcal{F}_t\right]
\end{aligned} \end{equation*}
The map $\varphi:\mathbb{R}\to \mathbb{R}$ denotes the ReLU function defined as $\varphi(x)=\max\{x,0\}$. The weight-vector $\mathbf{w}_i$ (corresponding to hidden node $i$) and $\mathbf{P}(T_m)$ are defined as
\begin{equation*} \begin{aligned}
    \mathbf{w}_i=\begin{pmatrix}w_1^i\\ \vdots\\w_d^i\end{pmatrix},\qquad\mathbf{P}(T_m)=\begin{pmatrix} P(T_m,T_{m}+\delta_1)\\\vdots\\P(T_m,T_{m}+\delta_d)\end{pmatrix}
\end{aligned} \end{equation*}
with $T_m<T_{m}+\delta_1<\ldots<T_{m}+\delta_d\leq T_M$. Recall that as a characteristic of the affine term-structure model, the random variable $P(t,T)$ can be expressed as
\begin{equation*} \begin{aligned}
    P(t,T)=e^{A(t,T)-\sum_{i=1}^dB_i(t,T)x_i(t)}
\end{aligned} \end{equation*}
for deterministic functions $A$ and $B_i$, which are available in closed-form (see \cite{brigo2007interest}). By the structure of the network, the weight-vector is constraint to have only a single non-zero entry, which we will denote to have index $k$. Therefore we can rewrite
\begin{equation*} \begin{aligned}
    H_i(t)&=\mathbb{E^Q}\left[e^{-\int_t^{T_m}r(u)du}\max\left\{w^k_i P(T_m,T_m+\delta_k)+b_i,\;0\right\}\Big|\mathcal{F}_t\right]
\end{aligned} \end{equation*}
As we argued before, if $w_i^k$ and $b_i$ are both non-negative, $H_i(t)$ denotes the value of a forward contract. In that case we have
\begin{equation*} \begin{aligned}
    H_i(t)&=\mathbb{E^Q}\left[e^{-\int_t^{T_m}r(u)du}\left(w^k_i P(T_m,T_m+\delta_k)+b_i\right)\Big|\mathcal{F}_t\right]\\
    &=w^k_i\mathbb{E^Q}\left[e^{-\int_t^{T_m}r(u)du} \mathbb{E^Q}\left[e^{-\int_{T_m}^{T_m+\delta_k}r(u)du}\Big|\mathcal{F}_{T_m}\right]\bigg|\mathcal{F}_t\right]+b_i\mathbb{E^Q}\left[e^{-\int_t^{T_m}r(u)du}\Big|\mathcal{F}_t\right]\\
    &=w^k_iP(t,T_m+\delta_k)+b_iP(t,T_m)
\end{aligned} \end{equation*}

If on the other hand $b_i<0<w^k_i$ or $w^k_i<0<b_i$, we are dealing with a European call or put option respectively. Closed-form expressions for European bond options are available based on Black's formula and have been treated extensively in the literature; see for example \cite{musiela2005martingale}, \cite{filipovic2009term} or \cite{brigo2007interest}. In our case we have
\begin{equation*} \begin{aligned}
    H_i(t)=\begin{cases}
    w_i^kP(t,T_m+\delta_k)\Phi\left(d_+\right)+b_iP(t,T_m)\Phi\left(d_-\right)&\text{if }b_i<0<w^k_i\\
    -b_iP(t,T_m)\Phi\left(-d_-\right)-w_i^kP(t,T_m+\delta_k)\Phi\left(-d_+\right)&\text{if }w^k_i<0<b_i
    \end{cases}
\end{aligned} \end{equation*}
where $\Phi$ denotes the CDF of a standard normal distribution and we define
\begin{equation*} \begin{aligned}
    &d_\pm:=\frac{\log\left(-\frac{w_i^kP(t,T_m+\delta_k)}{b_iP(t,T_m)}\right)\pm\frac{1}{2}\Sigma(t,T_m)}{\sqrt{\Sigma(t,T_m)}}
\end{aligned} \end{equation*}
and
\begin{equation*} \begin{aligned}
    \Sigma(t,T_m):=\int_t^{T_m}\left\|\nu(u,T_{m}+\delta_k)-\nu(u,T_m)\right\|^2du
\end{aligned} \end{equation*}
In the expression above, the function $\nu(t,T)\in\mathbb{R}^d$ refers to the instantaneous volatility at time $t$ of a discount bond maturing at $T$. Under the dynamics of \ref{eqn: canonical repr}, $\nu$ is given by
\begin{equation} \begin{aligned}
    \nu(t,T)=\begin{pmatrix}\sum_{i=1}^dB_i(t,T)\sigma_{i1}(t)\\\vdots\\\sum_{i=1}^dB_i(t,T)\sigma_{id}(t)\end{pmatrix} \label{eqn: volatility ZCB}
\end{aligned} \end{equation}

\subsection{The continuation value with fully connected NN\label{sec: eval fully NN}}
Once again, we consider the network $G_m(\cdot)$, focus on the outcome of hidden node $i\in\{1,\ldots,q\}$ and let $t\in[T_{m-1},T_m)$. Now our aim is to evaluate the conditional expectation below, which by a change of num\'eraire argument can be rewritten as
\begin{equation} \begin{aligned}
    &\mathbb{E^Q}\left[e^{-\int_t^{T_m}r(u)du}\varphi(\mathbf{w}_i^\top\log \mathbf{P}(T_m)-b_i)\Big|\mathcal{F}_t\right]\nonumber\\
    &\qquad\qquad=P(t,T_m)\mathbb{E}^{T_m}\left[\left\{\mathbf{w}_i^\top\log \mathbf{P}(T_m)-b_i),\;0\right\}\Big|\mathcal{F}_t\right]\label{eqn:CondExp}
\end{aligned} \end{equation}
where the expectation on the right is taken under the $T_m$-forward measure, taking $P(t,T_m)$ as num\'eraire. The weight-vector $\mathbf{w}_i$ (corresponding to hidden node $i$) and $\log \mathbf{P}(T_m)$ are defined as
\begin{equation*} \begin{aligned}
    \mathbf{w}_i=\begin{pmatrix}w_1^i\\ \vdots\\w_d^i\end{pmatrix},\qquad\log\mathbf{P}(T_m)=\begin{pmatrix}\log P(T_m,T_{m}+\delta_1)\\\vdots\\\log P(T_m,T_{m}+\delta_d)\end{pmatrix}
\end{aligned} \end{equation*}
with $T_m<T_{m}+\delta_1<\ldots<T_{m}+\delta_d\leq T_M$. We set the input-dimension equal to the number of risk-factors (i.e. $d=n$). Therefore we can write
\begin{equation*} \begin{aligned}
    \mathbf{w}_i^\top\log \mathbf{P}(T_m)=&\,\sum_{j=1}^d w_j^i\log P(T_m,T_{m}+\delta_j)\\
    =&\,\sum_{j=1}^d w_j^i A(T_m,T_{m}+\delta_j)-\sum_{j=1}^dw_j^i\sum_{k=1}^dB_k(T_m,T_{m}+\delta_j)x_k(T_m)\\
    =&\,\begin{pmatrix}w_1^i&\hdots&w_d^i\end{pmatrix}\begin{pmatrix} A(T_m,T_{m}+\delta_1)\\\vdots\\ A(T_m,T_{m}+\delta_d)\end{pmatrix}\\
    &\,-\begin{pmatrix}w_1^i&\hdots&w_d^i\end{pmatrix}\begin{pmatrix}B_1(T_m,T_{m}+\delta_1)&\hdots&B_d(T_m,T_{m}+\delta_1)\\\vdots&\ddots&\vdots\\B_1(T_m,T_{m}+\delta_d)&\hdots&B_d(T_m,T_{m}+\delta_d)\end{pmatrix}\begin{pmatrix}x_1(T_m)\\\vdots\\x_d(T_m)\end{pmatrix}\\
    =&\,\mathbf{w}_i^\top \mathbf{A}(T_m)-\mathbf{w}_i^\top \mathbf{B}(T_m)\mathbf{x}_{T_m}
\end{aligned} \end{equation*}
where we implicitly defined 
\begin{equation*} \begin{aligned}
    \mathbf{A}(T_m)&:=\begin{pmatrix}A(T_m,T_{m}+\delta_1)\\\vdots\\ A(T_m,T_{m}+\delta_d)\end{pmatrix},\\
    \mathbf{B}(T_m)&:=\begin{pmatrix}B_1(T_m,T_{m}+\delta_1)&\hdots&B_d(T_m,T_{m}+\delta_1)\\\vdots&\ddots&\vdots\\B_1(T_m,T_{m}+\delta_d)&\hdots&B_d(T_m,T_{m}+\delta_d)\end{pmatrix}
\end{aligned} \end{equation*}
In order to compute the conditional expectation of \ref{eqn:CondExp}, a change of measure is required to obtain the dynamics of $x_1,\ldots,x_n$ under the $T_m-$forward measure. Consider the Radon-Nikodym derivative process \cite{beyna2013interest}, defined by
\begin{equation*} \begin{aligned}
    \frac{d\mathbb{Q}^{T_m}}{d\mathbb{Q}}\bigg|\mathcal{F}_t=\frac{B(t)}{B(T_m)}\frac{P(T_m,T_m)}{P(t,T_m)}=\exp\left\{-\int_t^{T_m}\nu(u,T_m)\cdot d\mathbf{W}(u)-\frac{1}{2}\int_t^{T_m}\left\|\nu(u,T_m)\right\|^2du\right\}
\end{aligned} \end{equation*}
where $\nu$ refers to to the instantaneous volatility of the num\'eraire, given in \ref{eqn: volatility ZCB}. The dynamics of the risk-factors under $\mathbb{Q}^{T_m}$ can be obtained by an application of Girsanov's theorem \cite{musiela2005martingale}. Denote by $\sigma_i(t):=\left(\sigma_{i1}(t),\ldots,\sigma_{id}(t)\right)$ the $i^{th}$ row of the volatility matrix of $\mathbf{x}_t$ and let $\tilde W_i^{T_m}$ be Brownian motions under $\mathbb{Q}^{T_m}$ then
\begin{equation} \begin{aligned}
    dx_i(t)&=-a_i(t)x_i(t)dt-\sigma_i(t)\cdot\nu(t,T_m)dt+\tilde\sigma_i(t)d\tilde W_i^{T_m}(t),\qquad i=1,\ldots,d\label{eqn:SDE2}
\end{aligned} \end{equation}
Let $\Theta_i(t,T_m)=\int_t^{T_m}\sigma_i(s)\cdot\nu(s,T_m)e^{-\int_{s}^{T_m}a_i(u)du}ds$, then the SDE above solves to
\begin{equation} \begin{aligned}
    x_i(T_m)&=x_i(t)e^{-\int_{t}^{T_m}a_i(u)du}-\Theta_i(t,T_m)+\int_t^{T_m}\tilde\sigma_i(s)e^{-\int_{s}^{T_m}a_i(u)du}d\tilde W_i^{T_m}(s),\quad i=1,\ldots,d \label{eqn:riskf}
\end{aligned} \end{equation}
It follows that as a property of the It\^o integral, the risk-factors $\left(x_1(T_m),\dots,x_n(T_m)\right)$ as presented in \ref{eqn:riskf}, conditional on $\mathcal{F}_t$, have a multivariate normal distribution under $\mathbb{Q}^{T_m}$. Their mean vector, and co-variance matrix are respectively given by
\begin{equation*}
\begin{gathered}
    \bm{\mu}:=\begin{pmatrix}\mu_1\\\vdots\\\mu_d\end{pmatrix}:=\begin{pmatrix}\mathbb{E}^{T_m}\left[x_1(T_m)|\mathcal{F}_t\right]\\\vdots\\\mathbb{E}^{T_m}\left[x_d(T_m)|\mathcal{F}_t\right]\end{pmatrix}=\begin{pmatrix}x_1(t)e^{-\int_{t}^{T_m}a_1(u)du}-\Theta_1(t,T_m)\\\vdots\\x_d(t)e^{-\int_{t}^{T_m}a_d(u)du}-\Theta_d(t,T_m)\end{pmatrix}\\
    \mathbf{C}:=\begin{pmatrix}c_{11}&\hdots&c_{1d}\\\vdots&\ddots&\vdots\\c_{d1}&\hdots&c_{dd}\end{pmatrix}:=\begin{pmatrix}\text{Cov}[x_1(T_m),x_1(T_m)|\mathcal{F}_t]&\hdots&\text{Cov}\left[x_1(T_m),x_d(T_m)|\mathcal{F}_t\right]\\\vdots&\ddots&\vdots\\\text{Cov}\left[x_d(T_m),x_1(T_m)|\mathcal{F}_t\right]&\hdots&\text{Cov}[x_d(T_m),x_d(T_m)|\mathcal{F}_t]\end{pmatrix}\\
    c_{ii}=\int_t^{T_m}\tilde\sigma_i^2(s)e^{-2\int_{s}^{T_m}a_i(u)du}ds\qquad \forall_{i\in\{1,\ldots,d\}}\\
    c_{ij}=\int_t^{T_m}\rho(s)\tilde\sigma_i(s)\tilde\sigma_j(s)e^{-\int_{s}^{T_m}\left(a_i(u)+a_j(u)\right)du}ds\qquad\forall_{i\neq j}
\end{gathered}
\end{equation*}
As a result it should be clear the the random variable $Y:=\mathbf{w}_i^\top\log \mathbf{P}(T_m)$ is normally distributed with mean and variance given respectively by
\begin{equation*} \begin{aligned}
    \mu_Y=\mathbf{w}_i^\top \mathbf{A}(T_m)-\mathbf{w}_i^\top \mathbf{B}(T_m)\bm{\mu}
\end{aligned} \end{equation*}
and variance
\begin{equation*} \begin{aligned}
    \sigma_Y^2=\mathbf{w}_i^\top\mathbf{B}(T_m)\mathbf{CB}(T_m)^\top\mathbf{w}_i
\end{aligned} \end{equation*}
As a result we can compute
\begin{equation} \begin{aligned}
    &\mathbb{E^Q}\left[e^{-\int_t^{T_m}r(u)du}\varphi(\mathbf{w}_i^\top\log \mathbf{P}(T_m)-b_i)\Big|\mathcal{F}_t\right]\nonumber=P(t,T_m)\mathbb{E}^{T_m}\left[\max(Y-b_i,0)\big|\mathcal{F}_t\right]
\end{aligned} \end{equation}
where the conditional expectation on the right hand side can be expressed in closed-form following a similar analysis as presented in \cite{musiela2005martingale}. Let $d_i:=\frac{\mu_Y-b_i}{\sigma_Y}$ and denote by $\xi\sim N(0,1)$ a standard normal random variable. Then it follows that
\begin{equation*} \begin{aligned}
    \mathbb{E}^{T_m}\left[\max(Y-b_i,0)\big|\mathcal{F}_t\right]&=\mathbb{E}^{T_m}\left[(Y-b_i)\mathbbm{1}_{\{Y>b_i\}}\big|\mathcal{F}_t\right]\\
    &=\mathbb{E}^{T_m}\left[(Y-\mu_Y)\mathbbm{1}_{\{Y>b_i\}}\right]+\left(\mu_Y-b_i\right)\mathbb{Q}^{T_m}\left[Y>b_i\big|\mathcal{F}_t\right]\\
    &=\sigma_Y\mathbb{E}^{T_m}\left[\frac{Y-\mu_Y}{\sigma_Y}\mathbbm{1}_{\left\{\frac{Y-\mu_Y}{\sigma_Y}>-d_i\right\}}\bigg|\mathcal{F}_t\right]\\
    &\quad+\left(\mu_Y-b_i\right)\mathbb{Q}^{T_m}\left[\frac{Y-\mu_Y}{\sigma_Y}>-d_i\bigg|\mathcal{F}_t\right]\\
    &=\sigma_Y\mathbb{E}\left[-\xi\mathbbm{1}_{\left\{-\xi<d_i\right\}}\right]+\left(\mu_Y-b_i\right)\mathbb{P}\left[\xi<d_i\right]\\
    &=\sigma_Y\phi(d_i)+\left(\mu_Y-b_i\right)\Phi(d_i)
\end{aligned} \end{equation*}
where $\phi$ denotes the standard normal density function and $\Phi$ the standard normal cumulative density function.

\section{Pre-processing the regression-data\label{sec: preprocessing the data}}
A procedure that significantly improves the fitting performance of the neural networks is the normalization of the training-data. Linear rescaling of the input to the optimizer is a common form of data pre-processing \cite{bishop1995neural}. In the case of a multivariate input, the variables might have typical values in different orders of magnitude, even though that does not reflect their relative influence on determining the outcome \cite{bishop1995neural}. Normalizing the scale avoids that the impact of certain input is prioritized over other input. Also, the transfer of the final weights in $G_{m+1}$ to the initialization of $G_m$ is more effective as the target variables are of roughly the same size at each time-step. In the default situation, the average continuation values would change in magnitude and the risk-factor distribution would grow with each passing of a monitor date.

Another argument for pre-processing the input is that large data values, typically induce large weights. Large weights can lead to exploding network outputs in the feed-forward process \cite{goodfellow2016deep}. Furthermore can it cause an unstable optimization of the network, as extreme gradients can be very sensitive to small perturbations in the data \cite{goodfellow2016deep}.

In practice we propose the following rescaling of the data. Denote by
\begin{equation*} \begin{aligned}
    \hat z(T_m):=\left\{\begin{pmatrix}z_1(T_m)\\\vdots\\z_d(T_m)\end{pmatrix}_{1},\ldots,\begin{pmatrix}z_1(T_m)\\\vdots\\z_d(T_m)\end{pmatrix}_{N}\right\},\quad \hat V(T_m):=\left\{\tilde V\left(T_m;x_{T_m}^1\right),\ldots,\tilde V\left(T_m;x_{T_m}^N\right)\right\}
\end{aligned} \end{equation*}
the training points for the in- and output of network $G_m$. Define the standard sample mean and standard deviations as
\begin{equation*} \begin{aligned}
    &\mu_{z_i}(T_m):=\frac{1}{N}\sum_{n=1}^N z_i^n(T_m),\qquad\mu_{V}(T_m):=\frac{1}{N}\sum_{n=1}^N \tilde V\left(T_m;x_{T_m}^n\right)\\
    &\sigma_{z_i}(T_m):=\frac{1}{N-1}\sum_{n=1}^N\left(z_i^n(T_m)-\mu_{z_i}\right)^2,\quad \sigma_{V}(T_m):=\frac{1}{N-1}\sum_{n=1}^N\left(\tilde V\left(T_m;x_{T_m}^n\right)-\mu_V(T_m)\right)^2
\end{aligned} \end{equation*}
We then perform a simple element-wise linear transformation to obtain the scaled data $\hat z^\dagger$ and $\hat V^\dagger$ given by
\begin{equation*} \begin{aligned}
    \hat z_i^\dagger(T_m) := \frac{\hat z_i(T_m)-\mu_{z_i}(T_m)}{\sigma_{z_i}(T_m)},\qquad \hat V^\dagger(T_m) := \frac{\hat V(T_m)}{\sigma_{V}(T_m)}
\end{aligned} \end{equation*}
With the transformations above in mind, it is important to adjust associated composition of the replicating portfolio accordingly. For the two network designs, this has the following implications:
\begin{description}
    \item[The locally connected NN case:] Consider the outcome of the $i^{th}$ hidden node $\nu_i$ and denote the input of the network as $\mathbf{z}$. Then $\nu_i=\varphi\left(w_i^kz_k+b_i\right)$, where $k$ is the index of the only non-zero entry of $\mathbf{w}_i$, the $i^{th}$ row of weight matrix $\mathbf{w}_1$. The transformation $\mathbf{z}\mapsto \frac{\mathbf{z}-\mu_\mathbf{z}}{\sigma_{\mathbf{z}}}$ implies that \begin{equation*} \begin{aligned}
        \nu_i\mapsto\varphi\left(w_i^k\frac{z_k-\mu_{z_k}}{\sigma_{z_k}}+b_i\right)=\varphi\left(\frac{w_i^k}{\sigma_{z_k}}z_k+\left(b_i-\frac{w_i^k\mu_{z_k}}{\sigma_{z_k}}\right)\right)
    \end{aligned} \end{equation*}
    As a consequence, in the analysis of \cref{sec: eval local NN}, the transformations $w_i^k\mapsto\frac{w_i^k}{\sigma_{z_k}}$ and $b_i\mapsto b_i-\frac{w_i^k\mu_{z_k}}{\sigma_{z_k}}$ should be taken into account. Additionally, the transformation $\mathbf{w}_2\mapsto \sigma_V\mathbf{w}_2$ is required to account for the scaling of $\hat V$.
    \item[The fully connected NN case:] Again consider the outcome of the $i^{th}$ hidden node $\nu_i$. This time the transformation $\mathbf{z}\mapsto \frac{\mathbf{z}-\mu_\mathbf{z}}{\sigma_{\mathbf{z}}}$ implies that
    \begin{equation*} \begin{aligned}
        \nu_i\mapsto \varphi\left(\mathbf{w}_i^\top \frac{\mathbf{z}-\mu_\mathbf{z}}{\sigma_{\mathbf{z}}}+b_i\right)=\varphi\left(\sum_{j=1}^d\frac{w_i^j}{\sigma_{z_j}}z_j+\left(b_i-\sum_{j=1}^d\frac{w_i^j\mu_{z_j}}{\sigma_{z_i}}\right)\right)
    \end{aligned} \end{equation*}
    As a consequence, in the analysis of \cref{sec: eval fully NN}, the transformations $\mathbf{w}_i\mapsto$ $\left(\frac{w_i^1}{\sigma_{z_1}},\ldots\right.$ $\left.,\frac{w_i^d}{\sigma_{z_d}}\right)^\top$ and $b_i\mapsto b_i-\sum_{j=1}^d\frac{w_i^j\mu_{z_j}}{\sigma_{z_i}}$ should be taken into account. And, again the transformation $\mathbf{w}_2\mapsto \sigma_V\mathbf{w}_2$ is required to account for the scaling of $\hat V$.
\end{description}

\section{Hyperparameter selection\label{sec: hyper-parameters}}
The accuracy of the neural network fitting procedure is dependent on the choice of several hyperparameters. For the numerical experiments reported in \cref{sec: numerical examples}, the hyperparameters have been selected based on a convergence analysis. We focused on the following:
\begin{itemize}
    \item \textbf{Hidden node count:} see \cref{fig: errors nodes}
    \item \textbf{Size training set:} see \cref{fig: errors tps}
    \item \textbf{Learning-rate:} see \cref{fig: errors lrs}
\end{itemize}
Several numerical experiments indicated that the batch-size did not have significant impact on the fitting accuracy and is therefore fixed at a default of 32. For the convergence analysis of the parameters listed above, we considered a $1Y\times10Y$ receiver Bermudan swaption with a fixed rate of $K=0.03$. Experiments are performed under the two-factor G2++ model, using the model specifications depicted in \cref{table:2F parameters}. The figures show the mean-absolute errors of the neural network fits per monitor date in basis points of the notional.

\begin{figure}
\centering
\begin{subfigure}{.49\textwidth}
  \centering
  \includegraphics[width=0.95\linewidth]{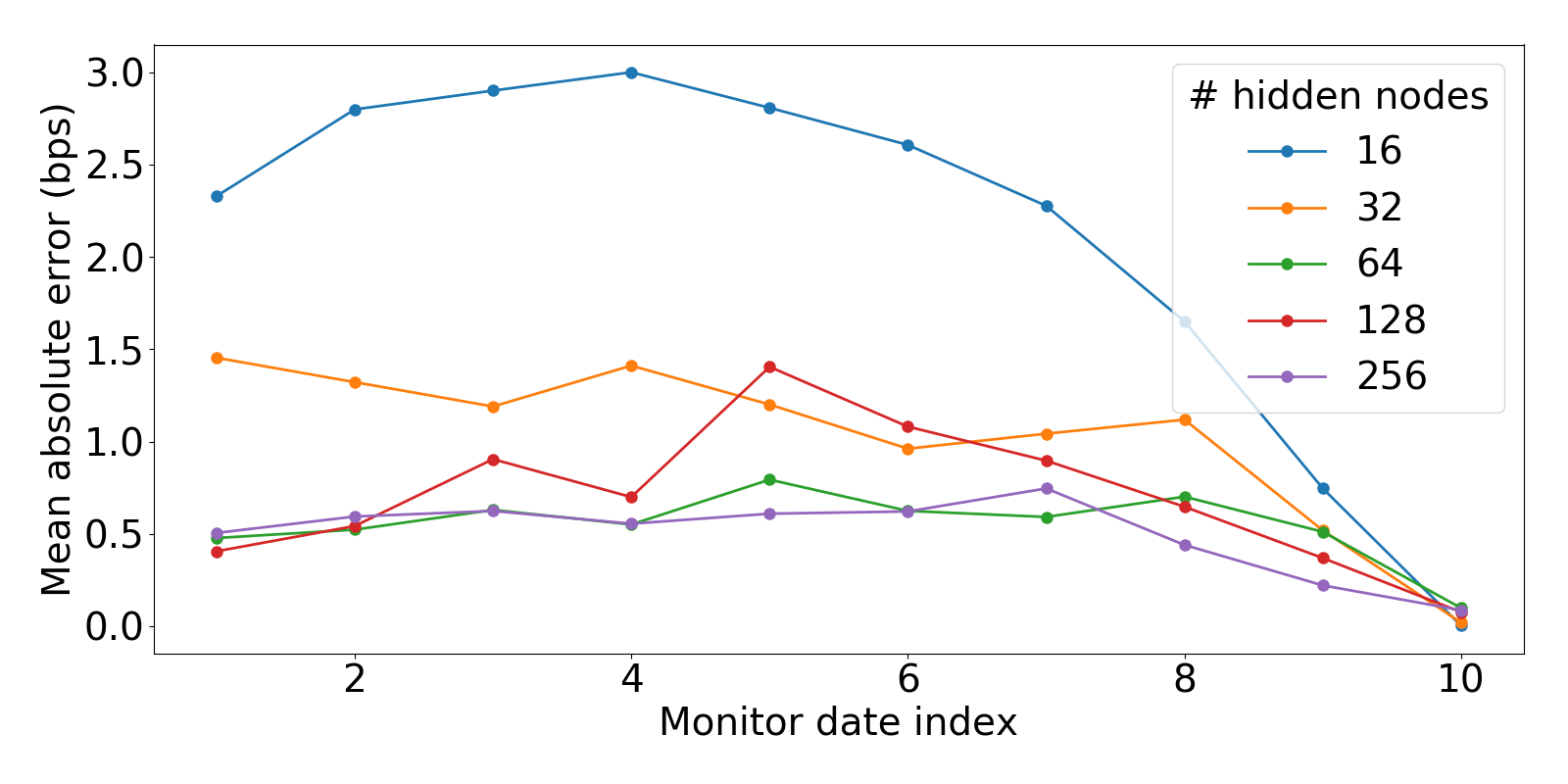}
  \caption{Locally-connected NN}
  \label{fig:loc nodes}
\end{subfigure}
\begin{subfigure}{.49\textwidth}
  \centering
  \includegraphics[width=0.95\linewidth]{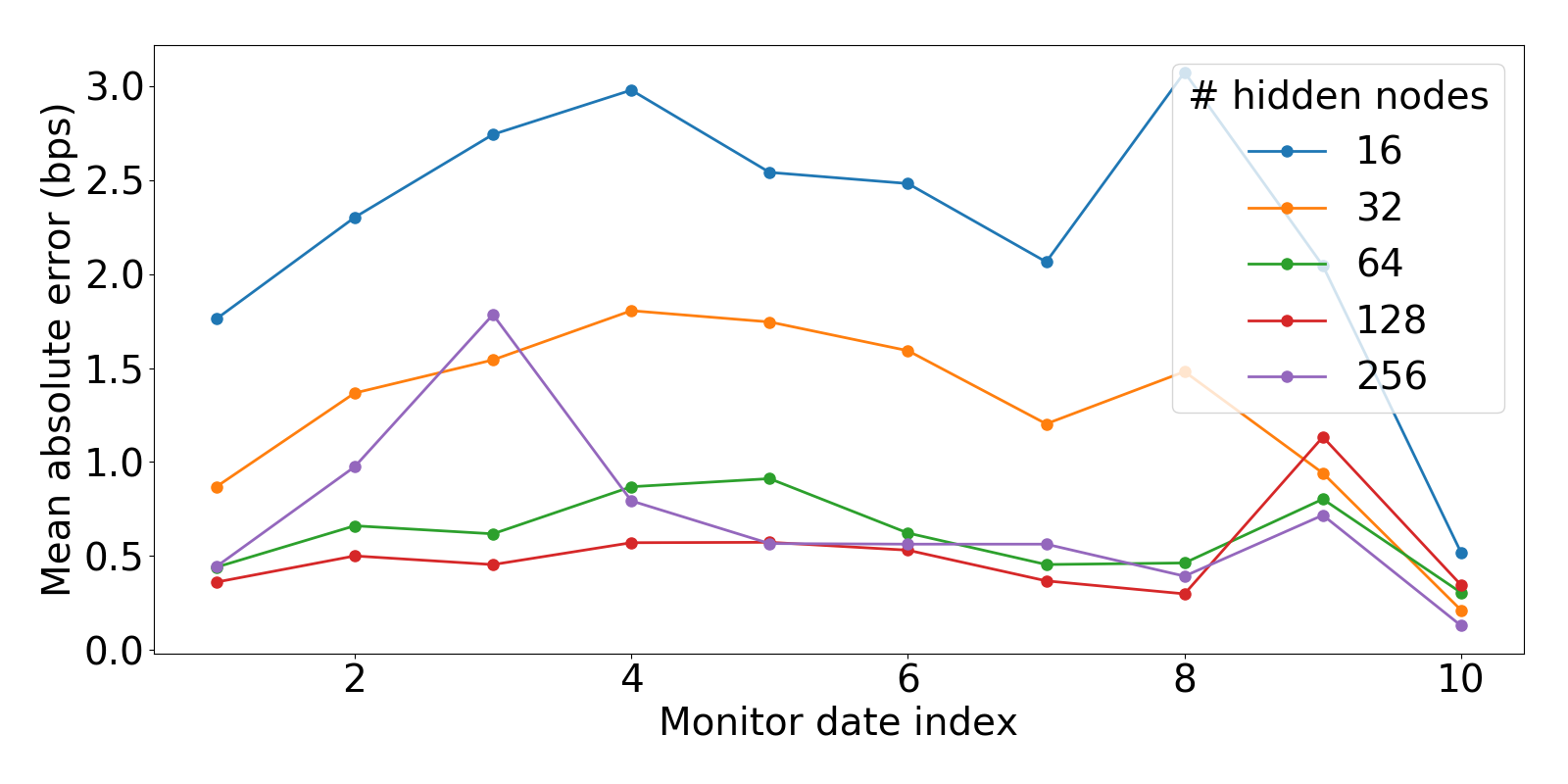}
  \caption{Fully-connected NN}
  \label{fig:con nodes}
\end{subfigure}
\caption{Impact hidden node count: Accuracy of the neural network fit per monitor date under a 2-factor model. \# training points = 5000. Learning-rate = 0.0002.}
\label{fig: errors nodes}
\end{figure}

\begin{figure}
\centering
\begin{subfigure}{.5\textwidth}
  \centering
  \includegraphics[width=0.95\linewidth]{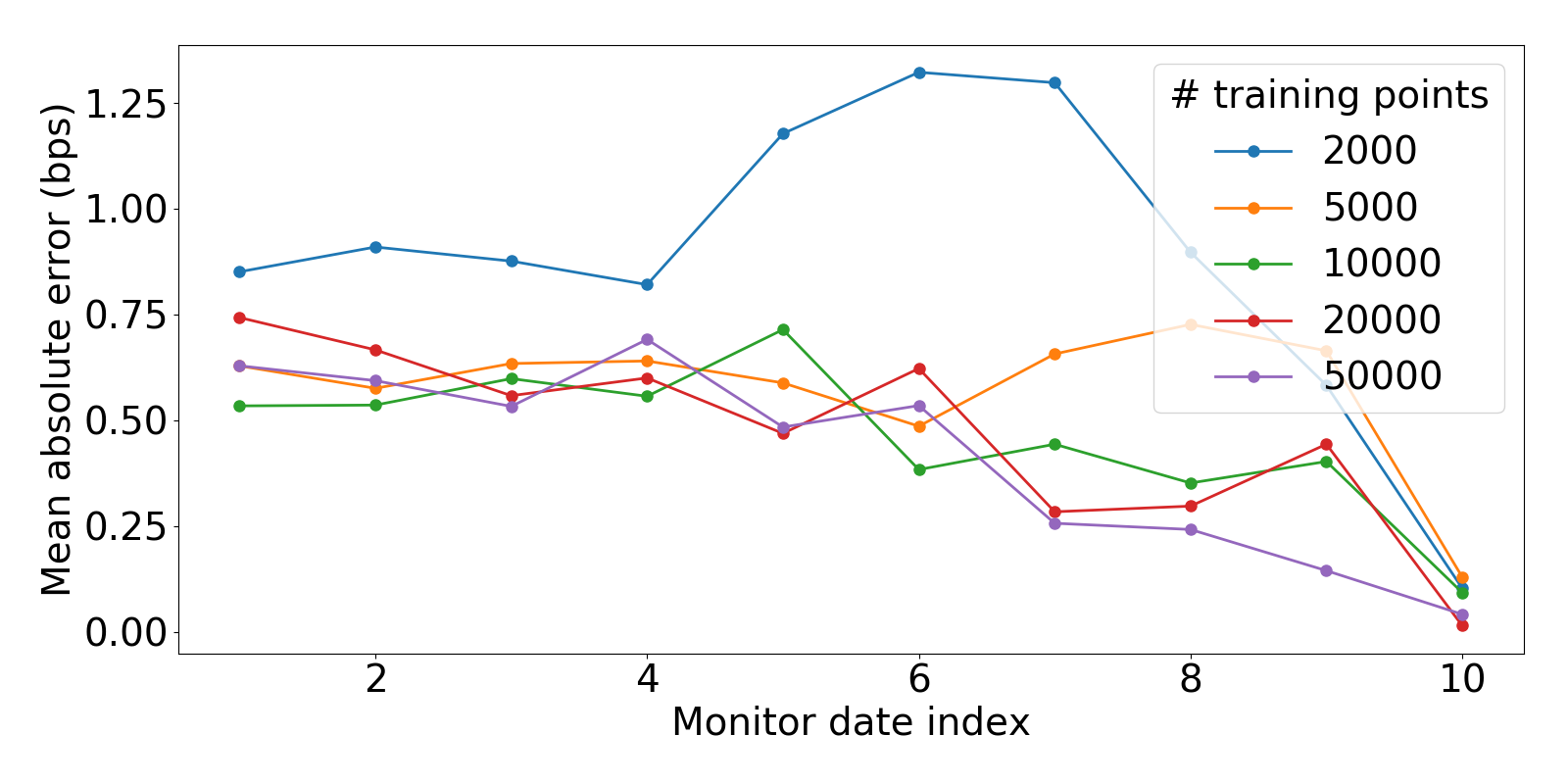}
  \caption{Locally-connected NN}
  \label{fig:loc tp}
\end{subfigure}%
\begin{subfigure}{.5\textwidth}
  \centering
  \includegraphics[width=0.95\linewidth]{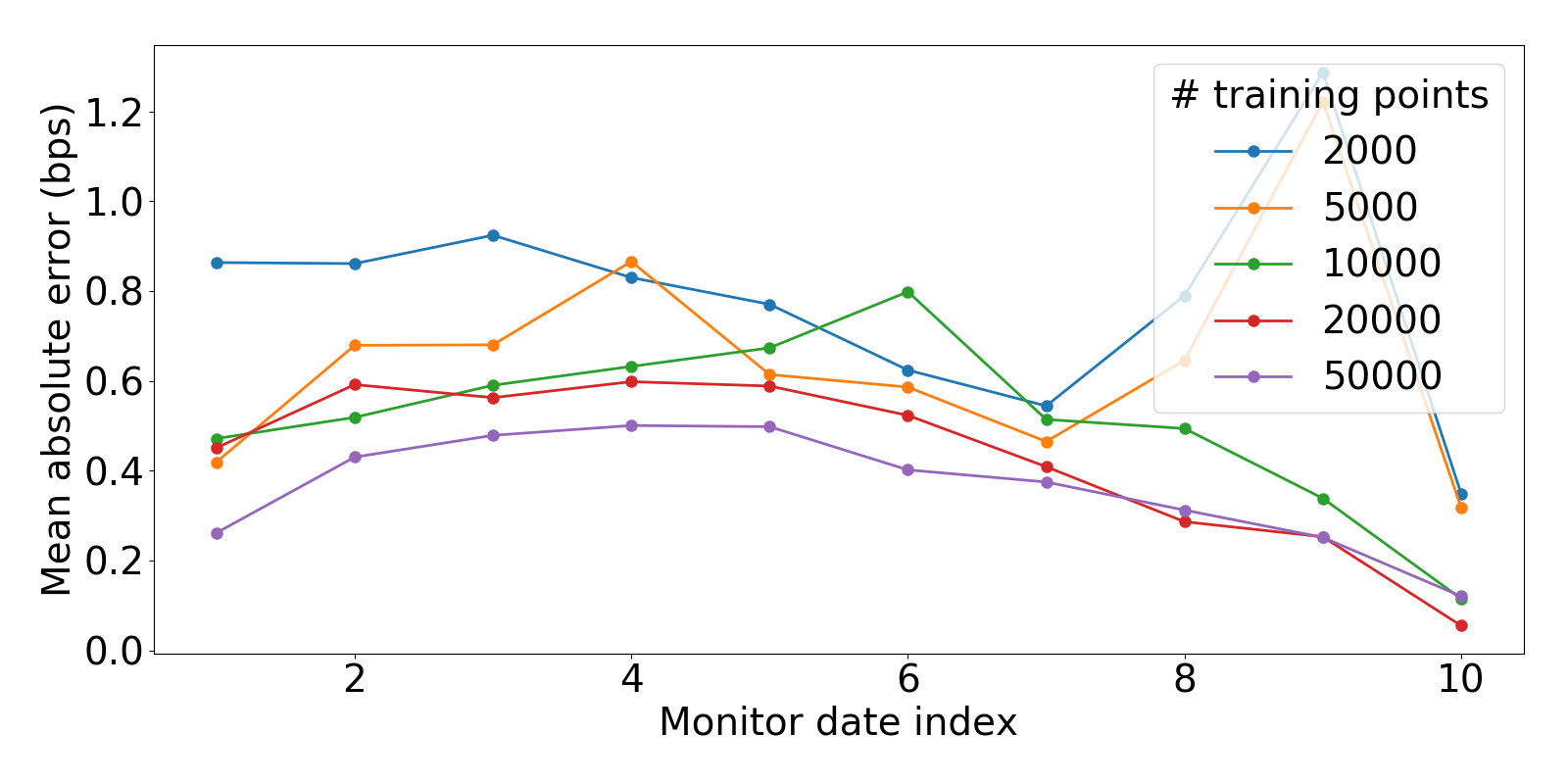}
  \caption{Fully-connected NN}
  \label{fig:con tp}
\end{subfigure}
\caption{Impact size training set: Accuracy of the neural network fit per monitor date under a 2-factor model. \# hidden nodes = 64. Learning-rate = 0.0002.}
\label{fig: errors tps}
\end{figure}

\begin{figure}
\centering
\begin{subfigure}{.5\textwidth}
  \centering
  \includegraphics[width=0.95\linewidth]{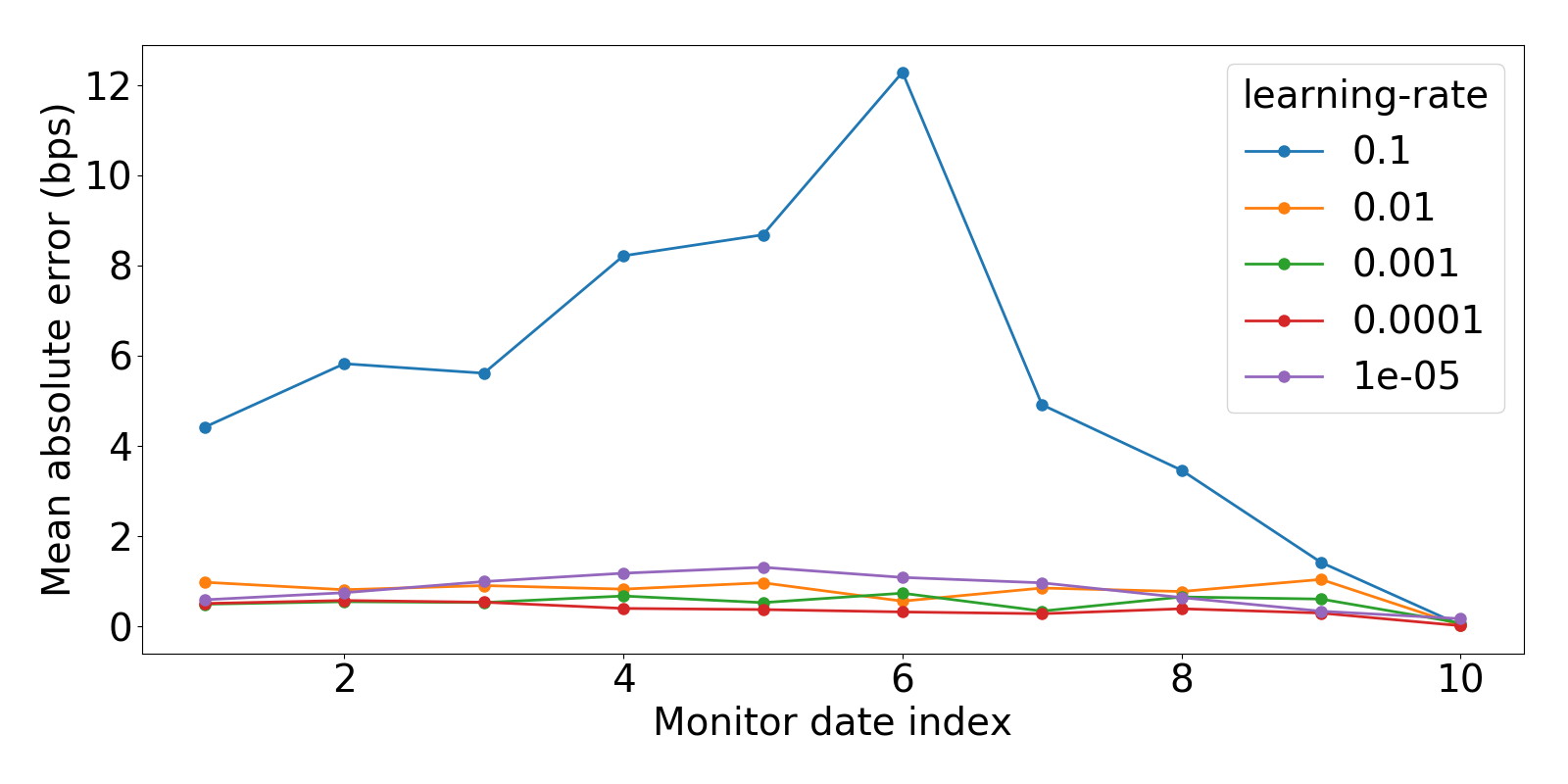}
  \caption{Locally-connected NN}
  \label{fig:loc lr}
\end{subfigure}%
\begin{subfigure}{.5\textwidth}
  \centering
  \includegraphics[width=0.95\linewidth]{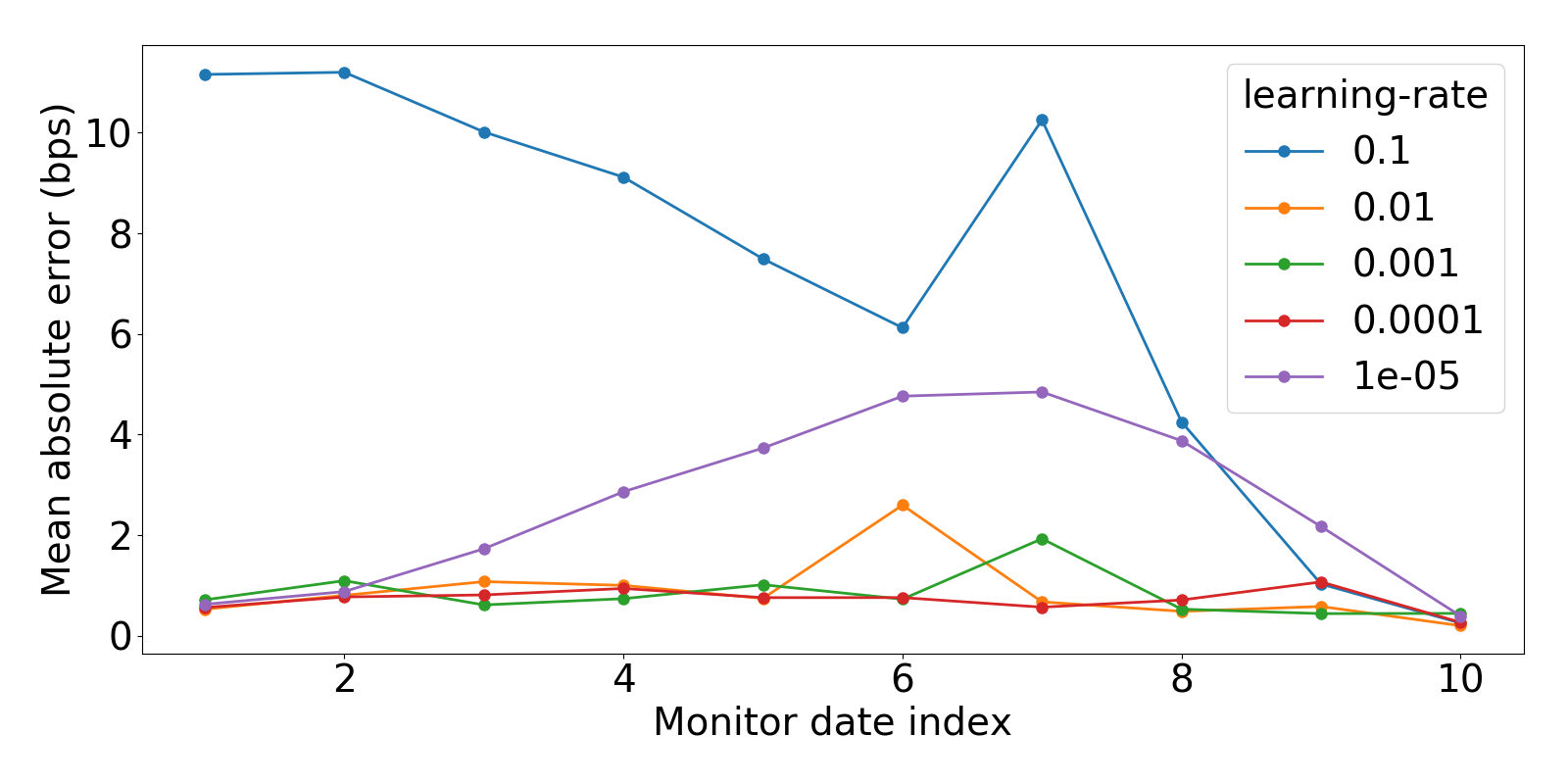}
  \caption{Fully-connected NN}
  \label{fig:con lr}
\end{subfigure}
\caption{Impact learning-rate: Accuracy of the neural network fit per monitor date under a 2-factor model. \# hidden nodes = 64. \# training-points = 10,000.}
\label{fig: errors lrs}
\end{figure}

\section{Proof of \cref*{theorem: MAE error}\label{proof: direct error}}
\begin{proof}
First we fix some notation.
\begin{itemize}
    \item Let $V_m:=V\left(T_m\right)$ denote the true price of the bermudan swaption at $T_m$ conditioned on the fact it is not yet exercised.
    \item Let $\tilde C_m:=B(T_m)\mathbb{E^Q}\left[\frac{G_{m+1}\left(z_{m+1}\right)}{B\left(T_{m+1}\right)}\bigg|\mathcal{F}_{T_m}\right]$ denote the 
    estimator of the continuation value at $T_m$.
    \item Let $\tilde V_m:=\max\left\{\tilde C_m, h_m(\mathbf{x}_{T_m})\right\}$ denote the estimator of $V_m$.
    \item Let $G_m:=G_m\left(z_m\right)$ denote the neural network approximation of $\tilde V_m$.
    \item Let $B_m:=B(T_{m})$ denote the num\'eraire at $T_m$.
    \item Let $h_m:=h_m\left(\mathbf{x}_{T_m}\right)$.
\end{itemize}
Let $T_m\in \{T_0,\ldots,T_{M-1}\}$. We will prove the theorem by induction on $m$. For the base case, note that at time zero we have
\begin{equation} \begin{aligned}
    \left|V(0)-\tilde V(0)\right|&=\left|\mathbb{E^Q}\left[\frac{V_0}{B_0}\bigg|\mathcal{F}_0\right]-\mathbb{E^Q}\left[\frac{G_0}{B_0}\bigg|\mathcal{F}_0\right]\right|\leq\mathbb{E^Q}\left[\left|\frac{V_0-G_0}{B_0}\right|\bigg|\mathcal{F}_0\right]
\end{aligned} \end{equation}
which is induced by Jensen's inequality. For the inductive step, assume that for some $m\in \{0,\ldots,M-1\}$ we have that
\begin{equation} \begin{aligned}
    \left|V(0)-\tilde V(0)\right|< \mathbb{E^Q}\left[\left|\frac{V_m-G_m}{B_m}\right|\bigg|\mathcal{F}_0\right] + m\cdot\varepsilon\label{eqn: proof1}
\end{aligned} \end{equation}
The expectation in (\ref{eqn: proof1}) can be rewritten using the triangular inequality
\begin{equation} \begin{aligned}
    \mathbb{E^Q}\left[\left|\frac{V_m-G_m}{B_m}\right|\bigg|\mathcal{F}_0\right]&=
    \mathbb{E^Q}\left[\left|\frac{V_m-\tilde V_m + \tilde V_m - G_m}{B_m}\right|\Bigg|\mathcal{F}_0\right]\\
    &\leq \mathbb{E^Q}\left[\left|\frac{V_m-\tilde V_m}{B_m}\right|\Bigg|\mathcal{F}_0\right] + \mathbb{E^Q}\left[\left|\frac{\tilde V_m-G_m}{B_m}\right|\Bigg|\mathcal{F}_0\right]
    \label{eqn: proof2}
\end{aligned} \end{equation}
The second term in \ref{eqn: proof2} is by assumption bounded by $\varepsilon$. Note that the first term in \ref{eqn: proof2} can be bounded as
\begin{equation*} \begin{aligned}
    \mathbb{E^Q}\left[\left|\frac{V_m-\tilde V_m}{B_m}\right|\Bigg|\mathcal{F}_0\right]&=\mathbb{E^Q}\left[\left|\frac{\max\left\{C_m, h_m\right\}-\max\left\{\tilde C_m, h_m\right\}}{B_m}\right|\Bigg|\mathcal{F}_0\right]\\
    &\leq \mathbb{E^Q}\left[\left|\frac{C_m-\tilde C_m}{B_m}\right|\Bigg|\mathcal{F}_0\right]\\
    &=\mathbb{E^Q}\left[\left|\mathbb{E^Q}\left[\frac{V_{m+1}}{B_{m+1}}\bigg|\mathcal{F}_{T_m}\right]-\mathbb{E^Q}\left[\frac{G_{m+1}}{B_{m+1}}\bigg|\mathcal{F}_{T_m}\right]\right|\Bigg|\mathcal{F}_0\right]\\
    &\leq\mathbb{E^Q}\left[\mathbb{E^Q}\left[\left|\frac{V_{m+1}-G_{m+1}}{B_{m+1}}\right|\bigg|\mathcal{F}_{T_m}\right]\Bigg|\mathcal{F}_0\right]\\
    &= \mathbb{E^Q}\left[\left|\frac{V_{m+1}-G_{m+1}}{B_{m+1}}\right|\bigg|\mathcal{F}_{0}\right]
\end{aligned} \end{equation*}
It follows that
\begin{equation*} \begin{aligned}
     \left|V(0)-\tilde V(0)\right|< \mathbb{E^Q}\left[\left|\frac{V_{m+1}-G_{m+1}}{B_{m+1}}\right|\bigg|\mathcal{F}_{0}\right] + (m+1)\cdot\varepsilon
\end{aligned} \end{equation*}
For the final step, note that if $m=M-1$ we have
\begin{equation*} \begin{aligned}
    \mathbb{E^Q}\left[\left|\frac{V_m-G_m}{B_m}\right|\bigg|\mathcal{F}_0\right]=\mathbb{E^Q}\left[\left|\frac{\max\left\{h_{M-1},0\right\}-G_{M-1}}{B_{M-1}}\right|\bigg|\mathcal{F}_0\right]< \varepsilon
\end{aligned} \end{equation*}
We conclude by induction on $m$ that $ \left|V(0)-\tilde V(0)\right|<  M\varepsilon$
\end{proof}

\section{Proof of \Cref*{theorem: lower bound}\label{proof: lower bound}}
\begin{proof}
We consider the following three events: $\{\tau = \tilde\tau\}$, $\{\tau < \Tilde\tau\}$ and $\{\tau>\tilde\tau\}$. Note that
\begin{equation*} \begin{aligned}
    V(0)-L(0)&=\mathbb{E^Q}\left[\frac{h_{\tau}(\mathbf{x}_\tau)}{B(\tau)}-\frac{h_{\tilde\tau}(\mathbf{x}_{\tilde\tau})}{B(\tilde\tau)}\bigg|\mathcal{F}_0\right]\\
    &=\mathbb{E^Q}\left[\left(\frac{h_{\tau}(\mathbf{x}_\tau)}{B(\tau)}-\frac{h_{\tilde\tau}(\mathbf{x}_{\tilde\tau})}{B(\tilde\tau)}\right)\mathbbm{1}_{\{\tau=\tilde\tau\}}\bigg|\mathcal{F}_0\right]+\mathbb{E^Q}\left[\left(\frac{h_{\tau}(\mathbf{x}_\tau)}{B(\tau)}-\frac{h_{\tilde\tau}(\mathbf{x}_{\tilde\tau})}{B(\tilde\tau)}\right)\mathbbm{1}_{\{\tau<\tilde\tau\}}\bigg|\mathcal{F}_0\right]\\
    &\quad+\mathbb{E^Q}\left[\left(\frac{h_{\tau}(\mathbf{x}_\tau)}{B(\tau)}-\frac{h_{\tilde\tau}(\mathbf{x}_{\tilde\tau})}{B(\tilde\tau)}\right)\mathbbm{1}_{\{\tau>\tilde\tau\}}\bigg|\mathcal{F}_0\right]\\
    &=E_1+E_2+E_3
\end{aligned} \end{equation*}
We will bound the three terms above one by one.\\
\textbf{Bounding $\mathbf{E_1}$:} Starting with the event $\{\tau = \tilde\tau\}$, we observe that we can write
\begin{equation*} \begin{aligned}
    E_1=\mathbb{E^Q}\left[\left(\frac{h_{\tau}(\mathbf{x}_\tau)}{B(\tau)}-\frac{h_{\tau}(\mathbf{x}_{\tau})}{B(\tau)}\right)\mathbbm{1}_{\{\tau=\tilde\tau\}}\bigg|\mathcal{F}_0\right]=0
\end{aligned} \end{equation*}
\textbf{Bounding $\mathbf{E_2}$:} We continue with the event $\{\tau < \Tilde\tau\}$. For this we will introduce two types of sub-events: $A_m:=\left\{\tau=T_m\land \tilde\tau>T_m\right\}$ and $B_m:=\left\{\tau\leq T_m\land \tilde\tau>T_m\right\}$, where $\land$ denotes the logical AND operator. Also define the difference-process $e_m:=\frac{\tilde V(T_m)}{B(T_m)}-\frac{h_{\tilde\tau}(\mathbf{x}_{\tilde\tau})}{B(\tilde\tau)}$. It should be clear that $\mathbbm{1}_{\{\tau<\tilde\tau\}}=\sum_{m=0}^{M-1}\mathbbm{1}_{A_m}$. Therefore it holds that
\begin{equation*} \begin{aligned}
    E_2=\sum_{m=0}^{M-1}\mathbb{E^Q}\left[\left(\frac{h_{\tau}(\mathbf{x}_\tau)}{B(\tau)}-\frac{h_{\tilde\tau}(\mathbf{x}_{\tilde\tau})}{B(\tilde\tau)}\right)\mathbbm{1}_{A_m}\bigg|\mathcal{F}_0\right]\leq\sum_{m=0}^{M-1}\mathbb{E^Q}\left[e_m\mathbbm{1}_{A_m}\Big|\mathcal{F}_0\right]
\end{aligned} \end{equation*}
where the inequality follows from the fact that the direct estimator has the property $\tilde V(T_m)=\max\{\tilde C_m,h_m\}\geq h_m$. Now we will show by induction that $E_2<(M-1)\varepsilon$. First, observe that $A_0\equiv B_0$. Second, note that for any $m\in\{0,\ldots,M-1\}$ we have that
\begin{equation} \begin{aligned}
    \mathbb{E^Q}\left[e_m\mathbbm{1}_{B_m}\Big|\mathcal{F}_0\right]&=\mathbb{E^Q}\left[\left(\mathbb{E^Q}\left[\frac{G_{m+1}\left(z_{m+1}\right)}{B\left(T_{m+1}\right)}\bigg|\mathcal{F}_{T_m}\right]-\frac{h_{\tilde\tau}(\mathbf{x}_{\tilde\tau})}{B(\tilde\tau)}\right)\mathbbm{1}_{B_m}\bigg|\mathcal{F}_0\right]\\
    &=\mathbb{E^Q}\left[\left(\frac{G_{m+1}\left(z_{m+1}\right)}{B\left(T_{m+1}\right)}-\frac{h_{\tilde\tau}(\mathbf{x}_{\tilde\tau})}{B(\tilde\tau)}\right)\mathbbm{1}_{B_m}\bigg|\mathcal{F}_0\right]\\
    &\leq\mathbb{E^Q}\left[\left|\frac{G_{m+1}\left(z_{m+1}\right)}{B\left(T_{m+1}\right)}-\frac{\tilde V(T_{m+1})}{B(T_{m+1})}\right|\mathbbm{1}_{B_m}\Bigg|\mathcal{F}_0\right]+\mathbb{E^Q}\left[e_{m+1}\mathbbm{1}_{B_m}\Big|\mathcal{F}_0\right]\label{eqn: prf LB 1}
\end{aligned} \end{equation}
The first equality follows from the fact that $\tilde V(T_m)=\tilde C_m$ in the event $\tilde\tau>T_m$. The second equality follows from the tower rule in combination with the fact that $\mathbbm{1}_{B_m}$ is $\mathcal{F}_{T_m}-$measurable. The final inequality follows from an application of the triangle inequality.
The first term in \ref{eqn: prf LB 1} is by assumption bounded by $\varepsilon$. The second term in \ref{eqn: prf LB 1} can be rewritten by observing that $\mathbbm{1}_{B_m}:=\mathbbm{1}_{B_m^1}+\mathbbm{1}_{B_m^2}:=\mathbbm{1}_{\left\{\tau\leq T_m\land \tilde\tau=T_{m+1}\right\}}+\mathbbm{1}_{\left\{\tau\leq T_{m}\land \tilde\tau>T_{m+1}\right\}}$. We have that
\begin{equation*} \begin{aligned}
    \mathbb{E^Q}\left[e_{m+1}\mathbbm{1}_{B_m^1}\Big|\mathcal{F}_0\right]=\mathbb{E^Q}\left[\left(\frac{h_{m+1}(\mathbf{x}_{T_{m+1}})}{B(T_{m+1})}-\frac{h_{m+1}(\mathbf{x}_{T_{m+1}})}{B(T_{m+1})}\right)\mathbbm{1}_{B_m^1}\bigg|\mathcal{F}_0\right]=0
\end{aligned} \end{equation*}
Furthermore we have that $\mathbbm{1}_{B_m^2}+\mathbbm{1}_{A_{m+1}}=\mathbbm{1}_{B_{m+1}}$. Therefore we can infer that
\begin{equation*} \begin{aligned}
    \mathbb{E^Q}\left[e_m\mathbbm{1}_{B_m}\Big|\mathcal{F}_0\right]+\mathbb{E^Q}\left[e_{m+1}\mathbbm{1}_{A_{m+1}}\Big|\mathcal{F}_0\right]&<\varepsilon+\mathbb{E^Q}\left[e_{m+1}\mathbbm{1}_{B_m^2}\Big|\mathcal{F}_0\right]+\mathbb{E^Q}\left[e_{m+1}\mathbbm{1}_{A_{m+1}}\Big|\mathcal{F}_0\right]\\
    &=\varepsilon+\mathbb{E^Q}\left[e_{m+1}\mathbbm{1}_{B_{m+1}}\Big|\mathcal{F}_0\right]
\end{aligned} \end{equation*}
Together with the fact that $A_0\equiv B_0$, we conclude by induction on $m$ that
\begin{equation*} \begin{aligned}
    E_2&\leq\mathbb{E^Q}\left[e_0\mathbbm{1}_{B_0}\Big|\mathcal{F}_0\right]+\sum_{m=1}^{M-1}\mathbb{E^Q}\left[e_m\mathbbm{1}_{A_m}\Big|\mathcal{F}_0\right]\\
    &<\varepsilon+\mathbb{E^Q}\left[e_1\mathbbm{1}_{B_1}\Big|\mathcal{F}_0\right]+\sum_{m=2}^{M-1}\mathbb{E^Q}\left[e_m\mathbbm{1}_{A_m}\Big|\mathcal{F}_0\right]\\
    &\qquad\qquad\qquad\qquad\quad\vdots\\
    &<(M-1)\varepsilon+\mathbb{E^Q}\left[e_{M-1}\mathbbm{1}_{B_{M-1}}\Big|\mathcal{F}_0\right]=(M-1)\varepsilon
\end{aligned} \end{equation*}
 \textbf{Bounding $\mathbf{E_3}$:} We finalise the proof by considering the third event $\{\tau > \Tilde\tau\}$. In a similar fashion as before, we introduce two types of sub-events: $A_m:=\left\{\tilde\tau=T_m\land \tau>T_m\right\}$ and $B_m:=\left\{\tilde\tau\leq T_m\land \tau>T_m\right\}$. Also again define a difference-process, this time given by $e_m:=\frac{h_{\tau}(\mathbf{x}_{\tau})}{B(\tau)}-\frac{\tilde V(T_m)}{B(T_m)}$. It should be clear that $\mathbbm{1}_{\{\tau>\tilde\tau\}}=\sum_{m=0}^{M-1}\mathbbm{1}_{A_m}$. Therefore it holds that
\begin{equation*} \begin{aligned}
    E_3=\sum_{m=0}^{M-1}\mathbb{E^Q}\left[\left(\frac{h_{\tau}(\mathbf{x}_\tau)}{B(\tau)}-\frac{h_{\tilde\tau}(\mathbf{x}_{\tilde\tau})}{B(\tilde\tau)}\right)\mathbbm{1}_{A_m}\bigg|\mathcal{F}_0\right]=\sum_{m=0}^{M-1}\mathbb{E^Q}\left[e_m\mathbbm{1}_{A_m}\Big|\mathcal{F}_0\right]
\end{aligned} \end{equation*}
where the second equality follows from the fact that the direct estimator has the property $\tilde V(\tilde\tau)=h_{\tilde\tau}$. Now we will show by induction that $E_3<(M-1)\varepsilon$. Note that for any $m\in\{0,\ldots,M-1\}$ we have that
\begin{equation} \begin{aligned}
    \mathbb{E^Q}\left[e_m\mathbbm{1}_{B_m}\Big|\mathcal{F}_0\right]&\leq\mathbb{E^Q}\left[\left(\frac{h_{\tau}(\mathbf{x}_{\tau})}{B(\tau)}-\mathbb{E^Q}\left[\frac{G_{m+1}\left(z_{m+1}\right)}{B\left(T_{m+1}\right)}\bigg|\mathcal{F}_{T_m}\right]\right)\mathbbm{1}_{B_m}\bigg|\mathcal{F}_0\right]\\
    &=\mathbb{E^Q}\left[\left(\frac{h_{\tau}(\mathbf{x}_{\tau})}{B(\tau)}-\frac{G_{m+1}\left(z_{m+1}\right)}{B\left(T_{m+1}\right)}\right)\mathbbm{1}_{B_m}\bigg|\mathcal{F}_0\right]\\
    &\leq\mathbb{E^Q}\left[\left|\frac{\tilde V(T_{m+1})}{B(T_{m+1})}-\frac{G_{m+1}\left(z_{m+1}\right)}{B\left(T_{m+1}\right)}\right|\mathbbm{1}_{B_m}\Bigg|\mathcal{F}_0\right]+\mathbb{E^Q}\left[e_{m+1}\mathbbm{1}_{B_m}\Big|\mathcal{F}_0\right]\label{eqn: prf LB 3}
\end{aligned} \end{equation}
The first inequality follows from the fact that $\tilde V(T_m)=\max\{\tilde C_m,h_m\}\geq \tilde C_m$. The subsequent equality follows from the tower rule in combination with the fact that $\mathbbm{1}_{B_m}$ is $\mathcal{F}_{T_m}-$measurable. The final inequality follows from an application of the triangle inequality. The first term in \ref{eqn: prf LB 3} is by assumption bounded by $\varepsilon$. The second term in \ref{eqn: prf LB 3} can be rewritten by observing that $\mathbbm{1}_{B_m}:=\mathbbm{1}_{B_m^1}+\mathbbm{1}_{B_m^2}:=\mathbbm{1}_{\left\{\tilde\tau\leq T_m\land \tau=T_{m+1}\right\}}+\mathbbm{1}_{\left\{\tilde\tau\leq T_{m}\land \tau>T_{m+1}\right\}}$. We have that
\begin{equation*} \begin{aligned}
    \mathbb{E^Q}\left[e_{m+1}\mathbbm{1}_{B_m^1}\Big|\mathcal{F}_0\right]=\mathbb{E^Q}\left[\left(\frac{h_{m+1}(\mathbf{x}_{T_{m+1}})}{B(T_{m+1})}-\frac{\Tilde V(T_{m+1})}{B(T_{m+1})}\right)\mathbbm{1}_{B_m^1}\bigg|\mathcal{F}_0\right]\leq0
\end{aligned} \end{equation*}
where the inequality follows from the fact that $\tilde V(T_{m+1})=\max\{\tilde C_{m+1},h_{m+1}\}\geq h_{m+1}$.
Furthermore we have that $\mathbbm{1}_{B_m^2}+\mathbbm{1}_{A_{m+1}}=\mathbbm{1}_{B_{m+1}}$. Therefore we can once again infer that
\begin{equation*} \begin{aligned}
    \mathbb{E^Q}\left[e_m\mathbbm{1}_{B_m}\Big|\mathcal{F}_0\right]+\mathbb{E^Q}\left[e_{m+1}\mathbbm{1}_{A_{m+1}}\Big|\mathcal{F}_0\right]&<\varepsilon+\mathbb{E^Q}\left[e_{m+1}\mathbbm{1}_{B_m^2}\Big|\mathcal{F}_0\right]+\mathbb{E^Q}\left[e_{m+1}\mathbbm{1}_{A_{m+1}}\Big|\mathcal{F}_0\right]\\
    &=\varepsilon+\mathbb{E^Q}\left[e_{m+1}\mathbbm{1}_{B_{m+1}}\Big|\mathcal{F}_0\right]
\end{aligned} \end{equation*}
Together with the fact that $A_0\equiv B_0$, we again conclude by induction on $m$ that
\begin{equation*} \begin{aligned}
    E_3&\leq\mathbb{E^Q}\left[e_0\mathbbm{1}_{B_0}\Big|\mathcal{F}_0\right]+\sum_{m=1}^{M-1}\mathbb{E^Q}\left[e_m\mathbbm{1}_{A_m}\Big|\mathcal{F}_0\right]\\
    &<\varepsilon+\mathbb{E^Q}\left[e_1\mathbbm{1}_{B_1}\Big|\mathcal{F}_0\right]+\sum_{m=2}^{M-1}\mathbb{E^Q}\left[e_m\mathbbm{1}_{A_m}\Big|\mathcal{F}_0\right]\\
    &\qquad\qquad\qquad\qquad\quad\vdots\\
    &<(M-1)\varepsilon+\mathbb{E^Q}\left[e_{M-1}\mathbbm{1}_{B_{M-1}}\Big|\mathcal{F}_0\right]=(M-1)\varepsilon
\end{aligned} \end{equation*}
\textbf{Conclusion:} We hence find that
\begin{equation*} \begin{aligned}
    V(0)-L(0)=E_1+E_2+E_3<0+(M-1)\varepsilon+(M-1)\varepsilon=2(M-1)\varepsilon
\end{aligned} \end{equation*}
\end{proof}

\section{Proof of \Cref*{theorem: upper bound}\label{proof: upper bound}}
\begin{proof}
The discounted true price process is a supermartingale under $\mathbb{Q}$. Therefore we have that $\frac{V(t)}{B(t)}=Y_t+Z_t$ for a Martingale $Y_t$ and a predictable process $Z_t$, which starts at zero (i.e. $Z_0=0$) and is strictly decreasing. Define a difference process on $\mathcal{T}$, given by $e_{T_m}=\frac{V(T_m)-G_m(z_{m})}{B(T_m)}$. We can rewrite Martingale $M_t$ as defined in \ref{eqn:martingale2} in terms of $e_t$ as follows:
\begin{equation*} \begin{aligned}
    M_{T_m}&=\frac{G_0(z_{0})}{B(T_0)}+\sum_{j=1}^{m} \left(\frac{G_j(z_{j})}{B(T_j)}-\mathbb{E^Q}\left[\frac{G_j(z_{j})}{B(T_j)}\bigg|\mathcal{F}_{T_{j-1}}\right]\right)\\
    &=Y_{T_m}-e_{T_0}-\sum_{j=1}^{m}\left(e_{T_j}-\mathbb{E^Q}\left[e_{T_j}\big|\mathcal{F}_{T_{j-1}}\right]\right)
\end{aligned} \end{equation*}
Substituting the expression for $M_t$ into the definition of $U(0)$ yields
\begin{equation*} \begin{aligned}
    U(0)&= M_0+\mathbb{E^Q}\left[\max_{T_m\in\mathcal{T}_f}\left\{\frac{h_m(\mathbf{x}_{T_m})}{B(T_m)}-M_{T_m}\right\}\bigg|\mathcal{F}_0\right]\\
    &=\mathbb{E^Q}\left[\frac{G_0(z_0)}{B(T_0)}\bigg|\mathcal{F}_{0}\right]+\mathbb{E^Q}\left[\max_{m\in\{0,\ldots,M-1\}}\left\{\frac{h_m(\mathbf{x}_{T_m})}{B(T_m)}-Y_{T_m}
    +e_{T_0}\vphantom{\sum_{j=1}^{m}\Bigg|}\right.\right.\\
    &\qquad\qquad\qquad\qquad\qquad\qquad\qquad+\left.\left.\vphantom{\sum_{j=1}^{m}\Bigg|}\sum_{j=1}^{m}\left(e_{T_j}-\mathbb{E^Q}\left[e_{T_j}\big|\mathcal{F}_{T_{j-1}}\right]\right)\right\}\Bigg|\mathcal{F}_0\right]\\
    &\leq \mathbb{E^Q}\left[\frac{V(T_0)}{B(T_0)}\bigg|\mathcal{F}_{0}\right]+\mathbb{E^Q}\left[\max_{m\in\{0,\ldots,M-1\}}\left\{\sum_{j=1}^{m}\left(e_{T_j}-\mathbb{E^Q}\left[e_{T_j}\big|\mathcal{F}_{T_{j-1}}\right]\right)\right\}\Bigg|\mathcal{F}_0\right]
\end{aligned} \end{equation*}
The last step follows by merging $\mathbb{E^Q}\left[e_{T_0}\big|\mathcal{F}_{0}\right]$ with $M_0$ and by noting that $\frac{h_m(\mathbf{x}_{T_m})}{B(T_m)}-Y_{T_m}\leq\frac{V(T_m)}{B(T_m)}-Y_{T_m}=Z_{T_m}\leq0$. The remaining inequality is not easy to bound \cite{andersen2004primal}. However, by taking the absolute values of the difference process, we can obtain a loose bound as follows 
\begin{equation*} \begin{aligned}
    U(0)&\leq V(0)+\mathbb{E^Q}\left[\max_{m\in\{0,\ldots,M-1\}}\left\{\sum_{j=1}^{m}\left|e_{T_j}\right|+\sum_{j=1}^m\left|\mathbb{E^Q}\left[e_{T_j}\big|\mathcal{F}_{T_{j-1}}\right]\right|\right\}\Bigg|\mathcal{F}_0\right]\\
    &\leq V(0)+\mathbb{E^Q}\left[\sum_{j=1}^{M-1}\left|e_{T_j}\right|+\sum_{j=1}^{M-1}\left|\mathbb{E^Q}\left[e_{T_j}\big|\mathcal{F}_{T_{j-1}}\right]\right|\Bigg|\mathcal{F}_0\right]\\
    &\leq V(0)+2\sum_{j=1}^{M-1}\mathbb{E^Q}\left[\left|e_{T_j}\right|\Big|\mathcal{F}_{0}\right]
\end{aligned} \end{equation*}
Note that as a consequence of \cref{theorem: MAE error} have that $\mathbb{E^Q}\left[\left|e_{T_m}\right|\Big|\mathcal{F}_{0}\right]<(M-m)\varepsilon$. It follows
\begin{equation*} \begin{aligned}
    \left|U(0)-V(0)\right|<2\sum_{m=1}^{M-1}(M-m)\varepsilon=M(M-1)\varepsilon
\end{aligned} \end{equation*}
This concludes the proof.
\end{proof}

\end{document}


\maketitle

\section{A detailed example}

Here we include some equations and theorem-like environments to show
how these are labeled in a supplement and can be referenced from the
main text.
Consider the following equation:
\begin{equation}
  \label{eq:suppa}
  a^2 + b^2 = c^2.
\end{equation}
You can also reference equations such as \cref{eq:matrices,eq:bb} 
from the main article in this supplement.

\lipsum[100-101]

\begin{theorem}
  An example theorem.
\end{theorem}

\lipsum[102]
 
\begin{lemma}
  An example lemma.
\end{lemma}

\lipsum[103-105]

Here is an example citation: \cite{KoMa14}.

\section[Proof of Thm]{Proof of \cref{thm:bigthm}}
\label{sec:proof}

\lipsum[106-112]

\section{Additional experimental results}
\Cref{tab:foo} shows additional
supporting evidence. 

\begin{table}[htbp]
{\footnotesize
  \caption{Example table}  \label{tab:foo}
\begin{center}
  \begin{tabular}{|c|c|c|} \hline
   Species & \bf Mean & \bf Std.~Dev. \\ \hline
    1 & 3.4 & 1.2 \\
    2 & 5.4 & 0.6 \\ \hline
  \end{tabular}
\end{center}
}
\end{table}

\usepackage[numbers]{natbib}
\bibliographystyle{siamplain}
\bibliography{references}